\newtheorem{theorem}{Theorem}
\newtheorem{lemma}{Lemma}
\newcommand{\E}{\mathop{\mathbb{E}}}
\newcommand{\OPT}{\textsc{Opt}}
\DeclareMathOperator*{\argmin}{\arg\!\min}
\newcolumntype{C}[1]{>{\centering\arraybackslash\hspace{0pt}}p{#1}}
\newcolumntype{L}[1]{>{\raggedright\arraybackslash\hspace{0pt}}p{#1}}
\title{Balanced Districting on Grid Graphs with\\ Provable Compactness and Contiguity \vspace{15 pt}}
\author[1]{Cyrus Hettle}
\author[1]{Shixiang Zhu}
\author[1]{Swati Gupta}
\author[1]{Yao Xie}
\affil[1]{\small Georgia Institute of Technology \protect \\
{\small \tt\{chettle,\ shixiang.zhu,\ swatig\}@gatech.edu, yao.xie@isye.gatech.edu}}
\date{}
\begin{document}
\maketitle
\begin{abstract}
Given a graph $G = (V,E)$ with vertex weights $w(v)$ and a desired number of parts $k$, the goal in graph partitioning problems is to partition the vertex set V into parts $V_1,\ldots,V_k$. Metrics for compactness, contiguity, and balance of the parts $V_i$ are frequent objectives, with much existing literature focusing on compactness and balance. Revisiting an old method known as striping, we give the first polynomial-time algorithms with guaranteed contiguity and provable bicriteria approximations for compactness and balance for planar grid graphs. We consider several types of graph partitioning, including when vertex weights vary smoothly or are stochastic, reflecting concerns in various real-world instances. We show significant improvements in experiments for balancing workloads for the fire department and reducing over-policing using 911 call data from South Fulton, GA.
\end{abstract}

\section{Introduction}\label{sec:intro}

Given a weighted graph $G = (V,E)$ with vertex (or edge) weights $w: V \rightarrow \mathbb{R}$ and the desired number of parts $k$, partitioning the vertex set $V$ into $k$ parts $V_1, \hdots, V_k$ is one of the oldest and fundamental problems in operations research and statistics, with the first work dating back to the seminal work of Weaver and Hess, to the best of our knowledge~\cite{WeHe1963}. The solutions to this problem have created a huge impact in various applications ranging from drawing voting districts~\cite{validi2020imposing} and school districts~\cite{ca04schooldistricting} to police zones \cite{gass1968division}, clustering \cite{Mo1973}, and even divisions of large network problems for parallel computing \cite{gh02parallelcomputing}.

In operations research, the applicability of graph partitioning solutions in practice often depends on solution quality in terms of three criteria: 
\begin{itemize}
\item[(i)] {\it Contiguity:} In many real-world applications, being able to travel within each part efficiently is crucial. If each part is a beat patrolled by a single police officer or under the jurisdiction of a fire station, then an officer should not have to enter other beats to move between disparate parts of their own beat. Moreover, in applications such as drawing voting districts, contiguity may be legally necessary. Therefore, we require that each part of the partition should induce a contiguous region, or in other words the subgraph\footnote{By $G[V_i]$, we mean the graph induced by vertex set $V_i$, comprising only vertices in $V_i$ and the subset of edges in $E$ with both endpoints in $V_i$.} $G[V_i]$ should be connected for each $i \in [k]$.

\item[(ii)] {\it Balance:} If each part is to be assigned the same level of service, which could be a discrete unit such as a single school or a single political representative, overpopulated or overworked parts can become districts where populations are under-served relative to those in lower-weight parts. When balancing workloads among districts of a region, this becomes a fairness concern. In parallel computing applications, high-weight parts have longer evaluation times, slowing down the entire computation. Therefore, we require that the parts be balanced in the vertex weights, i.e., the total vertex weight in each part $V_i$ should be ``close" to the average weight of any of the $k$ parts. To ensure balance, the variance of total weights over the parts $V_i, \hdots, V_k$ can be minimized or constrained to be within $\varepsilon$ of the average desired weight. 

\item[(iii)] {\it Compactness:} As with contiguity, compactness can be motivated in terms of the travel time of each region. A highly oblong or non-convex zone can be time-consuming to traverse, and therefore increase workload in an application such as police districting. In parallel computing, ``compact" parts have few connections to their neighbors. Hence, if problems are solved on each part independently any errors that arise from solving a part without considering its surroundings will be minimized. Therefore, we desire that each part should be compact with respect to the underlying graph. This is often achieved by minimizing the size of the multi-cut (i.e., the number of edges with end-points in different parts of the partition) or minimizing the perimeter of the region (e.g., for planar graphs).
\end{itemize}

Many different formulations for partitioning problems exist, dependent on the modeling choice for each of these objectives. Most of the existing literature, however, focuses on ensuring compactness and balance~\cite{Bulu2016RecentAI,validi2020imposing}, but neglects contiguity of parts. Guaranteeing contiguity while ensuring tractability of districting over grids is the key contribution of our work. 

\subsection{Problem Formulation}\label{subsec:formulation} For applications such as police districting or fire workloads planning, one can superimpose a rectangular grid graph on a geographical region with vertex weights $w(v)$ that can be used to approximate the crime rate or workload in the corresponding square of the grid. For instance, Figure~\ref{fig:grid_dual} shows a ``map" divided into parts by the superimposed black grid. The dual graph of the map, minus the vertex representing the region outside the map, is shown in blue, resulting in the blue grid graph $G$.

\begin{figure}
    \centering
    \begin{subfigure}[h]{0.65\linewidth}
    \includegraphics[width=\linewidth]{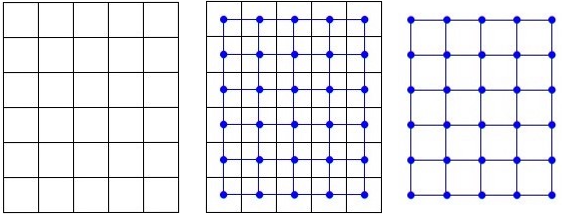}
    \end{subfigure}
    %\hspace{.3in}
    \begin{subfigure}[h]{0.2\linewidth}
    \includegraphics[width=\linewidth]{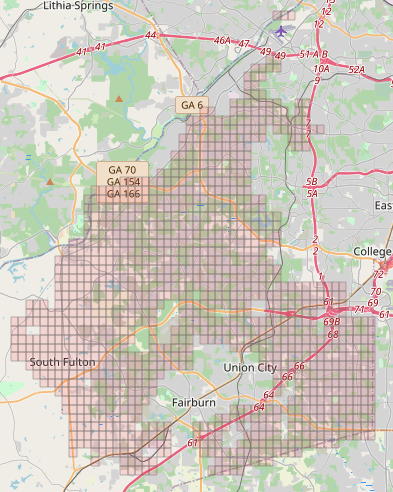}
    \end{subfigure}
    \caption{\scriptsize Left, a map with parts determined by a rectangular grid and its planar dual; Right, a square grid on the map of the city of South Fulton, GA.}
    \label{fig:grid_dual}
\end{figure}

As mentioned above, optimizing for balance of node weights in any part, maximizing compactness and ensuring contiguity of each part results in a multi-objective optimization problem. We convert this into a single-objective model, treating compactness as the objective and requiring contiguity and approximate balance of each part of the partition.  Multiple candidates for the ``correct" notion of compactness exist, and the most appropriate one may depend on the application. In our model, we consider the compactness of a partition to be the sum of the costs of the cut edges between each part. This definition allows us to prevent highly non-convex and oblong parts, while also being very well-suited to the graph-based model we consider.

%In addition, due to the needs of our applied project with the South Fulton City (Section \ref{sec:case_study}),
%We also report the performance of our methods with respect a second notion of compactness: ratio between the perimeter of the region to the area \cite{Mehrotra1972}. For instance, this notion captures the response time for incidents by approximating the diameter of the region for fire emergencies.

Formally, the {\it Balanced Graph Partitioning} ({\sc Partition}) problem we consider is as follows: Given a planar graph $G=(V,E)$ with vertex weights $w: V \rightarrow \mathbb{R}$ and edge costs $c(\cdot): E \rightarrow \mathbb{R}$, parameters $k>0$, $\varepsilon\geq 0$, partition the vertex set $V$ into $k$ parts $V_1, \dots, V_k \subseteq V$ such that the each subgraph $G[V_i]$ induced by part $V_i$ is connected, total weight in each part is within $1\pm \varepsilon$ fraction of the average weight $A= (1/k) \sum_{v\in V}w(v)$ and the cost of the multi-cut induced by the partition is minimized: 
\begin{align}
\text{({\sc Partition}): ~~} \text{minimize}  ~~&\sum_{i=1}^k\delta(V_i) \quad \quad \quad\quad \quad \quad\quad\quad\quad \text{({\bf compactness})}\label{eq:compactness}\\
\text{subject to:} ~~& G[V_i] \text{ is connected, for each } 1\leq i\leq k, ~~ \nonumber \\&\quad \quad \quad \quad \quad\quad \quad\quad \quad\quad \quad \quad \quad \text{({\bf connectivity})}\label{eq:connectivity}\\
 & 1-\varepsilon \leq \frac{\sum_{v\in V_i}w(v)}{A} \leq 1+\varepsilon, \text{ for each } 1\leq i\leq k, ~~\nonumber\\ &\quad \quad\quad \quad\quad \quad\quad \quad\quad \quad\quad \quad\quad \quad\quad \text{({\bf balance})}\label{eq:balance}
\end{align}
where $\delta(V_i)$ denotes the number of cut edges, i.e.\ edges with one vertex in $V_i$ and the other vertex in $V_j, j\neq i$. We call this balance condition as {\it $(1\pm \varepsilon)$-balanced} and refer to the one-sided condition $\sum_{v\in V_i}w(v)/A \leq 1+\varepsilon$ as {\it $(1+\varepsilon)$-balanced.}

In many applications, vertex weights $w(\cdot)$ are often estimates based on historical data. These weights might represent the voting or school-age population, or a more complex measure of workload for police districting or political affiliation. In practice, however, these estimates may be inaccurate, and partitioning based on these weights may result in unintended consequences. If a districting is already in use, this districting itself may create errors in the estimates, particularly if the parts of the districting are unbalanced. For instance,~Lum and Isaac demonstrated that implementing predictive policing methods using machine learning models can create a feedback loop where the number of reported crimes increases in areas where historical data indicated a higher crime rate, perpetuating overpolicing in these areas \cite{lum2016predict}. In a districting model, if equal police resources have been allocated to parts with different crime rates, then their study suggests that reported crime rates may be disproportionately higher in the parts with a lower true crime rate. 

To model the above mentioned issues in historical data, we also consider stochastic instances where vertex weights are estimates based on historical data as follows.  For each $v\in V$, we have either a known vertex-dependent distribution $X_v$, or a set of independent random observations $w_{v, 1}, \dots, w_{v,t_v}$ drawn from an unknown vertex-dependent distribution $X_v$, where the distributions $X_v$ are not correlated. We refer to this as the \textit{stochastic weights model.} It may be used for applications where historical data may be biased or not perfectly representative, as well as those where weights change in an unknown way over time. We show guarantees on the \textit{expected} balance of partitions in the stochastic model. A particularly useful condition on the stochastic weights is a bound $\sigma^2 \leq c\mu(v)$ on the variance $\sigma^2$ and mean $\mu(v)$ of each random variable $X_v$, for some constant $c$. This assumption is easily satisfied by many random variables, such as Poisson distributions (possibly non-homogeneous).% that appear in applications. 

\subsection{Contributions} Our key contribution is to obtain provable guarantees and computational tractability for planar graph instances while ensuring connectivity of regions. We give two approaches to produce districting plans in both the exact and stochastic weight models: combinatorial algorithms for the partition problem, revisiting an old striping method from 1996 \cite{christou1996optimal}, and a simulated annealing heuristic method. Our focus is on producing a good solution efficiently in the case where vertex weights $w(v)$ are not uniform, reflecting many real-world instances. Our contributions include the following:

\begin{itemize}
    \item[(i)]  We give a modified striping method for unweighted rectangular grid graphs which guarantees contiguity. It improves the previous known multiplicative approximation factor for this instance by an asymptotic factor of 6 when considering perimeter as compactness (and guaranteeing contiguity), and we extend the multiplicative approximation to a much larger family of instances by removing conditions on the grid size and number of parts.
    
    \item[(ii)] We propose a new dynamic programming adaptation of the striping method, for partitioning general weighted planar grid graphs. This results in a fast, easily described algorithm suited for a range of applications. The guarantees for this method depend on the striping path chosen.  
    
    \item[(iii)] We further combine ideas from stochastic load balancing with combinatorial striping methods to provide an improvement in expected balance in the stochastic weights setting. This helps us reduce over-policing, for instance, with the real-world 911 calls data from South Fulton City.
    
    \item[(iv)] Finally, we compare our methods on several real-world and synthetic instances, including redistricting problems for the fire department and on police beats in South Fulton City, Georgia. To get tractable solutions in practice, we combine the striping approaches with simulated annealing to balance weights while maintaining contiguity and compactness. Our case studies exemplify the importance of scalable approaches to larger geographic regions while satisfying balance conditions, following the work of \cite{zhu2020data} where one of our initial plans based on simulated annealing on the same data was implemented in South Fulton City. 
\end{itemize}

The rest of the paper is organized as follows: in Section \ref{sec:related} we discuss related work, in Section \ref{sec:combinatorial} we give novel combinatorial striping algorithms for unweighted graphs, dynamic programming approach for general weights, and develop weight proxies to obtain a balanced partition under stochastic vertex weights. We present computational experiments on synthetic and real-world case studies from South Fulton County using 911 calls data, in Section~\ref{sec:computations}.

\section{Related Work}\label{sec:related} 

There has been a significant amount of work on balanced graph partitioning, often {\it without} the additional constraint on connectivity for each part. Police zone districting is a classical application of graph partitioning methods in operations research \cite{larson1972urban}. The districting problem has also been studied in other settings such as school districting and political districting \cite{GaNe1970}.
These problems are related to extensive literature on graph partitioning, which is known to be NP-complete, even for special cases such as unweighted grid graphs~\cite{FELDMANN201361}. 

If we relax any one of the three conditions out of balance, compactness and contiguity, we get a substantially different problem. Relaxing contiguity, we obtain the $k$-balanced minimum partition problem, a generalization of the NP-hard minimum bisection problem, for which both approximation algorithms and more practical heuristics are well-studied~\cite{andreev2006balanced,krauthgamer2009partitioning,diekmann2000shape}. Relaxing balance gives an upper bound but no lower bound on the size of each part and the number of parts is not fixed, and therefore, we obtain the $\rho$-separator problem\footnote{Given a graph $G=(V,E)$ with vertex weights $w(v)$ for all $v$ and a parameter $0< \rho<1$, the $\rho$-separator problem is to find a minimum cut $C\subseteq E$ such that each connected component of $G\setminus C$ has total weight at most $\rho$ fraction of the total weight of $G$.}, for which an approximation algorithm was given by Even et al. \cite{even1999fast}. Relaxing both balance and contiguity, we obtain the minimum $k$-cut problem, which is still NP-hard but for which good approximations exist~\cite{li2019faster}. Finally, if we do not require compactness, we obtain problems related to gerrymandering, where balance and contiguity may be mandated but objectives related to the distribution of votes, and particularly to the number of parts with a majority of votes from each political party, take precedence over compactness~\cite{duchin2018gerrymandering}. Feasibility problems in this setting are in general NP-hard, although they are tractable on some instances such as Hamiltonian graphs~\cite{APOLLONIO20093601}.

%\textcolor{red}{Yao: Consider shortening the text in the table}
\begin{table}[t]
%\begin{center}
%\linespread{0.8}

    \centering
    \resizebox{\textwidth}{!}{%
    % \begin{tabular}{|L{3.3cm} | L{2.5cm} | c | C{2cm} | C{.9cm}|}
    \begin{tabular}{|c | c | c | c | c|}
    \hline
    %& & \multicolumn{3}{c|}{{\bf Guarantees}}\\\hline
     {\bf Related Work} & {\bf Setting} & {\bf Balance} & {\bf Compactness} & {\bf Cont.}\\\hline\hline
      %\multicolumn{5}{l}{}\\\hline
     % \multicolumn{5}{|l|}{\bf Approximation Algorithms}\\\hline
   Rec. decomposition \& recombination \cite{andreev2006balanced} (polynomial time) & Polynomially weighted G& Param. & $O(\log^2 n)$ & No \\\hline
     Rec. bisection \cite{feldmannthesis} (polynomial time) & Grid graphs & 2-approx & $O(\log n)$& No \\\hline
         Rec. bisection \cite{ST} (polynomial time)& Well-shaped mesh*  & Exact only & $O(1)$& No \\\hline
    SDP rounding \cite{krauthgamer2009partitioning} (polynomial time) & General weighted  & 2-approx & $O(\sqrt{\log n\log k})$ & No \\\hline
    Striping \cite{christou1996optimal} (polynomial time) & Rectangular grid & Exact only & ($1+9/\lceil2 \sqrt{A}\rceil)$& No \\\hline\hline
    %\multicolumn{5}{|l|}{\bf Heuristics for Districting}\\\hline
    IP w/ flow \cite{Shirabe2009} & General graph & Param. & Exact & Yes\\\hline
    IP w/ separators \cite{oehrlein2017cutting} & General graph & Param. & Exact   & Yes \\\hline
    Modified $k$-means \cite{Mo1973} & Vertices in a plane & Exact & none & Yes  \\\hline
    Sim. annealing \cite{Damico2002} & General graph & Param. & none & Yes  \\\hline   
    $k$-means++ \cite{wei2016constant} & Vertices in a plane & none & $O(1)^\ddag$ & Yes$^\ddag$\\\hline \hline 
   {\bf Modified striping [this work] (polynomial time)} & Rectangular grid & {\bf Param.} & {\bf Theorem~\ref{thm:exactunweightedstriping}}& {\bf Yes}  \\\hline
    \end{tabular}
    }
    \caption{\scriptsize Summary of related work: compactness is $\sum_{i=1}^{k}\delta(V_i)$; $n = |V(G)|$; $k$ is the required number of parts; $A$ is the average weight of any part. Parameterized balance (abbreviated param.) is when the total weight of each part is between $(1-\varepsilon)A$ and $(1+\varepsilon)A$ for arbitrary $\varepsilon$. All approximation factors are multiplicative. *A well-shaped mesh is a planar graph embedding that has bounded aspect ratio (the ratio between its diameter and volume) and has angles that are not too small~\cite{ST}  $^\ddag O(1)$ for sum of squared distance metric which minimizes distance to chosen cluster centers; contiguity is respected when the embedding is convex.} 
\label{table:guarantees}
\end{table}

Feige and Krauthgamer made one of the first breakthroughs on the theoretical problem with no contiguity requirement, with a polylogarithmic approximation factor for the bisection (not $k$-partition) problem, with the objective being to minimize the number of cut edges~\cite{FK}. This improved over the previous best bound of $\tilde{O}(\sqrt{n})$. Moreover, their algorithm achieved an $O(\log n)$ approximation factor for planar graphs. Andreev and R\"{a}cke obtained a bicriteria PTAS, with an $O(\log^{1.5}n/\varepsilon^2)$ approximation factor for the number of cut edges of the partition \cite{andreev2006balanced}. Using SDP rounding techniques, Krauthgamer et al.\ improved this approximation factor to $O(\sqrt{\log n \log k})$~\cite{krauthgamer2009partitioning}. While these algorithms may yield contiguous regions for some specific instances, their methods give no guarantees on contiguity. Furthermore, they often employ repeated calls to SDP solvers, which can be time consuming. Partitioning problems are also closely related to the well-known algorithms for generating separators, an area which has been studied for decades beginning with the work of Lipton and Tarjan~\cite{LT}; however, these also do not give guarantees on contiguity \cite{survey}. 

A frequently used heuristic for partitioning or clustering algorithms is Lloyd's algorithm, or the $k$-means approach. This iterative approach is efficient and produces contiguous regions. Furthermore, modifications such as the $k$-means++ algorithm have provable approximation guarantees~\cite{arthur2007,wei2016constant}. However, these algorithms may not guarantee balance and they employ the compactness measure
%\begin{align}\label{eqn:k-meanscompactness}
$\sum_{v\in V} \min_{c\in C}||v-c||^2$,
%\end{align}
where $C$ is a set of centers, corresponding to the parts of the partition. This choice of compactness metric differs from the one we choose and induces different behavior. For instance, if the density of points varies greatly across the map, an algorithm using the $k$-means objective will tend to produce more uniformly-sized regions than the perimeter objective used in our model. Alternatively, a modified iterated $k$-means approach can guarantee balance and contiguity, but lacks the compactness guarantees of $k$-means++ methods~\cite{Mo1973, Mo1976}.

There is also a large body of work on integer programming methods for various types of geographical districting problems, including the pioneering work of Garfinkel and Nemhauser~\cite{GaNe1970}. IP methods often handle balance and compactness well, but contiguity constraints can be a computational bottleneck. A variety of formulations to impose contiguity in IP models for districting have been proposed, including tree search with enclave elimination~\cite{Gr1985}, constraints that simultaneously impose contiguity and restrict compactness~\cite{Me1972}, the hub-flow constraints of~\cite{shirabe2009districting}, and the graph separator cutting plane methods of~\cite{oehrlein2017cutting}. Validi et al.\ give a stronger IP formulation based on a modification of the graph separator method exploiting the balance constraints, with a focus on tractability~\cite{validi2020imposing}.

Although in this work, we focus on 911 calls data (i.e., police department, fire department workloads), our methods are general. We include a discussion of related work specific to police districting and fire departments balancing in the appendix. Other methods for redistricting apply meta-heuristics, e.g., local search~\cite{WeHe1963}, genetic algorithms, simulated annealing~\cite{Damico2002} to geographical districting, which usually lack optimality or approximation guarantees but can guarantee contiguity. In particular, methods using simulated annealing, including
\cite{kirkpatrick1983optimization, vcerny1985thermodynamical, aarts1987simulated, van1987simulated, johnson1989optimization, d2002simulated, kim2011optimization} have been widely adopted recently because of its computational efficiency and ability to overcome trapping in local optima. %\ch{Yao, Woody---do all of these address districting or are they more general? Should any be highlighted?}

\section{Combinatorial Striping Algorithms} \label{sec:combinatorial}

We now consider combinatorial algorithms for the districting problem. First, in this section we focus on the special case where $G$ is a grid graph, with $m$ rows and $n$ columns, and where edge costs and vertex weights are uniform. We modify the striping technique of Christou and Meyer in~\cite{christou1996optimal} to ensure contiguity and show improved approximation bounds by a factor of at least 6.

In many applications, one can represent a continuous distribution of weights by a grid of an appropriately chosen mesh size (e.g., Section~\ref{sec:case_study}). In this setting, the feasible shapes for a single district are limited to the polyominoes, shapes comprising a number of unit squares, attached to each other along their edges of a certain size (e.g.\ Figure~\ref{fig:striping15}), whose perimeter is easy to describe. 

\subsection{Striping for Uniform Weights}\label{subsec:uniformstriping}
To find balanced partitions of grid graphs, we construct specific combinatorial {\it striping} variations. We divide the vertical axis of the grid into stripes, possibly of varying heights. Within each stripe, we then create parts $B_i$ one-by-one by moving left to right along the columns and then down each column, accruing vertices as we do so, until the desired number $A$ vertices have been accrued. We collect those $A$ vertices into a part and continue in the same way. We call this ``top-to-bottom" striping. (See Figure~\ref{fig:striping15} for an example.) If the figure is rotated 90$^\circ$ clockwise, we get a striping pattern that we refer to as ``right-to-left" striping.

\begin{figure}[t]
\centering
  \includegraphics[width=.7\textwidth]{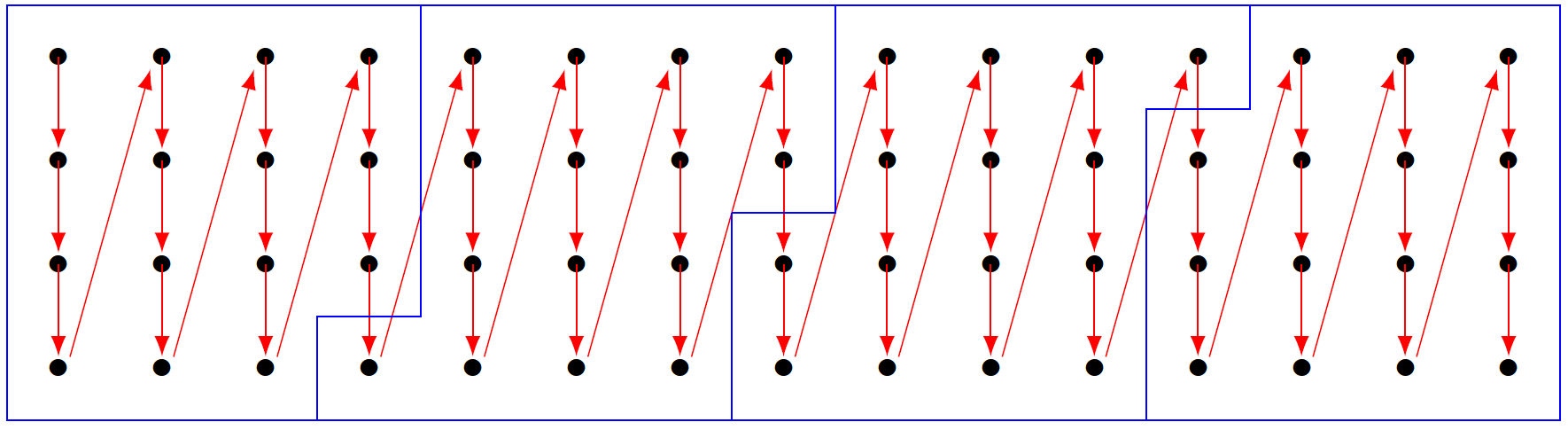}
 \caption{Example of striping, dividing a $4\times 15$ stripe into parts of size 15.}
\label{fig:striping15}
\end{figure}

Observe that the minimum-cut shapes in the square grid with a given number of vertices $A$ are square-like. Indeed, in general, they are given by taking a square or nearly-square rectangle, and if necessary, deleting vertices from the corners to obtain the correct size. If they contain no vertices on the border of $G$, their cut size is $2\lceil 2\sqrt{A}\rceil$~\cite{christou1996optimal}. For instance, there are two optimal shapes with 15 vertices: either a $4\times 4$ square with one corner vertex removed, or an intact $3\times 5$ rectangle. Both have 16 cut edges. If we choose a stripe height close to $\sqrt{A}$, the striping method does a near-optimal job of producing parts with shape close to these ideals. In Figure~\ref{fig:striping15}, with $A=15$ and a chosen a stripe height of 4, the first and fourth parts achieve this minimum cut, while the second and third have cut 18.

In various extensions of Christou and Meyer's original work, the striping technique has been studied for grid graphs with uniform weights and edge costs in several papers. For instance, in~\cite{christou1996optimal} Christou and Meyer give an algorithm for general $m,n,$ and $k$ (with $k\geq m,n$ and $A= mn/k\in\mathbb{N}$) with relative approximation error ${9}/{\lceil 2 A\rceil}$ (i.e., the total perimeter, including border edges of the rectangle, achieved by their solution is at most $1+{9}/{\lceil 2 A\rceil}$ times that of the optimum), but this algorithm does not guarantee contiguity. They employ stripes of height $\lfloor{\sqrt{A}}\rfloor$ and  $\lfloor{\sqrt{A}}\rfloor+1$, producing nearly square regions. These methods can be improved and expanded, for instance in the dynamic striping heights for irregular regions employed by Donaldson and Meyer in~\cite{Meyer2000ADH}, but without provable approximation guarantees compared to the global optimum.

%\sg{what do they do in this paper? why aren't we extending their result?}\ch{Their method uses a dynamic search over sequences of strip heights (so not always a number strips of height $a$ and then a number of strips of height $a+1$ like Christou and Meyer.) However, they don't give any approximation guarantees to the optimum, and don't consider the weighted case. The path DP for the weighted case (which I will add in algorithm form) adds some additional modifications which are necessary since the optimum strip height may change drastically from one part of the graph to another.}

\noindent
Finding minimum-cut balanced partitions on solid grid graphs is a special case of the $k$-balanced minimum partition problem, but this is still NP-hard. In fact, Feldmann showed that unless $\mathrm{P}=\mathrm{NP}$, for any $c>1/2$ and any $c>0$ there is no polynomial-time $(1+n^c/\varepsilon^d)$-approximation algorithm for the problem of finding an $\varepsilon$-balanced partition with the objective of minimizing the number of cut edges, using a reduction from the {\sc 3-Partition} problem~\cite{FELDMANN201361}.\footnote{Given a multiset $S$ of $n = 3 m$ positive integers, the {\sc 3-Partition} problem is to determine whether there is a partition of $S$ into $m$ subsets each of cardinality 3, such that each subset has the same sum.}

\subsection{\texorpdfstring{$\varphi$}{ϕ}-Cautious Striping Algorithm} Our novel striping algorithm, {\sc $\varphi$-Cautious  Striping}, {\it guarantees contiguous} regions, and for rectangular grid graphs achieves a relative multiplicative approximation error bound that is nearly six times better than previously known guarantees for perimeter. The key idea of our modified algorithm is to employ the striping technique as shown in Figure~\ref{fig:striping15} on strips of height approximately the square root of the number of vertices in each part, partitioning most of each strip but leaving approximately one part's worth of vertices at the end. This contrasts with previous methods, where the width of pieces remaining in each stripe is smaller and less critical to the algorithm. The leftover parts of each strip constitute an approximately rectangular region, which is then partitioned using a similar striping method which travels from row to row, top to bottom. We use $A=\lfloor mn/k \rfloor$ to denote the minimum number of vertices in a balanced part. This may be slightly different from the definition above of $A$ as the average weight of a part. However, note that in this unweighted context, we deal only with partitions that are as balanced as possible (hence having $A$ or $A+1$ vertices) and so do not need to consider the average weight of a part. We describe the algorithm in detail below.
\begin{algorithm}[!t]\footnotesize
\caption{{\sc $\varphi$-Cautious Striping}}
 \begin{algorithmic}%[1]
\State \textbf{Input:} A rectangular grid graph $G=(V,E)$ with $m$ rows, $n$ columns ($m <n$);  desired number of parts $k$; $A=\lfloor mn/k \rfloor$: the minimum number of vertices in a balanced part.
\State \textbf{Output:} A partition of $V$ into $k$ connected parts, with each part having $A$ or $A+1$ vertices (i.e., balanced), and total cut size at most $\OPT$, where $\OPT$ is the minimum total cut over all such partitions.
\State Let $m=da+r$ for some $d\in \mathbb{N}, 0\leq r\leq a-1$, where $a=\lfloor \sqrt{A} \rfloor$ is the minimum side length of all rectangles with minimum cut size that contain at least $A$ vertices.
\State {\bf Step 1. Divide the $m$ rows into horizontal strips of height:}
\Indent
\State $a$ and $a+1$, if $r \leq d$; $a$, and one of height $r$, if $r > \lfloor  a/\phi\rfloor$; $a$, and one of height $a+r$ if $r \leq \lfloor a/\phi\rfloor$.
\State Let strips of height $a$ and $a+1$ be in set $S_1$, and the other strip (if exists) in $S_2$.
\EndIndent

\State {\bf Step 2. Divide $S_1$ vertically until the number of unpartitioned columns in each strip is in $[\lfloor a/\phi\rfloor ,\lceil \phi a \rceil ]$:}
\Indent 
\State Each part within a strip is obtained using top-to-bottom striping and has area $A$ or $A+1$, while ensuring that the total number of parts with area $A+1$ is exactly $mn \mod k$.
\EndIndent 

\State {\bf Step 3. Divide $R$, which is the set of unpartitioned vertices in $S_1$, horizontally:}
\Indent 
\State Each part in $R$ is obtained using right-to-left and left-to-right striping and has area $A$ or $A+1$, while ensuring that the number of parts with area $A+1$ is exactly $mn \mod k$. (The last part may have fewer than $A$ vertices after this step.)
\EndIndent 

\State {\bf Step 4. Divide $S_2$ vertically:}
\Indent 
\State (a) If one part in $R$ has fewer than $A$ vertices, complete it by adding vertices in $S_2$ using top-to-bottom striping until it has area $A$ or $A+1$.
\State (b) Other parts in $S_2$ are obtained using top-to-bottom or bottom-to-top striping and have area $A$ or $A+1$, so that exactly $mn \mod k$ parts have area $A+1$.
\EndIndent
\State{\Return{the resulting partition $\mathcal{P}$.}}
\end{algorithmic}\label{alg:unweighted_striping}
\end{algorithm}

Suppose the graph has $m$ rows and $n$ columns and the desired number of parts is $k \geq n\geq m$, so that each part has $A= \lfloor mn/k \rfloor$ or $A+1$ vertices\footnote{Note that in the context of this algorithm, we use $A$ to denote the (integral) number of vertices in the smallest part, not the (possibly fractional) average number of vertices in a part.}. First, divide the $m$ rows of the graph into strips of height $a$ and $a+1$, where $a = \lfloor \sqrt{A} \rfloor.$ If this is not possible (which it may be if $m< a(a-1)$) then divide the $m$ rows into strips of height $a$, forming a rectangle $S_1$, with a strip $S_2$ of height approximately $a$ left over at the bottom. Using the striping method described above, we partition each of the strips of height $a$ or $a+1$ into parts of size $A$, stopping when between $\lfloor (\phi-1)a \rfloor$ and $\lceil \phi a \rceil$ complete columns are left in the strip. After completing this process on each strip, we are left with an approximately rectangular region $R$ on the right-hand side of the grid and an overlapping rectangular region $S_2$ at the bottom of the grid. We then apply the striping process to $R$, traveling top-to-bottom down the rows. Since the choice of stopping point in the previous striping steps ensures that $R$ has width approximately $a$ throughout, the resulting parts are not too far from being square and can be shown to have bounded cut size. We then continue by striping $S_2$, traveling right-to-left along the columns. Figure~\ref{fig:phistriping} shows the results of applying this algorithm to a $10\times17$ grid graph, with $k=10$. We include psuedocode for the algorithm in Algorithm~\ref{alg:unweighted_striping}.

The proof upper-bounds the number of cut edges created by each step of the algorithm, and then bounds their sum in terms of $m,n,A,$ and $d$. Likewise, \OPT\ is lower-bounded in terms of $m,n,A,$ and $d$. The non-constant terms in the ratio between these bounds include the expressions $(A+1)/A$ and $d/(d-2)$, which approach 1 in the limit. Evaluating the bound for minimum values of $A$ and $d$ gives the constant bound.
\begin{figure}[!t]
\centering
  \includegraphics[width=.46\textwidth]{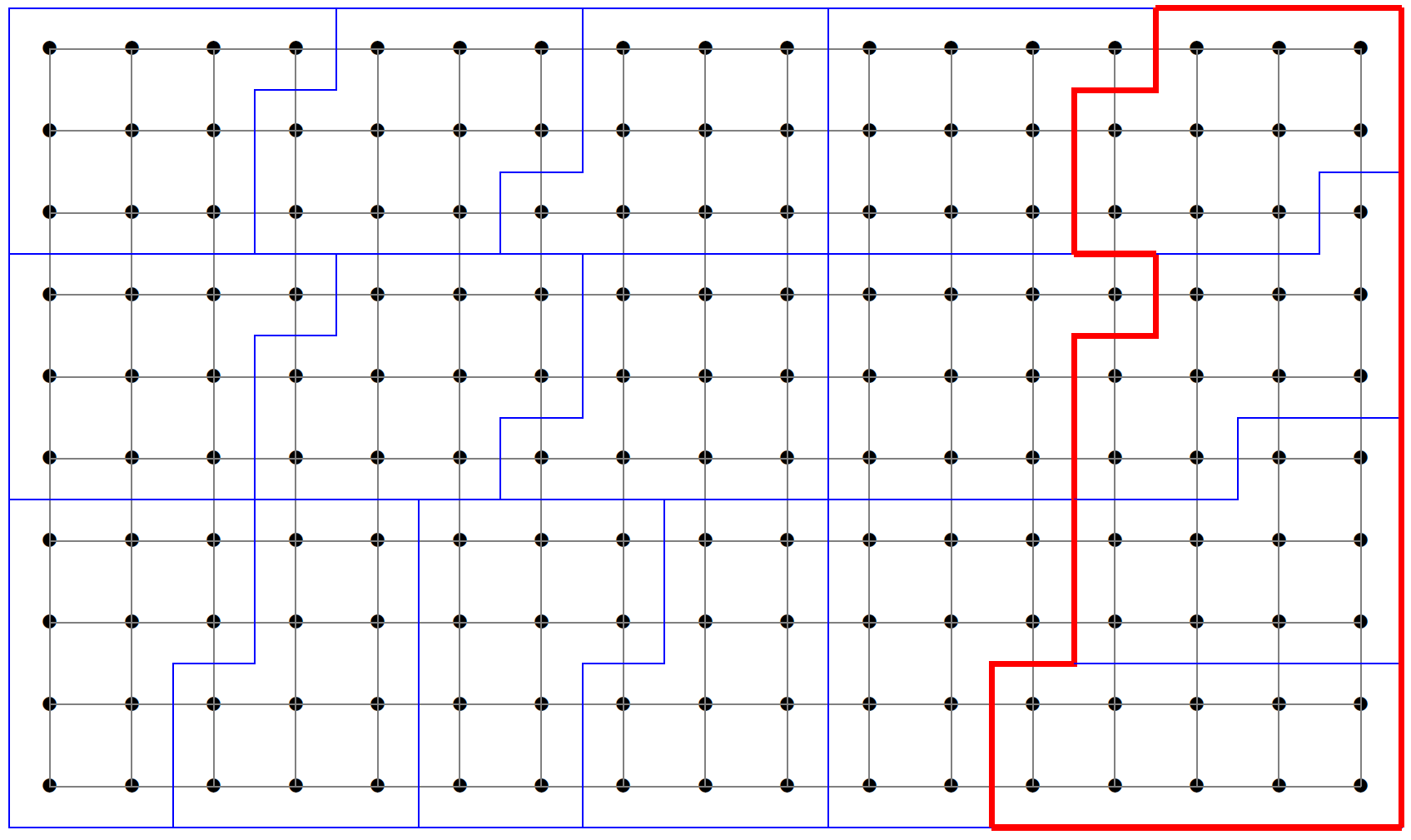}\hspace{0.4cm}
   \includegraphics[width=.48\textwidth]{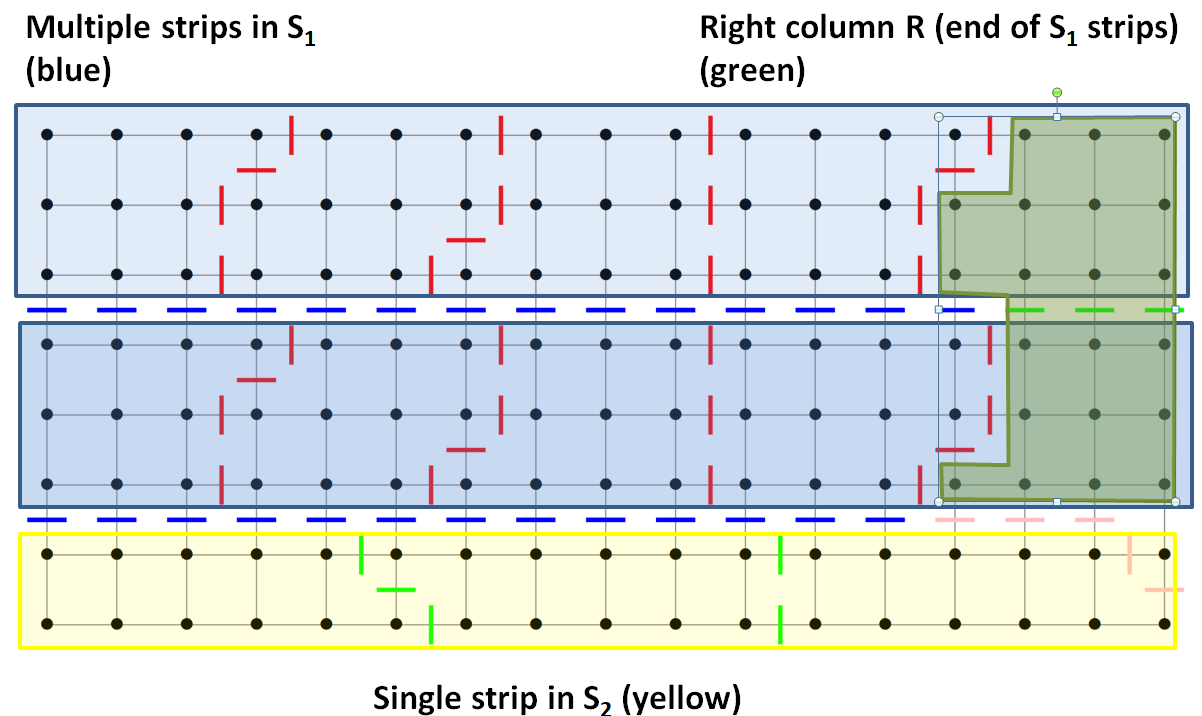}
    \caption{Left: The partition generated by applying {\sc $\varphi$-Cautious Striping} with $m=10, n=17, k=17$. The region $R$ is outlined in red. Right: The regions in the partition generated by applying {\sc $\varphi$-Cautious Striping} with $m=8, n=17, k=13$, with the types of cut edges colored as described in the proof of Theorem~\ref{thm:unweightedstriping}.}
    \label{fig:phistriping}
\end{figure}

We now show a constant-factor multiplicative approximation ratio for the $\phi$-{\sc Cautious Striping} method (Algorithm~\ref{alg:unweighted_striping}).

\begin{theorem}\label{thm:unweightedstriping}
Let $G=(V,E)$ be a rectangular grid graph with $m$ rows and $n$ columns, and $k$ be a desired number of parts, with $n \geq m\geq a$, where $a=\lfloor \sqrt{A} \rfloor$ for $A=\lfloor mn/k \rfloor$. {\sc $\varphi$-Cautious Striping} returns a partition $\mathcal{P}$ of $V$ into contiguous parts of cardinality $A$ and $A+1$. Moreover, if $d,a,\lfloor m/a \rfloor\geq 3$, the number of cut edges of $P$ is no more than $15.25 \OPT$, where $\OPT$ is the minimum number of cut edges over all such partitions.  
\end{theorem}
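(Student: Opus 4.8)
The plan is to bound the cut of $\mathcal{P}$ by accounting separately for the cut edges created in each of the four steps of Algorithm~\ref{alg:unweighted_striping}, and to compare the total against a clean lower bound on $\OPT$. First I would establish the lower bound: since every part has at most $A+1$ vertices, and any connected subgraph of the grid on $s$ vertices has boundary (number of edges leaving it, counting grid-boundary edges as well when using the perimeter viewpoint) at least roughly $2\lceil 2\sqrt{s}\rceil$ by the isoperimetric fact cited from~\cite{christou1996optimal}, summing over the $k$ parts and dividing by $2$ (each internal cut edge is counted twice) gives $\OPT \geq \tfrac12\big(k\cdot 2\lceil 2\sqrt{A+1}\rceil - \text{(perimeter of }G)\big)$, i.e. something of the order $k(2\sqrt{A} - O(1)) - (m+n)$. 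I would phrase this in terms of $m,n,A,d$ using $k = mn/A$ up to rounding, as the excerpt indicates. The key point is that $\OPT = \Omega(mn/\sqrt{A}) = \Omega(m\sqrt{A}\cdot (n/A)\cdot\sqrt A/\dots)$ — concretely $\OPT \gtrsim m n /\sqrt A$ minus boundary corrections.

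Next I would upper-bound the cut contributed by each step. For Step 2 (vertical striping of the full-height strips in $S_1$): each strip has height $h \in \{a, a+1\}$, and each part inside it is a "staircase" region occupying about $A/h \approx \sqrt A$ consecutive columns with at most one jagged column at each end; such a region contributes at most $2h + 2\lceil A/h\rceil + O(1)$ to the perimeter, hence at most that many cut edges, and there are about $mn/(k) $-many... more precisely about $n/\lceil A/h\rceil$ parts per strip and $\lfloor m/a\rfloor$-ish strips. Summing, the Step-2 cut is at most $\big(\text{number of parts}\big)\cdot(2\sqrt A + O(1)) \approx k(2\sqrt A + O(1))$, which matches the $\OPT$ lower bound up to the additive $O(1)$-per-part slack and the column-of-width-up-to-$\lceil\phi a\rceil$ discrepancy. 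Steps 3 and 4 are analogous but applied to the leftover region $R$ (width in $[\lfloor a/\phi\rfloor,\lceil\phi a\rceil]$, full height $m$) and to the bottom strip $S_2$ (height at most $a + \lfloor a/\phi\rfloor < 2a$, width $n$): in each case every part is again a near-square / staircase region of $A$ or $A+1$ cells with perimeter $O(\sqrt A)$, and the number of such parts is a bounded fraction of $k$, so each step contributes $O(k\sqrt A)$ as well, with explicit small constants coming from the golden-ratio window $[\,a/\phi,\ \phi a\,]$ chosen precisely so that both the aspect ratio of the staircase parts and that of $R$ stay within $[1/\phi,\phi]$. I would also separately count the "interface" cut edges between $S_1$ and $R$, between strips, and between $R$/$S_2$ and the rest: these lie along $O(m+n)$ grid lines and so are lower-order.

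Finally I would assemble the ratio. Writing $\mathrm{cut}(\mathcal P) \le c_1 k\sqrt A + c_2 k + c_3(m+n)$ and $\OPT \ge k(2\sqrt A - c_4) - c_5(m+n)$, the ratio is controlled by the leading coefficient $c_1/2$ together with correction factors of the shape $(A+1)/A$ and $d/(d-2)$ (the latter because a constant number $O(1)$ of the $d$ strips — the short ones, $S_2$, and the $R$-region — are handled less efficiently, so the "bad" contribution is at most a $2/d$ fraction of the "good" one), plus $\lfloor m/a\rfloor/(\lfloor m/a\rfloor - 2)$ from the analogous issue across columns. Under the hypothesis $d, a, \lfloor m/a\rfloor \ge 3$ each correction factor is at most $3$, $3$, and $(A+1)/A \le (9+1)/9$ respectively when evaluated at their worst admissible small values, and plugging the explicit $c_i$ (which the $\phi$-window makes concrete) yields the stated $15.25$. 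I expect the main obstacle to be the bookkeeping of exactly how many parts of size $A+1$ vs. $A$ appear, and getting honest (not just asymptotic) perimeter bounds for the jagged staircase parts at the ends of strips and in $R$ — in particular ensuring the boundary-of-$G$ savings are correctly assigned and not double-counted — so that the constants multiply out to $15.25$ rather than something larger; the asymptotic statement "$\to$ the 6$\times$ improvement" is easy, but pinning the finite constant requires the careful case analysis flagged in the paragraph preceding the theorem.
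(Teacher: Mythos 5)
Your proposal follows essentially the same route as the paper: classify the cut edges by which step of the algorithm creates them (rows between strips, boundaries between parts within a strip, the leftover regions $R$ and $S_2$, and the one part straddling both), bound each class in terms of $m,n,A,d$, lower-bound $\OPT$ by the per-part isoperimetric bound minus the $2(m+n)$ boundary of $G$, and control the ratio via the correction factors $(A+1)/A$ and $d/(d-2)$ evaluated at their smallest admissible values --- which is exactly how the paper obtains $15.25$, with the final algebra relegated to Lemma~\ref{lem:algebra}. The only small slip is in the lower bound, where you should use the minimum part size $A$ (giving per-part perimeter at least $2\lceil 2\sqrt{A}\rceil$) rather than the maximum $A+1$, since the isoperimetric function is increasing in the part size; otherwise the argument matches the paper's.
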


\begin{proof}
Balance (that each part of $\mathcal{P}$ has cardinality $A$, or $A+1$ if $k\nmid mn$) follows immediately from the construction of the algorithm. 

The striping process and directions are chosen in such a way that contiguity is maintained at every step, with the first vertex of each stripe neighboring a vertex in the rest of the part that the stripe is being added to.
Let $m=da+r$ for some $d\in \mathbb{N},0\leq r\leq a-1$. We first bound $\OPT$ in terms of $m,n,$ and $A$. Recall that the minimum perimeter of a part with $A$ vertices $2\lceil 2\sqrt{A}\rceil$~\cite{christou1996optimal}. Therefore $\OPT_{B}=\frac{mn}{A} \lceil 2\sqrt{A}\rceil$ is a lower bound on the minimum perimeter of all the parts of $\OPT$. Then, subtracting $2(m+n)$ to account for the vertices of degree $2$ and $3$ on the boundary of $G$, we have that $k\geq\frac{mn}{A+1}$ and $\OPT$ is at least 
\[\OPT \geq k \cdot \lceil 2\sqrt{A}\rceil -2(m+n) \geq \frac{A}{A+1} {\rm \OPT}_{B} -2(m+n).\] Furthermore, since $n\geq m\geq da$, the top and bottom borders of $G$, of total length $2n$ are at most $2/(d+1)$ fraction of the perimeter on the top and bottom of parts created by $\varphi$-Cautious Striping. Likewise, the $2m<2n$ cut edges on the left and right borders of $G$ are at most $2/(d+1)$ fraction of the perimeter on the top and bottom of parts created by $\varphi$-Cautious Striping. Thus $2(m+n) \leq \frac{2}{d}\OPT_{B} < \leq \frac{2}{d}\frac{A}{A+1}\OPT_{B}$, and so $\OPT\geq \frac{A}{A+1}\frac{d}{d-2} \OPT_{B}.$ 

We now bound the number of cut edges of the partition given by {\sc $\varphi$-Cautious Striping}. These cut edges come in four types, marked in four colors in Figure~\ref{fig:phistriping}:

\begin{enumerate}
    \item Cut edges between horizontal strips in $S_1$ (blue): There are $d$ rows of cut edges, one below each strip in $S_1$, and each row of cut edges has at most $(n-a/\phi)$ edges in it, for a total of $d(n-a/\phi)$ edges. (Recall that $\phi-1 = 1/\phi$.)
    \item Edges between parts in the same horizontal strip in $S_1$ (red): The number of vertices in $S_1$ is at most $da(n-a/\phi)$, so the number of pairs of adjacent pairs created by the top-to-bottom striping process in $S_1$ is at most $da(n-a/\phi)/A$. Each such pair of parts has at most $(a+1)$ cut edges, which gives a total of at most $da(n-a/\phi)(a+1)/A$ edges. 
    \item Cut edges between parts entirely in $R$, and between parts entirely in $S_2$ (green): The number of vertices in $R\cup S_2$ is at most $(m+n)\phi a$, and therefore the number of parts in $R\cup S_2$ is at most $(m+n)\phi a / A$. Therefore, the number of pairs of such parts with cut edges between them is at most $(m+n)\phi a /A -1$. Each such pair of parts has at most $\phi a +1$ cut edges between them. Therefore, the total number of cut edges of this type is at most
    \begin{align*}
    ((m+n)\phi a /A -1)(\phi a +1) &\leq \phi^2 a^2 (m+n)/A + (m+n)\phi a/A \\
    &\leq \phi^2 (m+n) + \phi(m+n) a/A.
    \end{align*}
    \item Cut edges created after a possible part spanning both $R$ and $S_2$ (pink): There are at most $2\phi a$ such edges.
\end{enumerate}
Summing over each type and rearranging gives the total number of cut edges
%
%{\textcolor{red}{summarize a lemma here}}
\begin{align*}
&\quad nd + \phi(a/A+\phi)(m+n) + \frac{a^2 n d}{A} +  2\phi a +\frac{and}{A} - \frac{a^3 d + aAs+a^2 d}{\phi A}\\
&\leq nd + \phi(a/A+\phi)(m+n) +nd+  2\phi a +\frac{and}{A}\\
&\leq 2\frac{mn}{a} + \phi(a/A+\phi)(m+n) +  2\phi a +\frac{mn}{A}\\
&\leq \frac{a+1}{a} \OPT_{B}  + \phi(a/A+\phi)(m+n) +  2\phi a +\frac{mn}{A}. \qquad\qquad(*)
\end{align*}
Dividing the quantity ($*$) by the lower bound on $\OPT$, we have that the approximation ratio is bounded by 15.25 and approaches 1 asymptotically when the number of vertices in the grid and the number of parts required go to infinity with $m/a, n/a \rightarrow \infty$, then we can show the number of cut edges of $P$ approaches $\OPT$. See Lemma \ref{lem:algebra} in the appendix for details.
\end{proof}

Next, we consider a special case of Theorem~\ref{thm:unweightedstriping}, when $k\geq m, n$ and $k|mn$. This is the most general case analyzed by Christou and Meyer~\cite{christou1996optimal}. To achieve a directly comparable result, we state our objective in the same way, using total part perimeter in the dual graph of $G$, rather than total cut edges. This has the effect of counting each interior cut edge twice (since it is in the perimeter of two parts) and adding an additional $2(m+n)$ for the boundary of the grid, which is counted once. Our result, Theorem~\ref{thm:exactunweightedstriping}, shows that $\phi$-{\sc Cautious Striping} yields an asymptotic relative approximation error six times better than that of the $1+9/\lceil 2\sqrt{A}\rceil$ approximation error in~\cite{christou1996optimal} for the case $k>m,n, k|mn$.

\begin{theorem}\label{thm:exactunweightedstriping}
Let $G=(V,E)$ be an unweighted rectangular $m \times n$ grid graph ($m$ rows and $n$ columns, with $n \geq m$ without loss of generality) to be divided into $k$ contiguous parts, where $k|mn$. Then {\sc $\varphi$-Cautious Striping} gives a partition $V_1, \hdots, V_k$ into contiguous parts which achieves exact balance and the following approximation guarantees on compactness: 
% \sg{cases this approximation for $k < m \leq n$, this approximation for $m < k <n$, this approximation for $k>n \geq m$.}\\\\
%where $k\geq \max\{n,m\}$. Let $A=\frac{mn}{k}\in\mathbb{N}$. 
Then, if $k\geq \max\{m,n\}\geq 4$, then $\phi$-{\sc Cautious Striping} gives a balanced partition of $V$ into $k$ parts and which has total perimeter at most % relative error at most
\[
1+\frac{1}{\lceil2 \sqrt{A}\rceil}+\frac{\lceil \phi \sqrt{A}\rceil \cdot\left(\frac{\sqrt{5}}{2}-1\right)+O(1)}{n}
\]
times that of the optimal partition, where $\phi=\frac{1+\sqrt{5}}{2}$. Therefore, the striping algorithm is in the worst case a 1.69-factor approximation. 
\end{theorem}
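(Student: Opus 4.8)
The plan is to argue directly in terms of the total polyomino perimeter $\Pi(\mathcal{P})=\sum_i\mathrm{perim}(V_i)$ (grid‑boundary unit edges counted once, interior cut edges counted twice), since the theorem is stated that way; working with $\Pi$ is exactly what removes the boundary‑correction terms ($-2(m+n)$, the $d/(d-2)$ factor) that appeared in the proof of Theorem~\ref{thm:unweightedstriping}, because the grid boundary now contributes to $\Pi$ of the optimum as well. First I would record the structural simplifications special to $k\mid mn$ and $k\ge\max\{m,n\}$: every part of every balanced partition has exactly $A=mn/k$ vertices (no $(A{+}1)/A$ slack), and since $k\ge n$ forces $A\le m$ we get $a^2\le A\le m$, hence $d=\lfloor m/a\rfloor\ge a>r=m\bmod a$. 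Consequently Step~1 of {\sc $\varphi$-Cautious Striping} always uses the ``$a$ and $a{+}1$'' branch, there is no bottom strip $S_2$ and no straddling part, and the whole partition is produced by Steps~1--3: $d$ horizontal strips of heights $a$ or $a{+}1$ carved left‑to‑right into interior parts, plus a single leftover near‑rectangle $R$ of height $m$ on the right. For the lower bound I would use the polyomino isoperimetric bound (a part with $A$ cells has perimeter $\ge 2\lceil 2\sqrt A\rceil$~\cite{christou1996optimal}), so $\Pi(\OPT)\ge 2k\lceil 2\sqrt A\rceil=\tfrac{2mn}{A}\lceil 2\sqrt A\rceil$.

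For the upper bound I would bound interior parts and $R$-parts separately. An interior part in a strip of height $h\in\{a,a+1\}$ consists of $\lfloor A/h\rfloor$ full columns plus a partial column, so its perimeter is exactly $2\bigl(h+\lceil A/h\rceil\bigr)$; the crucial elementary inequality is
\[
h+\bigl\lceil A/h\bigr\rceil\ \le\ \lceil 2\sqrt A\,\rceil+1\qquad\text{for }h\in\{a,a+1\},\ a=\lfloor\sqrt A\rfloor ,
\]
which I would prove by a short case split on $A\le a^2$, $a^2<A\le a^2+a$, and $A>a^2+a$ (the last forcing $\lceil 2\sqrt A\rceil\ge 2a+2$). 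This is exactly what yields the leading term $1+1/\lceil 2\sqrt A\rceil$ rather than $1+2/\lceil 2\sqrt A\rceil$. For the $R$-parts, Step~2's stopping rule keeps the number of columns of $R$ inside $[\lfloor a/\phi\rfloor,\lceil\phi a\rceil]$, so each $R$-part (striped top‑to‑bottom) has perimeter $2\bigl(w+\lceil A/w\rceil\bigr)$ for some $w$ in that window, up to an $O(1)$ per‑part correction for the ``staircase'' between adjacent strips; and there are $k_2=|R|/A\le\lceil\phi a\rceil m/A$ such parts, the staircase corrections summing to $O(d)=O(m/a)$. Since $w\mapsto w+A/w$ is convex with minimum at $\sqrt A$, its maximum over the window is attained at an endpoint, and using $\phi+\phi^{-1}=\sqrt5$ one gets $w+\lceil A/w\rceil\le\sqrt5\,a+O(1)$. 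Hence the excess of an $R$-part's perimeter over the interior bound $2(\lceil 2\sqrt A\rceil+1)$ is at most $2a(\sqrt5-2)+O(1)$, and multiplying by $k_2$ the total $R$-excess is at most $\tfrac{2\phi a^2 m(\sqrt5-2)}{A}+O(m/a)$.

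Combining, with $k_1,k_2$ the numbers of interior and $R$-parts ($k_1+k_2=k$),
\[
\Pi(P^{\ast})\ \le\ 2(\lceil 2\sqrt A\rceil+1)k_1+\Bigl(2(\lceil 2\sqrt A\rceil+1)+2a(\sqrt5-2)+O(1)\Bigr)k_2\ \le\ 2(\lceil 2\sqrt A\rceil+1)k+\tfrac{2\phi a^2 m(\sqrt5-2)}{A}+O(m/a).
\]
Dividing by $\Pi(\OPT)\ge 2k\lceil 2\sqrt A\rceil=\tfrac{2mn}{A}\lceil 2\sqrt A\rceil\ge\tfrac{4amn}{A}$ gives a ratio at most $1+\tfrac{1}{\lceil 2\sqrt A\rceil}+\tfrac{\phi a(\sqrt5-2)}{2n}+O(1/n)$, and since $a=\lfloor\sqrt A\rfloor$ (so $\phi a\le\lceil\phi\sqrt A\rceil$) and $\tfrac{\sqrt5-2}{2}=\tfrac{\sqrt5}{2}-1$, this is the claimed bound $1+\tfrac{1}{\lceil 2\sqrt A\rceil}+\tfrac{\lceil\phi\sqrt A\rceil(\sqrt5/2-1)+O(1)}{n}$; it tends to $1$ as $m/a,n/a\to\infty$. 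For the worst‑case figure, I would use $1/\lceil 2\sqrt A\rceil\le\tfrac12$ always, bound the third term by a constant near $\phi(\sqrt5/2-1)\approx0.19$ plus terms vanishing in $n$ (using $A\le m\le n$), and check the finitely many genuinely small configurations ($A$ and $n$ both small, e.g.\ $a\le 3$, where $\lfloor a/\phi\rfloor$ may degenerate) directly, concluding the ratio never exceeds $1.69$.

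I expect the main obstacle to be pinning the constants rather than the orders of magnitude: (i) proving the tight inequality $h+\lceil A/h\rceil\le\lceil 2\sqrt A\rceil+1$ for \emph{both} admissible strip heights, which is what fixes the leading term; and (ii) the $R$-analysis — confirming the per‑part excess is attained at an endpoint of the width window, that the resulting coefficient is exactly $\phi(\sqrt5/2-1)=\tfrac12(\phi-1)^2$, and that the small‑$a$ rounding (where $\lfloor a/\phi\rfloor$ can fall strictly below $a/\phi$, degrading the clean $(\phi-1)^2$ estimate) together with the staircase corrections remains inside the $O(1)/n$ slack. Everything else — the per‑type perimeter bookkeeping, the asymptotics, and the small‑instance check behind the $1.69$ figure — is routine given the framework already set up for Theorem~\ref{thm:unweightedstriping}.
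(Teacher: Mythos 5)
Your plan follows essentially the same route as the paper's proof: the polyomino isoperimetric lower bound $2k\lceil 2\sqrt{A}\rceil$, the observation that $k\ge n$ forces $A\le m$ so only the ``$a$ and $a{+}1$'' branch fires and $S_2=\emptyset$, the key inequality $h+A/h\le\lceil 2\sqrt A\rceil$ for $h\in\{a,a{+}1\}$ (the paper's Lemma~\ref{lem:uniformstriping}, proved by exactly the case split on $\sqrt A-a$ you describe), the width-window/convexity argument for $R$ with $\phi+\phi^{-1}=\sqrt5$, and the final averaging weighted by $|R|/n$. So I would not call this a different proof.

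There is, however, one concrete slip that as written costs you the claimed leading constant. You assert that an interior part in a strip of height $h$ ``consists of $\lfloor A/h\rfloor$ full columns plus a partial column, so its perimeter is exactly $2(h+\lceil A/h\rceil)$.'' That is true only for the \emph{first} part of a strip. A middle part inherits a partial column from its predecessor and ends in another partial column, so it can span $\lceil A/h\rceil+1$ columns and have perimeter $2(h+\lceil A/h\rceil+1)$; e.g.\ in the $4\times15$, $A=15$ example of Figure~\ref{fig:striping15} the middle parts have perimeter $18=2(4+5)$ while $2(h+\lceil A/h\rceil)=16$. A per-part bound therefore only gives ratio $1+2/\lceil 2\sqrt A\rceil$, which is exactly what you say you are trying to beat. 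The fix is the paper's amortized accounting over a whole strip: the union of the $p$ parts in a strip is (nearly) an $h\times\lceil pA/h\rceil$ rectangle and the $p-1$ internal borders each have length at most $h+1$ and are shared by two parts, so the total perimeter is at most $2h+2\lceil pA/h\rceil+2(p-1)(h+1)\le 2p\bigl(\lceil 2\sqrt A\rceil+1\bigr)+2$, recovering $1+1/\lceil 2\sqrt A\rceil$ on average. With that substitution (and the same aggregate care for the $R$-parts' staircase corrections, which you already budget into the $O(1)/n$ term), your argument matches the paper's.
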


\noindent 
\textbf{Proof Sketch:} We include the complete proof of the theorem in Appendix~\ref{app:proofs}. Using the assumptions on $m,n,$ and $k$, we show that the $m$ rows can be divided into strips of height $a$ and $a+1$, so in $\phi$-{\sc Cautious Striping}, $S_2 = \emptyset$. We then upper-bound the ratio of the average perimeter of the parts created in each step of the algorithm against \OPT, the lower bound of the perimeter of the parts. Lemma~\ref{lem:uniformstriping} is used to upper-bound the perimeter of the parts in the strips. Averaging this perimeter ratio over all parts of the graph (those created by vertical striping and those in $R$) gives the approximation ratio. \qed

Given a geographical region, one can tile it with square cells (rectangular grid), or hexagonal cells (as in Figure 4). Keeping the dual graph constant for both ways of partitioning the plane, we can use the striping approach for bounding the cut in hexagonal partitions as well. The set of dual vertices is the same, while the size of the cuts changes by adding more adjacent edges. 

We show that the approximation ratio of Theorem~\ref{thm:exactunweightedstriping} can be extended to hexagonal grid graphs. Wang showed that the minimum number of cut edges of an $A$-polyhex (a contiguous group of $A$ hexes) is $2\lceil \sqrt{12A-3}\rceil$~\cite{wangpolyhex} (recall that the minimum number of cut edges of an $A$-polyomino in a square grid graph is $2\lceil 2 \sqrt{A}\rceil$.)
Let $\mathcal{P}$ be a partition of the set of dual vertices such that each part of $\mathcal{P}$ has exactly $A$ vertices.
Let $\textsc{Cut}_S,\textsc{Cut}_H$ be the number of cut edges of $\mathcal{P}$ in the square and hexagonal grid respectively.  If $P$ is a partition generated by $\phi$-{\sc Cautious Striping}, then if two parts in $\mathcal{P}$ have $c$ cut edges between them when $G$ is a square grid graph, those two parts will have at most $2c+1$ cut edges between them in the hexagonal grid.\footnote{For vertices added to their part in a top-to-bottom stripe, vertices with 1 cut edge in the square graph will have at most 2 in the hexagonal graph, and vertices with 2 in the square graph will have at most 3 in the hexagonal graph. For vertices added to their part in a left-to-right stripe, vertices with 1 cut edge in the square graph will alternate between having 1 and 3 cut edges in the hexagonal graph, with those in the middle of stripe that spans both parts having 2.} Let $\textsc{Min}_S = 2\lceil 2 \sqrt{A}\rceil k$ and $\textsc{Min}_H = 2\lceil \sqrt{12A-3}\rceil k$ be lower bounds on the optimal perimeter of a partition into $k$ balanced parts on a square or hexagonal grid, respectively.  If $P$ is a partition generated by $\phi$-{\sc Cautious Striping} on the square grid, then as $A\rightarrow\infty$, we have that
\[
\frac{\textsc{Cut}_H}{\textsc{Min}_H}= \frac{\textsc{Cut}_H}{\textsc{Cut}_S} \cdot \frac{\textsc{Cut}_S}{\textsc{Min}_S}\cdot\frac{\textsc{Min}_S}{\textsc{Min}_H} \rightarrow 2\cdot \alpha\cdot \frac{\sqrt{16}}{\sqrt{48}}= \frac{2}{\sqrt{3}}\alpha
\]
where $\alpha$ is the approximation ratio of $\phi$-{\sc Cautious Striping} for the rectangular grid.

\subsection{Dynamic Striping for General Weights}\label{subsubsec:dynamic} 
The striping techniques in Section~\ref{subsec:uniformstriping} assume constant vertex weights in a rectangular grid graph, as they rely heavily on the geometry of the polyominoes of size $A$. When vertex weights are introduced, the number of vertices within the parts of a good partition can vary greatly, with the total number of cut edges of the optimal partition potentially being incomparable to the unweighted case. Nevertheless, the ideas from striping are useful in the weighted case as well. In this section, we give methods for finding contiguous balanced partitions using dynamic programming techniques, assuming that we have access to a Hamiltonian path $(v_1, \hdots, v_n)$ in the given graph $G = (V,E,w)$ where $w: V \rightarrow \mathbb{R}$ are vertex weights.\footnote{We can further relax this property to require a permutation $(v_1,\dots,v_n)$ (not necessarily a path) of $V$ such that for any interval $[i,j]\subset \mathbb{N}$ with $\sum_{\ell=i}^{j} w(v_i) \geq (1-\varepsilon)A$, the subgraph of $G$ induced by $\{v_i,\dots,v_j\}$ is connected. However, we first present the ideas assuming a Hamiltonian path, for the sake of clarity in exposition.} Note that this path can be obtained by striping for rectangular grid graphs. 
%Recall that for grid graphs, we can construct this ordering by taking a path in the graph, or some other permutation with the property that for any sufficiently large interval $[i,j]\subset \mathbb{N}$, the subgraph induced by $\{v_i,\dots,v_j\}$ is connected. That is, 

We call a partition $\mathcal{P}=\{V_1,\dots,V_k\}$ of the vertex set $V$ \textit{consistent} with a given ordering $(v_1,\dots,v_n)$, if for all $1\leq i<j\leq k$ and all $u\in V_i$, $v \in V_j$, $u$ precedes $v$ in the ordering $(v_1,\dots,v_n)$. We will search for a balanced, compact partition in a graph which is also consistent with the given Hamiltonian path. Note that there are at most $\binom{n-1}{k-1}$ partitions consistent with a given ordering of vertices as a Hamiltonian path\footnote{To see this, note that given a path, any partition consistent with the path is uniquely determined by the elements of its second through last parts that appear first in the path, and there are $\binom{n-1}{k-1}$ ways to choose these elements, as they can be any of $v_2,\dots,v_n$.}, and if $k$ is a constant then it is easy to find the optimal consistent partition satisfying the approximate balance condition in equation~\eqref{eq:balance}, in polynomial time. A natural question is if we can find the optimal consistent partition for large values of $k$, for e.g., $k = \Theta(n)$. 

To do so, we will use a dynamic programming (DP) approach. We consider the partitions of the prefixes $(v_1,\dots,v_s)$ of the ordered list into a smaller number $t\leq k$ of parts. At each step, we wish to find the consistent partition for the relevant $s$ and $t$ that has minimum cut size. This is non-trivial as the cut function is submodular, and therefore, straightforward DP recursions do not work. The key idea of our approach is to consider cut values on the subgraph $G_s$ of $G$ {\it induced by vertices} $\{v_1,\dots, v_s\}$. Let $j<s$ and let $\mathcal{P}=\{B_1,\dots,B_k\}$ be a partition on $\{v_1,\dots,v_s\}$ with final part $B_k=\{v_j,\dots,v_s\}$. Then the number of cut edges of $\mathcal{P}$ is the number of cut edges between $B_1,\dots,B_{k-1}$, plus the number of cut edges between $B_k$ and the other parts. Using subgraphs, we are able to account for the total in an amortized fashion which helps prove running time guarantees for a natural dynamic program. 

The pseudocode for this algorithm is given in Algorithm~\ref{alg:dynamic_striping}. The algorithm maintains arrays Parts$(s,t)$ and Cut$(s,t)$, both indexed by integers $s$ and $t$, where $0\leq s\leq n$ and $0\leq t\leq k$. Each entry of Parts$(s,t)$ will record the minimum-cut consistent partition of $\{v_1,\dots,v_s\}$ into $t$ parts, and the corresponding entry Cut$(s,t)$ maintains the number of cut edges of Parts$(s,t)$ in $G_s$. To update for a given $s$ and $t$, it considers all $j\leq s$ for which the part $\{v_j,\dots,v_s\}$ satisfies the balance criterion and sets the last part to be $\{v_r,\dots,v_s\}$, where \[r = \argmin_{j: |\sum_{i=j}^{s}w(v_i)-A| \leq \varepsilon A} \text{Cut}(j-1,t-1) + \delta(\{v_1,\dots,v_{j-1}\},\{v_j,\dots,v_s\}).\]

The algorithm runs in polynomial time, since each update takes at most $O(n \mathcal{C})$ time, where $\mathcal{C}$ is the time to compute the cut edges in a given subgraph of $G$. However, we can amortize the cost of computing these successive cuts over each update to show a better run time complexity. 

\begin{algorithm}[!t]\footnotesize
 \caption{(\textsc{Dynamic Partition}) Dynamic partitioning given vertex ordering}
 \begin{algorithmic}[1]
\State \textbf{Input:} Graph $G=(V,E)$ with ordered vertex set $v_1,\dots,v_n$, and vertex weights $w(v)$; desired number of parts $k\in \mathbb{N}$; balance parameter $\varepsilon>0$. Let $G_s$ be the subgraph induced by $v_1, \dots, v_s$ for each $s \leq n$. 
\State \textbf{Output:} A partition of the vertex set into $k$ $(1\pm \varepsilon)$-balanced parts consistent with the ordering $v_1,\dots,v_n$.  %and with each part having total weight within $\varepsilon$ fraction of the average weight $A=\frac{\sum_{v\in V}w(v)}{k}$. 
\State \textbf{Initialize:} Parts$(s,t) = \emptyset$, Cut$(s,t) = \infty$, $\forall~ 0\leq s\leq n, 0\leq t\leq k$ except Cut$(0,0)=0$.

\For{$t$ from 1 to $k$}
    \For{$s$ from $t$ to $n-k+t$}
    \State Set $r = \displaystyle\argmin_{j: |\sum_{i=j}^{s}w(v_i)-A| \leq \varepsilon A} \text{Cut}(j-1,t-1) + \delta(\{v_1,\dots,v_{j-1}\},\{v_j,\dots,v_s\}).$
\State Set Parts$(s, t) = $ Parts$(r-1,t-1) \cup \{\{v_i:r\leq i \leq s\}\}$
\State Set Cut$(s,t)$ to the number of cut edges of Parts$(s,t)$ in $G_s$.
    \EndFor
\EndFor
\State \Return Parts$(n,k)$
\end{algorithmic}
\label{alg:dynamic_striping}
\end{algorithm}

Clearly this algorithm runs in polynomial time. In fact, for most applications it runs in $O(|V|^2)$ time. Note that if $G$ is obtained from the regions of a map using the dual graph process of Section~\ref{subsec:formulation}, $G$ is planar and hence satisfies the hypothesis that each vertex has degree $O(1)$. Furthermore, {\sc Dynamic Partition} returns the optimal (min-cut) partition that is consistent with the chosen ordering while satisfying the balance constraint.

\begin{theorem}\label{thm:dynamic} Let $G=(V,E)$ be a graph with ordered vertex set $\{v_1,\dots,v_n\}$, and vertex weights $w(v)$ for all $v\in V$.  Then for desired number of parts $k\in \mathbb{N}$ and balance parameter $\varepsilon>0$, {\sc Dynamic Partition} runs in time $O(|V||E|k)$. Moreover, if $w(v)>0$ and $\mathrm{deg}(v)=O(1)$ for all $v\in V$ and ${\max_v\{w(v)\}}/{\min_v\{w(v)\}}=O(1)$, {\sc Dynamic Partition} runs in time $O(|V|^2)$. Further, {\sc Dynamic Partition} returns a consistent partition $\mathcal{P}$ of $V$ into $k$ parts such that each part has total weight within $\varepsilon$ fraction of $A$, if such a partition exists. $\mathcal{P}$ has minimum cut size over all such consistent partitions, and if $(v_1,\dots,v_n)$ is a Hamiltonian path, each part of $\mathcal{P}$ is connected.
\end{theorem}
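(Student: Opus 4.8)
The plan is to prove, in order: (a) correctness of the dynamic program — that \textsc{Dynamic Partition} outputs a minimum-cut consistent $(1\pm\varepsilon)$-balanced $k$-partition whenever one exists; (b) connectivity of the parts when the ordering is a Hamiltonian path; and (c) the two running-time bounds. Everything rests on one \emph{cut-decomposition} observation that sidesteps the non-modularity of the cut objective flagged in the text. Fix $j\le s$ and a consistent partition $\mathcal{Q}=\{B_1,\dots,B_t\}$ of $\{v_1,\dots,v_s\}$ whose last part is $B_t=\{v_j,\dots,v_s\}$. Every edge of $G_s$ cut by $\mathcal{Q}$ either lies inside $\{v_1,\dots,v_{j-1}\}$ — and is then already cut by $\{B_1,\dots,B_{t-1}\}$ inside $G_{j-1}$, since $G_{j-1}$ contains every edge of $G$ among those vertices — or joins $\{v_1,\dots,v_{j-1}\}$ to $\{v_j,\dots,v_s\}$. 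Hence the number of cut edges of $\mathcal{Q}$ in $G_s$ equals (the number of cut edges of $\{B_1,\dots,B_{t-1}\}$ in $G_{j-1}$) $+\ \delta(\{v_1,\dots,v_{j-1}\},\{v_j,\dots,v_s\})$, with the last term counted inside $G_s$; this is exactly the quantity minimized by the choice of $r$ in Algorithm~\ref{alg:dynamic_striping}, and since $G_n=G$, the final entry $\mathrm{Cut}(n,k)$ records the true cut in $G$.

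For correctness I would induct on $t$, and within each $t$ on $s$ over the loop range $t\le s\le n-k+t$, with the hypothesis: $\mathrm{Cut}(s,t)$ is the minimum number of cut edges in $G_s$ over consistent partitions of $\{v_1,\dots,v_s\}$ into $t$ parts each of weight in $[(1-\varepsilon)A,(1+\varepsilon)A]$ ($\infty$ if none), attained by $\mathrm{Parts}(s,t)$. The base case is $\mathrm{Cut}(0,0)=0$. In the inductive step, any competitor has a last part $\{v_j,\dots,v_s\}$ with $j$ in the feasible set of the $\argmin$ (the balance bound on $j$ is exactly the feasibility condition), so by the cut decomposition and the inductive hypothesis its cut is $\ge \mathrm{Cut}(j-1,t-1)+\delta(\{v_1,\dots,v_{j-1}\},\{v_j,\dots,v_s\})\ge\mathrm{Cut}(s,t)$; conversely $\mathrm{Parts}(s,t)$ realizes $\mathrm{Cut}(s,t)$ and is feasible, its earlier parts each having satisfied the balance bound at their creation. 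At $(s,t)=(n,k)$ this gives that $\mathrm{Parts}(n,k)$ — a partition of $V$ into $k$ consecutive blocks of the ordering, hence consistent with it — is a minimum-cut consistent $(1\pm\varepsilon)$-balanced $k$-partition, and equals $\emptyset$ exactly when none exists. Finally, if $(v_1,\dots,v_n)$ is a Hamiltonian path then consecutive vertices $v_\ell,v_{\ell+1}$ are adjacent in $G$, so each block $\{v_i,\dots,v_j\}$ induces a subgraph containing a spanning path and is therefore connected.

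For the running time there are $O(|V|k)$ cells. The point is not to recompute $\delta(\{v_1,\dots,v_{j-1}\},\{v_j,\dots,v_s\})$ from scratch for each candidate $j$: writing $f(j)$ for it (the number of edges $v_av_b$ with $a<j\le b\le s$), we have $f(s+1)=0$ and $f(j)=f(j+1)-\#\{b:\,v_jv_b\in E,\ j<b\le s\}+\#\{b:\,v_jv_b\in E,\ b<j\}$, so a single downward sweep of $j$ maintains all $f(j)$ — hence, via the decomposition, all candidate values for the update — in total time $O\big(\sum_j\deg(v_j)\big)=O(|E|)$ per cell (the $\argmin$ scan is absorbed), giving $O(|V||E|k)$. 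Under the extra hypotheses, $\deg(v)=O(1)$ makes $|E|=O(|V|)$, and positivity of the weights makes $\sum_{i=j}^{s}w(v_i)$ strictly decreasing in $j$, so for each $s$ the feasible $j$'s form an interval containing at most $(1+\varepsilon)A/\min_v w(v)=O(|V|/k)$ indices — using $A=\tfrac1k\sum_v w(v)\le\tfrac{|V|}{k}\max_v w(v)$ and $\max_v w(v)/\min_v w(v)=O(1)$ (and $\varepsilon=O(1)$). Restricting the sweep and the scan to that $O(|V|/k)$-length window costs $O(|V|/k)$ per cell, for $O\big(|V|k\cdot|V|/k\big)=O(|V|^2)$ total.

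I expect the running-time argument, not correctness, to be the crux: one must spot both (i) that the cut of a consistent partition can be bookkept incrementally in the induced subgraphs $G_s$, which is simultaneously what makes a DP over $\mathrm{Cut}(s,t)$ correct despite the cut being non-modular, and (ii) that positive, boundedly-varying weights admit only $O(|V|/k)$ feasible split points per prefix, which is precisely what collapses $O(|V||E|k)$ to $O(|V|^2)$ once $\deg(v)=O(1)$.
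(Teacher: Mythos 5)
Your proposal is correct and follows essentially the same route as the paper's proof: the same induction on $t$ (using the decomposition of the cut in $G_s$ into the cut of the first $t-1$ parts in $G_{j-1}$ plus the boundary term $\delta(\{v_1,\dots,v_{j-1}\},\{v_j,\dots,v_s\})$), the same Hamiltonian-path argument for connectivity, and the same running-time analysis (amortizing the boundary-cut updates to $O(|E|)$ per cell, and using positivity plus the bounded weight ratio to restrict to an $O(|V|/k)$ window of feasible split points). Your explicit recurrence for $f(j)$ is just a cleaner writing of the paper's ``each edge is checked at most twice'' amortization, not a different idea.
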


\noindent 
\textbf{Proof of running time:} There are $|V|k$ entries of the array $A$ to update. Each update step loops over up to $|V|$ possible vertices as candidates for $j$. For each such candidate $v_j$, we must check the balance condition $(1-\varepsilon)A \leq \sum_{i=j}^{s}w(v_i) \leq(1+\varepsilon)A$ for the part $(v_j,\dots,v_s)$, which takes constant time. We must also update the cut size at each feasible candidate, which consists of checking whether each of its incident edges has its other vertex in $\{v_1,\dots,v_{j-1}\}$. A single edge may be checked in constant time, and the entire update step will require checking at most $2|E|$ edges, since each edge can be checked at most twice (if both its vertices are feasible candidates.) This gives total running time $O(|V|k(|V|+2|E|))=O(|V||E|k)$. 
When computing the $\arg\min$ expression for $r$, beginning at $j=s$ and decrementing, we can compute the running sum $\sum_{i=j}^{s}w(v_i)$ and stop when it exceeds $(1+\varepsilon)A$. If ${\max_v\{w(v)\}}/{\min_v\{w(v)\}}=O(1)$, then the maximum possible number of candidates for $r$ considered by the algorithm in each update step is at most ${(1+\varepsilon)\sum_{v\in V}w(v)}/{\min_{v\in V} w(v)}$ is asymptotically $O(|V|/k)$. Therefore the total running time is $O(|V|k(|V|/k + \Delta(G)|V|/k)) = O(|V|^2)$, where $\Delta(G)$ is the maximum degree of a vertex in $G$.\\

\noindent 
\textbf{Proof of optimality:}  All individual parts created during the execution of {\sc Dynamic Partition} are of the form $\{v_i:r\leq i \leq s\}$ for some $r$ and $s$ such that $(1-\varepsilon)A \leq \sum_{i=r}^{s}w(v_i) \leq(1+\varepsilon)A$. Therefore, the balance condition is satisfied. Moreover, if $(v_1,\dots,v_n)$ is a Hamiltonian path, $\{v_i:r\leq i \leq s\}$ is a path, and therefore connected, for all $r$ and $s$.

We prove optimality over balanced consistent partitions by induction on $t$. In particular, we show that for all $s\in[n]$ and $t \in [k]$, when {\sc Dynamic Partition} terminates,  Parts$(n,k)$ is the minimum-cut and $\varepsilon$-balanced consistent partition of $\{v_1,\dots,v_s\}$ into $t$ parts for all $s$ and $t$. 

For $t=0$, Parts$(s,t)=\emptyset$, which is vacuously optimal for $s=0$ (for $s>0$, no feasible partition exists). Let $t>0$ and suppose the result holds for $t-1$. Let $t\leq s\leq n-k+t$ and let $\OPT=\{\OPT_1,\dots,\OPT_t\}$ be the minimum-cut $\varepsilon$-balanced consistent partition of $\{v_1,\dots,v_s\}$ into $t$ parts (if one exists.) Let $\OPT_t=\{v_{r_{\OPT}},\dots,v_{s}\}$ be the part of $\OPT$ containing $v_s$, and let $\{v_r,\dots,v_s\}$ be the part of Parts$(s,t)$ containing $v_s$. Then we have that
\begin{align*}
\text{Cut}(s,t) &= \text{Cut}(r-1,t-1) + \delta(\{v_1,\dots,v_{r-1}\},\{v_r,\dots,v_s\}) \\
&\leq  \text{Cut}(r_{\OPT}-1,t-1) + \delta(\{v_1,\dots,v_{r_{\OPT}-1}\},\{v_{r_{\OPT}},\dots,v_s\}) \\
&\leq  \delta(\OPT_1,\dots,\OPT_{t-1}) + \delta(\{v_1,\dots,v_{r_{\OPT}-1}\},\{v_{r_{\OPT}},\dots,v_s\}) \\
&=\delta(\OPT). 
\end{align*}

Here the first inequality follows by the definition of $r$ and the $\varepsilon$-balance assumption on $\OPT$ and the second inequality follows by the induction hypothesis. This completes the induction and therefore, the claim. \qed% Hence for all $s$ and $t$, including $s=n, t=k$, Parts$(s,t)$ is the minimum-cut $\varepsilon$-balanced consistent partition of $\{v_1,\dots,v_s\}$ into $t$ parts for all $s$ and $t$. \qed

Clearly the quality of the result of {\sc Dynamic Partition}, particularly the number of cut edges, depends a great deal on the vertex ordering chosen for the input. For grid graphs, one method of obtaining a vertex ordering $(v_1,\dots,v_n)$ derives from the striping technique. Choose a stripe height of $s$ and denote the upper left vertex by coordinates $(1,1)$. In both rectangular and hexagonal grids, denote the $i$th vertex from the top in column $j$ by $(i,j)$. Then the order is $(1,1),(1,2),\dots,(1,s),(2,1),\dots,(2,s),(3,1),\dots.$ 

\begin{figure}[!t]
\centering
  \includegraphics[width=.4\linewidth]{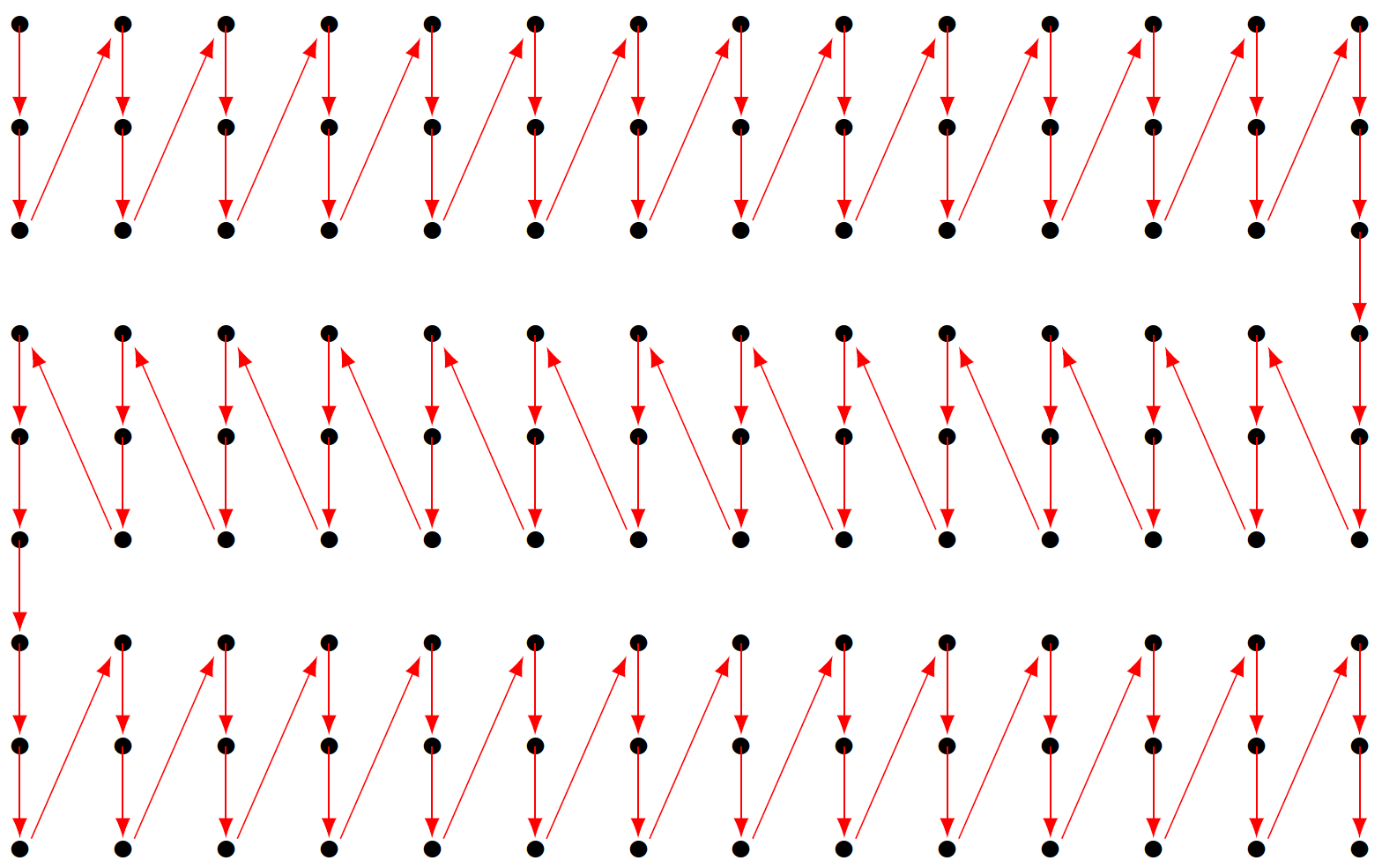}
  \includegraphics[width=.54\linewidth]{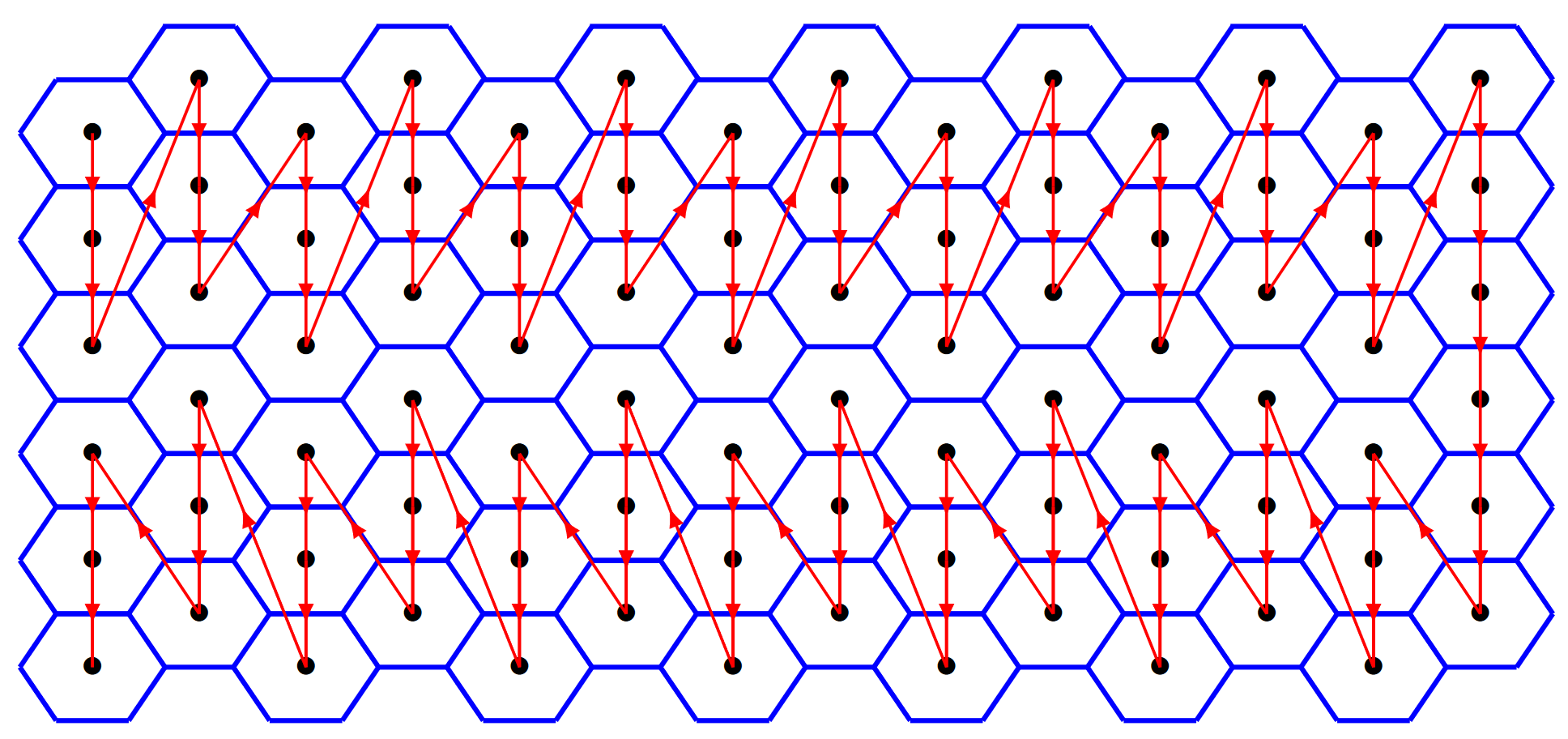}
    \caption{Vertex orders on square and hexagonal grids created using striping and the snake turning technique with stripes of height $3$.}
    \label{fig:snakestriping}
\end{figure}

To maintain connectivity at the end of each strip, we use the snaking idea described in~\cite{christou1996optimal} and illustrated in Figure~\ref{fig:snakestriping}. After traversing the last column of the first stripe, we continue down the column for the height of the next stripe, then proceed by striping right to left. After completing that stripe, we again continue down the last column of the stripe and return to going left to right across the following stripe. An example using this method as input for {\sc Dynamic Partition} is illustrated in Figure~\ref{fig:dynamic_example_weights}.

\subsection{Extensions for Stochastic Weights}\label{subsec:stochastic}
In the stochastic setting, as described in Section~\ref{sec:intro}, exact weights are not known when a graph partition must be found. Instead of showing guarantees on the weight of each part, we wish to show guarantees on the variance of the part weights or the expected maximum part weight, while retaining compactness and contiguity guarantees.
%\sg{To be edited, this is just draft text from before:} We propose using upper confidence bounds instead of mean weights. - experiments to show the difference, theory? extend the dynamic programming bound to this setting. empirical mean + $\sqrt{\sigma ()}$. Reduce the overpolicing effect. UCB analysis. Experiments (weights are different) If the variance in the weights is independent, and each weight falls in a reasonably bounded interval, then we can apply a Hoeffding-style inequality to the results of the dynamic programming algorithm and get good bounds. 

%\ch{UCB methods---formulation with independence}

%\sg{Correlation is weak? Worst-case correlation? }

In particular, this can apply if the $X_v$ are taken to be Poisson distributions, which can model the occurrence of individual events, such as 911 calls, that contribute to a vertex's workload in applications. When each vertex weight is drawn from an independent distribution, these random variables $X_v$ have variance  $\sigma^2$ bounded by a constant factor $c\mu(v)$ of their mean. In this case, if we have a partition of $V$ such that the expected weight of each part is close to the average expected weight $A$, the variance of the weights of the parts is upper-bounded as well.

\begin{theorem}[Stochastic weights: Implications.]\label{thm:variance_bound}
Let $G=(V,E)$ be a graph. For each $v\in V$, let $X_v$ be an (independent) random distribution with mean $\mu(v)$ and variance $\sigma^2 \leq c\mu(v)$ for some given constant $c$. Suppose $V_1,\dots, V_k$ is a partition of $V$ such that for some $\varepsilon\geq 0$, $\left|\sum_{v\in V_i} \mu(v)-A\right|\leq \varepsilon A$ for all $i$, where $A(\mu)=(1/k)\sum_{v\in V}\mu(v)$. Then 
\[
\frac{1}{k}\E\left[\sum_{i=1}^k \left(\sum_{v\in V_i} X_v - A\right)^2\right] \leq (4\varepsilon + \varepsilon^2)  A^2 + cA.
\] 
\end{theorem}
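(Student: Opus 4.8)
The plan is to expand the square, exploit independence of the $X_v$ across $v$, and separate the contribution of the mean (a bias term controlled by the balance assumption) from the contribution of the variance (controlled by the assumption $\sigma_v^2 \le c\mu(v)$ together with the balance assumption). Write $S_i = \sum_{v\in V_i} X_v$ and $m_i = \sum_{v\in V_i}\mu(v) = \E[S_i]$. The key decomposition is
\[
\E\left[(S_i - A)^2\right] = \E\left[(S_i - m_i)^2\right] + (m_i - A)^2 = \mathrm{Var}(S_i) + (m_i-A)^2,
\]
using $\E[S_i - m_i] = 0$. Summing over $i$ and dividing by $k$ reduces the theorem to bounding $\frac1k\sum_i \mathrm{Var}(S_i)$ and $\frac1k\sum_i (m_i-A)^2$ separately.

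For the bias term: by hypothesis $|m_i - A|\le \varepsilon A$ for every $i$, so $(m_i-A)^2 \le \varepsilon^2 A^2$ for each $i$ and hence $\frac1k\sum_i (m_i-A)^2 \le \varepsilon^2 A^2$. (One could even get $\varepsilon^2A^2$ directly; the stated bound has the slack term $4\varepsilon A^2$, so I would actually aim to show the stronger $\frac1k\sum_i(m_i-A)^2 \le \varepsilon^2 A^2$ and absorb the extra $4\varepsilon A^2$ on the way to matching the variance estimate below — see next paragraph, where the crude bound $m_i \le (1+\varepsilon)A$ forces the linear-in-$\varepsilon$ term.) For the variance term: since the $X_v$ are independent, $\mathrm{Var}(S_i) = \sum_{v\in V_i}\sigma_v^2 \le c\sum_{v\in V_i}\mu(v) = c\,m_i$. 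Now use the balance assumption again in the form $m_i \le (1+\varepsilon)A$, so $\mathrm{Var}(S_i)\le c(1+\varepsilon)A$; averaging over the $k$ parts gives $\frac1k\sum_i \mathrm{Var}(S_i)\le c(1+\varepsilon)A$. Actually it is cleaner to avoid the $(1+\varepsilon)$ here: $\sum_i m_i = \sum_{v\in V}\mu(v) = kA$ exactly, so $\frac1k\sum_i \mathrm{Var}(S_i) \le \frac{c}{k}\sum_i m_i = cA$ with no $\varepsilon$ loss at all.

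Combining, $\frac1k\E\big[\sum_i (S_i - A)^2\big] \le \varepsilon^2 A^2 + cA$, which is in fact \emph{stronger} than the claimed $(4\varepsilon+\varepsilon^2)A^2 + cA$; the stated inequality then follows a fortiori since $4\varepsilon A^2 \ge 0$. So there is really no hard step here — the whole argument is the variance/bias decomposition plus two invocations of independence and the balance hypothesis. The only place to be slightly careful is the bookkeeping: making sure to use $\sum_i m_i = kA$ (the \emph{exact} average, which is how $A$ is defined) rather than the one-sided bound when handling the variance sum, and noting that the $4\varepsilon A^2$ term in the statement is simply slack (it would be needed only if one bounded the bias term as $(m_i-A)^2 = m_i^2 - 2Am_i + A^2$ and then used $m_i\le(1+\varepsilon)A$ loosely, giving $(m_i-A)^2 \le \varepsilon^2A^2$ anyway, or if one bounded things before expanding). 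I would present the clean version and remark that the stated bound follows immediately.
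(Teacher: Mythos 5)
Your proof is correct, and it establishes the strictly stronger bound $\varepsilon^2A^2+cA$, from which the stated inequality follows trivially since $\varepsilon\geq 0$. The paper's proof pursues the same high-level plan --- isolate the fluctuation of $\sum_{v\in V_i}X_v$ about its mean $m_i=\sum_{v\in V_i}\mu(v)$ from the discrepancy between $m_i$ and $A$ --- but executes it more crudely: it writes $\E[\sum_i(S_i-A)^2]\leq \E[\sum_i(S_i-m_i)^2]+|\E[\sum_i(S_i-A)^2-\sum_i(S_i-m_i)^2]|$ and then bounds the second term by pushing absolute values inside the expectation and estimating $2S_i|m_i-A|$ and $|A^2-m_i^2|$ separately via $m_i\leq(1+\varepsilon)A$; this is exactly what produces the $2\varepsilon+2\varepsilon+\varepsilon^2=4\varepsilon+\varepsilon^2$ coefficient in the statement. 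You instead observe that $\E[S_i-m_i]=0$ makes the cross term vanish identically, so $\E[(S_i-A)^2]=\mathrm{Var}(S_i)+(m_i-A)^2$ exactly, and the discrepancy term is just $(m_i-A)^2\leq\varepsilon^2A^2$. Your handling of the variance term is also slightly sharper: using $\sum_i m_i=kA$ exactly gives $\frac{1}{k}\sum_i\mathrm{Var}(S_i)\leq cA$ with no $\varepsilon$-dependence, matching what the paper gets. Your diagnosis that the $4\varepsilon A^2$ term in the theorem is pure slack from the paper's bookkeeping is accurate.
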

\noindent 
\textbf{Proof Sketch:} The proof is given in Appendix \ref{app:proofs}. The key idea is to bound the variance $\frac{1}{k}\E\left[\sum_{i=1}^k \left(\sum_{v\in V_i} X_v - A\right)^2\right]$ as the sum of two terms. One results from the variances of the variables $X_v$, and is bounded by $ckA$ since the $X_v$ are independent. The other term results from the discrepancies in the expected total weights of the parts (up to $\varepsilon$ each) and is bounded by $(4\varepsilon + \varepsilon^2) k A^2$. \qedsymbol

To adapt partitioning methods to the stochastic model, we derive vertex weights from the distributions $X_v$, incorporating a proxy for the variance of the distributions. Our goal is to bound the expected maximum part weight $\E[\max_i \sum_{v\in V_i} X_v]$. To do this, we adapt the methods of Kleinberg et al.\ for the stochastic load-balancing problem, which apply for general stochastic weights~\cite{kleinbergbursty}. This algorithm iteratively divides each distribution $X_v$ into a ``normal part" $N_v = X_{v}\cdot \mathds{1}_{\{X_v\leq 2^i\}}$ and an ``exceptional part"  $S_v=X_v \cdot \mathds{1}_{\{X_v > 2^i\}}$. At each iteration, if the total expected value of the exceptional parts is at most 1, apply \textsc{Dynamic Partition} for a balanced, contiguous partitioning.\footnote{In this setting, when solving the recursion in \textsc{Dynamic Partition} we enforce only the upper constraint on total part weight, not the $(1-\varepsilon)A$ lower constraint. Note that the running time of \textsc{Dynamic Partition} is still polynomial.} We use the transformation transformation $\beta_{1/k}(N_v)= \log \E[k^{N_v}]/\log k$ to obtain the vertex weights for all $v\in V$. This yields a contiguous partition, with $\E[\max_i \sum_{v\in V_i} X_v] = O(A)$. A full description is given in Algorithm~\ref{alg:stochastic_load_balancing}.

\begin{algorithm}[t!]\footnotesize
\caption{(\textsc{Stochastic}) {Stochastic Load Balanced Partitioning}}
 \begin{algorithmic}[1]
\State \textbf{Input:} Graph $G=(V,E)$, with independent random weights $X_v$ ($X_v\geq 0$, $v\in V$), number of parts $k$. %$\mathcal{P}=\{V_1, \dots, V_k\}$ where each part $V_i$ has weight $w(V_i)\leq 18$ ($i\in[k]$).

\State \textbf{Output:} Partition $\{V_1, \hdots, V_k\}$ s.t. $\E[ \max_{1\leq i\leq k} \sum_{v\in V_i} X_v]$ is an $O(1)$ approximation.

%\State Compute minimum $i\geq 0$, such that:$\sum_{v\in V} \E[ X_v\cdot \mathds{1}_{\{ X_v> 2^i\}}] \leq 2^i$
%$\sum_{v\in V} \log \E[2]/\log k \leq 17k$

\State Define $N_{v,i} = X_{v}\cdot \mathds{1}_{\{X_v\leq 2^i\}}$, $S_{v,i} = X_v \cdot \mathds{1}_{\{X_v > 2^i\}}$ for $i \in \mathbb{N}$. 

\State Let $w_{v,i} \leftarrow \beta_{1/k}(N_{v,i}/2^i)= \log \E[k^{N_{v,i}/2^i}]/\log k$, $\varepsilon_i = (18k-\sum_{v}w_{v,i})/\sum_{v}w_{v,i}$.

\State $i^* = \min \{i: \sum_{v\in V} \E[S_{v,i}/2^i] \leq 1$ and {\sc Dynamic Partition}$ (G, k, w_{v,i} (v\in V) ,\varepsilon_{i})$ is feasible.\}

\State Return {\sc Dynamic Partition}$ (G, k, w_{v,i^*} (v\in V) ,\varepsilon_{i^*})$. 
\iffalse 
\State $i=0$;
\While{true}
 \State For all $v$, let $N_{v,i} = Y_{v,i}\cdot \mathds{1}_{\{Y_{v,i}\leq 1\}}$ and $S_{v,i} = Y_{v,i}\cdot \mathds{1}_{\{Y_{v,i}> 1\}}.$
  \If{$\sum_{v\in V} \E[S_{v,i}] \leq 1$}
    \State For all $v$, let $w_{v,i} \leftarrow \beta_{1/k}(N_{v,i})= \log \E[k^{N_{v,i}}]/\log k$.
    \If{$\sum_{v\in V} w_{v,i} \leq 17k$ }
     \State \Return{The partition $\mathcal{P}$ returned by running {\sc Dynamic Partition} with input $G$, $\{w_{v,i}:v\in V\}$, and $k$.}  
    \EndIf
  \EndIf
 \State $i\leftarrow i+1$;
 \State For all $v$, update $Y_{v,i}\leftarrow\frac{Y_{v,i-1}}{2}$\;
\EndWhile
\fi 
\end{algorithmic}\label{alg:stochastic_load_balancing}
\end{algorithm}

In general, the partition produced by {\sc Stochastic Partition} will inherit any graph-theoretic guarantees on partitions produced by {\sc Dynamic Partition}. If $G$ has a Hamiltonian path, using that path as input to {\sc Dynamic Partition} will produce a contiguous partition. While most graphs produced from maps, including rectangular grid graphs, will have Hamiltonian paths, in general such an guarantee may not exist if $G$ does not have a Hamiltonian path. \footnote{For instance, if $G$ is the spider graph on 28 vertices composed of three paths of length 10 which share one end vertex, then $G$ has no Hamiltonian path~\cite{Diestel}.} Indeed, there is no partition of $G$ into two contiguous parts, each with no more than 18 vertices. We show that {\sc Stochastic Partition} is an $O(1)$-approximation algorithm for the expected maximum part size of the partition. We include a discussion of why stochastic load balancing results carry over to the stochastic partition \cite{kleinbergbursty}, in Appendix \ref{app:proofs}.  

\begin{theorem}[Stochastic weights: General]\label{thm:load_balancing}
Let $G=(V,E)$ be a graph with an independent nonnegative random variable $X_v$ for each vertex $v\in V$, a desired number of parts $k$. Then {\sc Stochastic Partition} returns a partition $\mathcal{P}$ such that $\E[ \max_{1\leq i\leq k} \sum_{v\in V_i} X_v]= O(1)\cdot \OPT_S,$ where $\OPT_S$ is the minimum of $\E[ \max_{1\leq i\leq k} \sum_{v\in B_i} X_v]$ over all partitions of $V$ into $k$ parts $B_1,\dots,B_k$.
\end{theorem}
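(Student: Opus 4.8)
The plan is to reduce the problem to the stochastic load-balancing result of Kleinberg, Rabani, and Tardos~\cite{kleinbergbursty}, using \textsc{Dynamic Partition} as the combinatorial oracle in place of an unconstrained assignment step. Recall the structure of their argument: one first handles the ``exceptional'' parts $S_{v,i} = X_v \cdot \mathds{1}_{\{X_v > 2^i\}}$ separately, observing that once the scale $2^i$ is large enough that $\sum_v \E[S_{v,i}/2^i] \leq 1$, the total contribution of exceptional parts to \emph{any} partition is $O(2^i)$ in expectation (a single crude union/Markov bound suffices, since the exceptional mass is at most one ``unit'' of size $2^i$). So the whole burden falls on the ``normal'' parts $N_{v,i}$, which are bounded random variables in $[0, 2^i]$, and for bounded variables the effective objective $\E[\max_i \sum_{v \in V_i} N_v]$ is captured up to constants by the entropic/moment surrogate weights $w_{v,i} = \beta_{1/k}(N_{v,i}/2^i) = \log \E[k^{N_{v,i}/2^i}] / \log k$. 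The key inequalities from~\cite{kleinbergbursty} are: (a) $\E[\max_i \sum_{v\in V_i} N_{v,i}] = \Theta(2^i) \cdot \max_i \sum_{v \in V_i} w_{v,i}$ up to an additive $O(2^i)$, i.e.\ controlling the max-load of the normal parts is equivalent to controlling the max \emph{sum of surrogate weights} over the parts; and (b) there is a choice of target scale $i^*$ — the smallest $i$ with $\sum_v \E[S_{v,i}/2^i]\leq 1$ — at which $2^{i^*} = O(\OPT_S)$ and $\sum_v w_{v,i^*} = O(k)$.

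First I would establish, for the chosen scale $i^*$, the two structural facts: $2^{i^*} = O(\OPT_S)$ and $\sum_{v\in V} w_{v,i^*} = O(k)$. The first holds because if $2^{i^*}$ were much larger than $\OPT_S$ then already at a smaller scale the exceptional mass would be negligible (some vertex would need $\Pr[X_v > 2^{i^*}]$ large enough to force $\E[\max] $ up), contradicting minimality of $i^*$; this is exactly Lemma-style bookkeeping from~\cite{kleinbergbursty} and I would cite it. The second fact, $\sum_v w_{v,i^*} = O(k)$, follows from $\sum_v \beta_{1/k}(N_{v,i^*}/2^i) = O(k)$ whenever $\sum_v \E[N_{v,i^*}/2^i] = O(k)$ and each summand is in $[0,1]$ — again a property of $\beta$ proved in their paper. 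Because $\sum_v w_{v,i^*} \leq 18k$ (this is what the definition of $\varepsilon_{i^*}$ and the feasibility check in line 5 of Algorithm~\ref{alg:stochastic_load_balancing} enforce), \textsc{Dynamic Partition} with balance parameter $\varepsilon_{i^*}$ is run with an average target $A = \tfrac1k\sum_v w_{v,i^*}$ and a one-sided slack chosen so that the all-constant/near-uniform weights admit a consistent $k$-partition; here I would invoke Theorem~\ref{thm:dynamic}, which guarantees that \textsc{Dynamic Partition} returns the min-cut consistent partition whose parts each have surrogate weight at most $(1+\varepsilon_{i^*})A = 18k/k \cdot (1) = O(1)$ — crucially, a \emph{constant} upper bound on $\sum_{v\in V_i} w_{v,i^*}$ for every part. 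Feasibility of this call (existence of \emph{some} such consistent partition) is where the one-sided relaxation in the footnote matters: dropping the lower balance constraint makes the greedy prefix-partition trivially feasible as long as no single vertex has weight exceeding $18k/k$... which is automatic since $w_{v,i^*} \le 1$.

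With the partition $\mathcal{P} = \{V_1,\dots,V_k\}$ in hand, the final step is to bound $\E[\max_i \sum_{v\in V_i} X_v]$. Split $X_v = N_{v,i^*} + S_{v,i^*}$ and use $\max_i (a_i + b_i) \le \max_i a_i + \sum_i b_i$ pointwise: $\E[\max_i \sum_{v\in V_i} X_v] \le \E[\max_i \sum_{v\in V_i} N_{v,i^*}] + \E[\sum_{v\in V} S_{v,i^*}]$. The second term is at most $2^{i^*} \cdot \sum_v \E[S_{v,i^*}/2^{i^*}] \le 2^{i^*} = O(\OPT_S)$. For the first term, apply inequality (a) above with the partition's guarantee $\max_i \sum_{v\in V_i} w_{v,i^*} = O(1)$, giving $\E[\max_i \sum_{v\in V_i} N_{v,i^*}] = O(2^{i^*}) \cdot O(1) + O(2^{i^*}) = O(2^{i^*}) = O(\OPT_S)$. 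Summing the two contributions yields $\E[\max_i \sum_{v\in V_i}X_v] = O(\OPT_S)$, which is the claim.

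I expect the main obstacle to be verifying that replacing the unconstrained load-balancing assignment in~\cite{kleinbergbursty} by a \emph{contiguity-constrained} \textsc{Dynamic Partition} does not break their analysis. Their argument only needs the assignment step to produce, for surrogate weights summing to $O(k)$, a $k$-partition in which every part has surrogate weight $O(1)$; it does not care \emph{how} the partition is obtained. So the real content is (i) checking that \textsc{Dynamic Partition} with the stated $\varepsilon_{i^*}$ is guaranteed to return \emph{some} feasible partition (handled by the one-sided relaxation and $w_{v,i^*}\le 1$), and (ii) checking that the per-part surrogate-weight bound it guarantees — $(1+\varepsilon_{i^*}) A \le 18$ — is a constant independent of $k$ and $n$. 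Both are straightforward given Theorem~\ref{thm:dynamic}, but they are the points I would write out carefully; everything else is a black-box invocation of the bursty-load-balancing machinery, which I would state as a cited lemma rather than reprove.
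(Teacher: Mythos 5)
Your overall strategy matches the paper's: handle the exceptional parts $S_{v,i^*}$ by their total expectation, use the $\beta_{1/k}$ surrogate weights to control the normal parts, and argue that the per-part cap of $18$ enforced by $\varepsilon_{i^*}$ gives the $O(2^{i^*})$ upper bound. That half of your argument is sound and is exactly how the paper proceeds (via Lemma 3.7 of Kleinberg et al., which applies to \emph{any} assignment meeting the two conditions, not just greedy bin-packing).

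There is, however, a genuine gap in your lower bound. You assert that feasibility of \textsc{Dynamic Partition} at scale $i^*$ is ``automatic since $w_{v,i^*}\le 1$,'' and you define $i^*$ purely by the exceptional-mass condition. Neither is right: the algorithm's $i^*$ is the smallest $i$ at which the exceptional condition holds \emph{and} \textsc{Dynamic Partition} is feasible, and feasibility genuinely fails whenever the total surrogate weight exceeds roughly $18k$ (by pigeonhole no partition into $k$ parts of weight at most $18$ can exist, consistent or not). Consequently $2^{i^*}$ can be strictly larger than the scale at which the exceptional mass first drops below $1$, and your claim $2^{i^*}=O(\OPT_S)$ is unsupported. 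The missing ingredient is the paper's Lemma~\ref{lem:stochasticlowerbound}: if no \emph{consistent} partition of the ordered vertices into $k$ parts each of weight at most $C$ exists, then $\sum_v w(v)\ge (C-1)k$. In the unconstrained setting of Kleinberg et al.\ this is immediate from the failure of greedy bin-packing (every bin already holds weight $>17$); in the contiguity-constrained setting it requires a separate exchange argument along the ordering, which is the one genuinely new piece of the paper's proof. With that lemma, infeasibility at scale $i^*-1$ yields $\sum_v w_{v,i^*-1}\ge 17k$, Lemma 3.4 of Kleinberg et al.\ then gives $\OPT=\Omega(2^{i^*-1})$, and only then does $2^{i^*}=O(\OPT_S)$ follow. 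Your proposal, as written, skips this step entirely, and the ``black-box invocation'' you propose would not close it.
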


\noindent 

\begin{comment}
\begin{figure}[!t]
\centering
  \includegraphics[width=.5\linewidth]{}
    \caption{A graph on 28 vertices with no Hamiltonian path and no partition into two contiguous parts, each with no more than 18 vertices.}
    \label{fig:spidergraph}
\end{figure}
\end{comment} 
%\input{annealing}

\section{Computational Results}\label{sec:computations}

In this section, we examine the performance of our dynamic and stochastic graph partitioning algorithms. We test synthetic settings and real-world instances for the fire and police departments of South Fulton, Georgia. For the instance sizes we consider in this work, traditional approaches using mixed-integer formulations cannot scale even with striping based warm-starts. For example, we found that the full-size South Fulton districting instances were computationally intractable, with little improvement found by the IP even after 20 hours of run time. We simplified the South Fulton police model by increasing hexagonal grid diameter to approximately 1 mile, resulting in a graph with 169 vertices. Running {\sc Dynamic Partition} on this instance with $k=7$ parts and balance $\varepsilon = 0.1$ produced a partition with 191 cut edges. Using this partition as an initialization, after 15 hours the IP had a gap of 40.1\% between its lower bound and best incumbent result (a partition with 188 cut edges.) Instead, we consider the striping approaches in combination with simulated annealing to get tractable experimental results. In addition, we compare the results to benchmarks obtained using $k$-means partitioning, given the planarity of our instances.

\subsection{Comparative Methods: Simulated Annealing}
\label{sec:annealing} 

% Simulated Annealing, MIP
%\subsection{Simulated Annealing}
% \subsubsection{Simulated Annealing}

% \ch{move to computations, focus on new parts (combinatorial neighborhoods) and as benchmark for comparison for computations in appendix}
% \ch{shouldn't this precede the synthetic computations?}

Mathematical programming models \cite{sauglam2006mixed, ding2015mixed, yang2015milp} are essential tools for modeling and solving combinatorial optimization problems in numerous fields, which can guarantee the optimality of the obtained solutions, are mostly based on mixed-integer linear programming (MILP). However, when the problem involves a large number of variables, MILP can become very expensive to solve. A metaheuristic method, simulated annealing (see, e.g., \cite{bertsimas1993simulated}), has been widely adopted in solving the combinatorial optimization problem. The simulated annealing algorithm explores the neighborhood of the current solution and decides a better substitution randomly. Simulated annealing can achieve reasonable performance in practice for various settings, although there are very limited theoretical performance guarantees \cite{aarts1987simulated, van1987simulated,lecchini2008simulated}. 
In particular, in our setting, we use the current/existing partition as an initial solution. Based on this, a new solution can be founded by selecting from a set of candidate solutions. The set of candidate solutions is typically constructed as ``neighboring'' solutions to the current solution without breaking contiguity.

Specifically, in the $n$-th iteration, our simulated annealing algorithm performs the following acceptance-rejection sampling. Suppose the starting partition is $\mathcal{P}_n$. For instance, we can take the existing partition as an initialization. The next partition $\mathcal{P}_{n+1}$ is selected from a set of candidate partitions defined as $\mathcal S_{n+1}$ and $\mathcal{P}_{n+1} \in \mathcal{S}_{n+1}$.  
The candidate partitions in $\mathcal S_{n+1}$ satisfy contiguity and balance constraints. We randomly choose one of these candidate partitions $\mathcal{P}_{n+1} \in \mathcal S_{n+1}$, and evaluate a score   % 
\[
    P(\mathcal{P}_{n+1}, \mathcal{P}_n|T) = 
    \begin{cases}
    1, & Z(\mathcal{P}_{n+1}) < Z(\mathcal{P}_n),\\
    \exp\{|Z(\mathcal{P}_{n
    +1}) - Z(\mathcal{P}_n)|/T\}, & \text{otherwise},\\
    \end{cases}
\] 
where $Z(\cdot)$ denotes the cost associated with a partition (e.g., the compactness shown in \eqref{eq:compactness}), $T$ is a pre-specified temperature parameter that determines the speed of convergence, and the $P$ is the acceptance probability. We generate an independent uniform random variable $U \in [0, 1]$. 
The proposed partition is accepted if $P(\mathcal{P}_{n+1}, \mathcal{P}_n|T) \ge U$. We refer to an update of the proposed partition as a \emph{transition}. Note that there is a chance that the transition happens from a ``low-cost'' partition to a ``high-cost'' partition, and this ``perturbation'' will prevent the algorithm from being trapped at a local sub-optimal solution. The choice of the set candidate partitions $\mathcal S_{n+1}$ is critical for the performance of simulated annealing, which involves the trade-off between exploration and exploitation. We next discuss two strategies for creating candidate partitions based on square and hexagonal grids.
\begin{figure}[t]
\centering
% {\footnotesize
\begin{subfigure}[h]{0.47\linewidth}
\includegraphics[width=\linewidth]{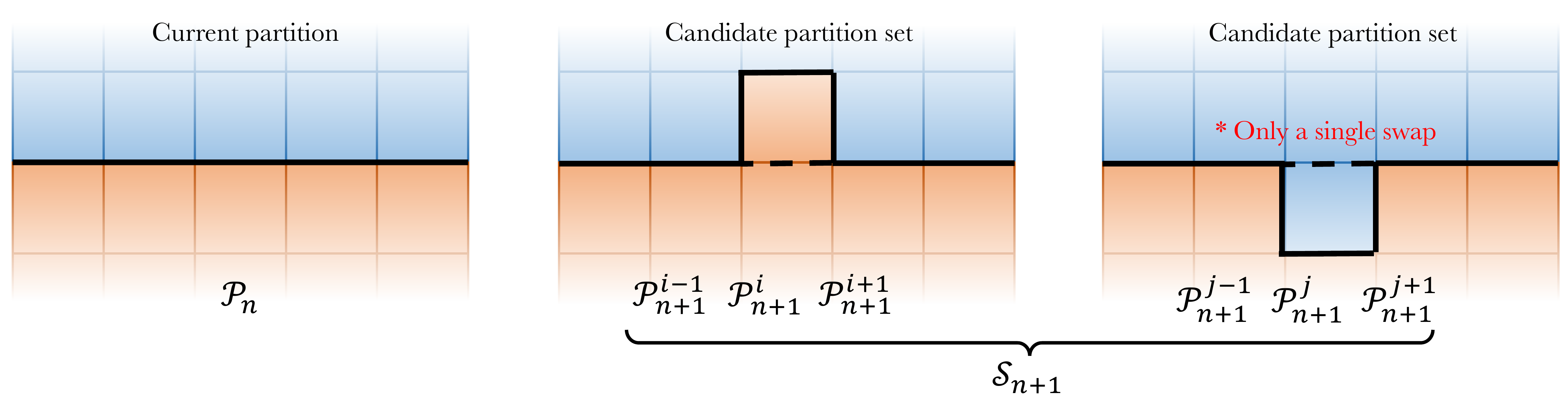}
\caption{One-swapping square neighborhoods}
\end{subfigure}
\hfill
\begin{subfigure}[h]{0.47\linewidth}
\includegraphics[width=\linewidth]{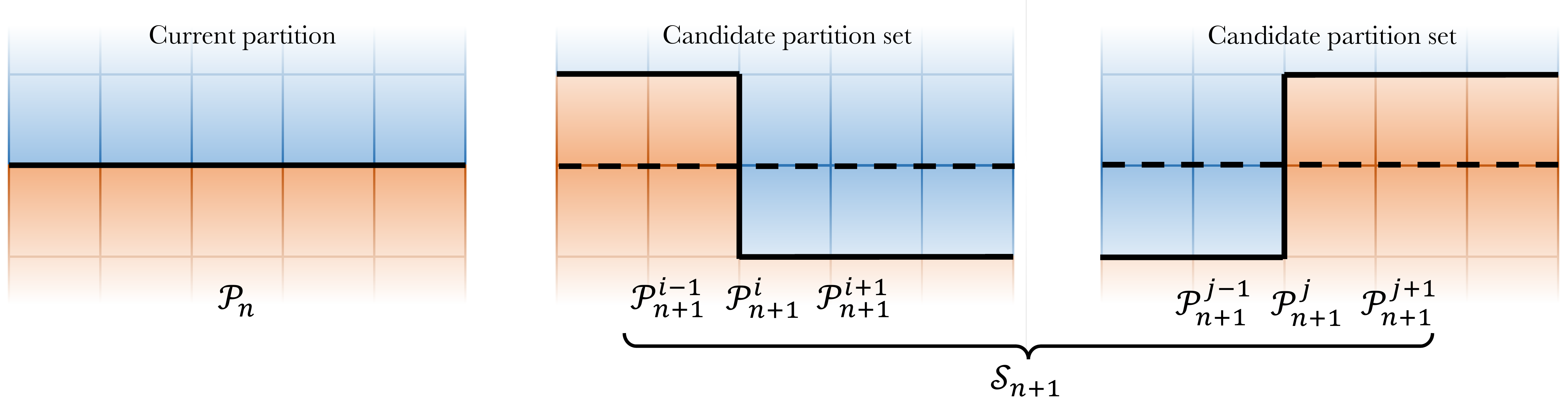}
\caption{Combinatorial square neighborhoods}
\end{subfigure}
\vfill
\vspace{.1in}
\begin{subfigure}[h]{0.47\linewidth}
\includegraphics[width=\linewidth]{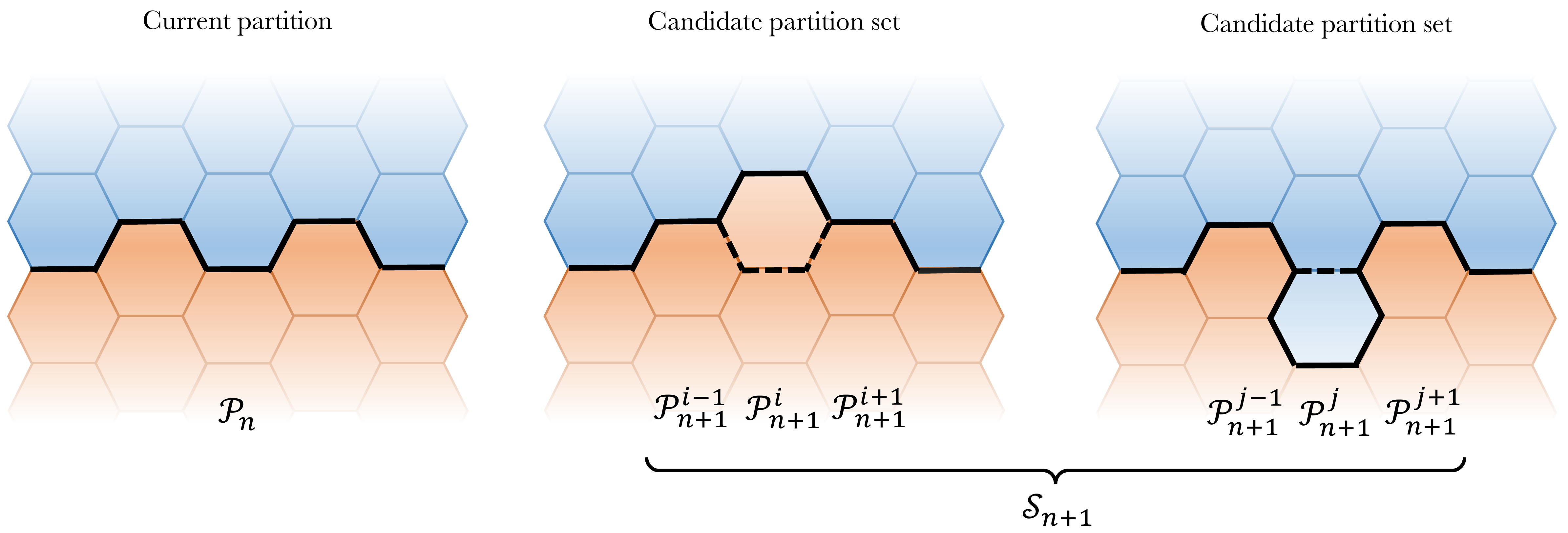}
\caption{One-swapping hexagonal neighborhoods}
\end{subfigure}
\hfill
\begin{subfigure}[h]{0.47\linewidth}
\includegraphics[width=\linewidth]{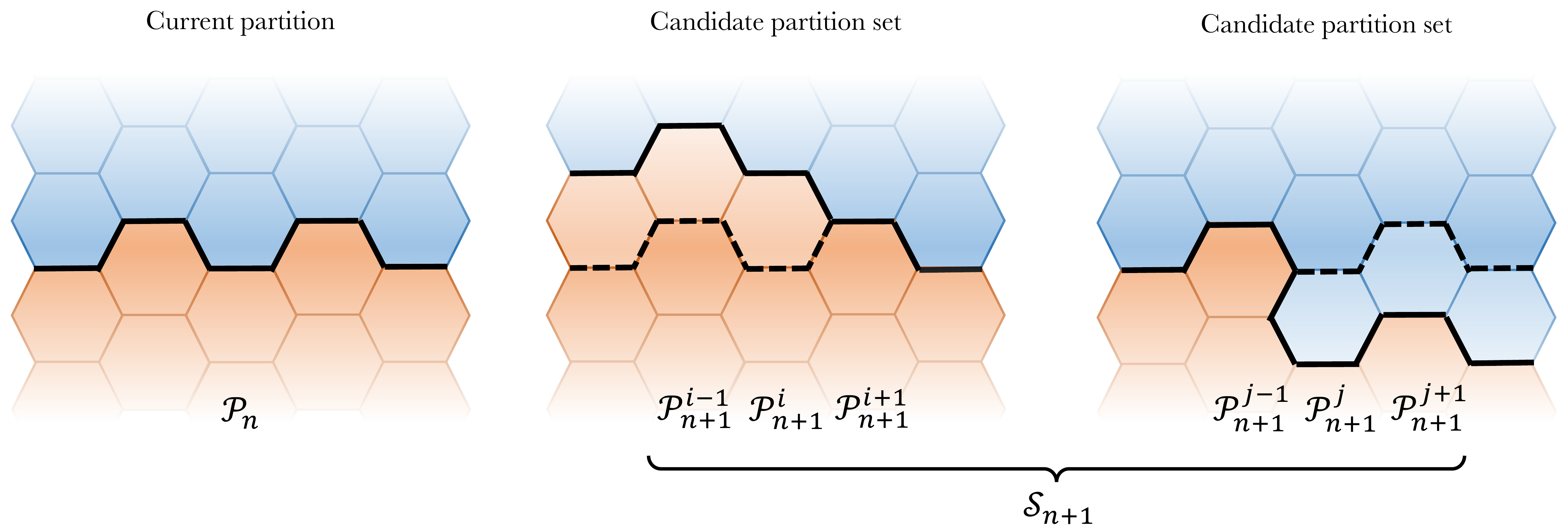}
\caption{Combinatorial hexagonal neighborhoods}
\end{subfigure}
\caption{Illustration of two approaches for candidate partitions based on one-swapping neighborhoods and combinatorial neighborhoods, respectively. Red and blue boxes represent vertices in different parts. The thick black line represents the boundary of two parts. In each subfigure, their left panel shows the current partitions $\mathcal{P}_n$; the middle and the right panels show the candidate partition sets for the next iteration $\mathcal{S}_{n+1}$.}
\label{fig:sa-demo}
\end{figure}

\emph{One-swapping neighborhoods.}
As illustrated in Figure~\ref{fig:sa-demo} (a, c), we first consider the following simple heuristic in constructing the candidate set. This allows us to search for local optimal partitions at a reasonable computational cost~\cite{Wang2013}. The candidate set contains all partitions that swap a single vertex assignment at the boundary of the current partition. This simple heuristic is easy to implement since the number of such candidate partitions is usually small (because we only swap one end of the boundary edge). However, such candidate sets may contain partitions that are still too similar to the current partition. Therefore, we will consider the following alternative strategy.

\emph{Combinatorial neighborhoods.}
Consider the following combinatorial approach to construct a larger candidate set than the previous approach: $\mathcal S_{n+1}$ contains partitions with multiple swappings of vertices assignments of the current configuration, as illustrated in Figure~\ref{fig:sa-demo} (b, d). This combinatorial strategy is more efficient in search of compact partitions than the one-swapping strategy,  as shown in Section~\ref{sec:computations}. 

Moreover, we note that the hexagonal grid results in better performances than the square grid in practice. This can be explained by the fact that each vertex has six neighbors in the hexagonal grid, permitting more ``exploration'' for the simulated annealing algorithm than in the square grid. Furthermore, candidate swaps between two parts with a straight border are more likely to decrease the number of cut edges on a hexagonal grid than on a square grid.

\subsection{Weighted Planar Graphs}
\label{sec:generalcomputations}

In this section, we demonstrate the benefits of the striping method {\sc Dynamic Partition}. Consider a $100\times 100$ hexagonal grid. The distribution of weights on the grid, shown in Figure~\ref{fig:dynamic_example_weights}, are generated as follows. First, all vertices $v\in V$ are given a random weight $w_v$ sampled independently from  $U[0,1]\cdot b_v$, where $U[0,1]$ denotes the uniform distribution on $[0,1]$ and $b_v$ is a Bernoulli random variable with parameter 0.02. Next, we apply the following Gaussian kernel smoothing ~\cite{wasserman2006all} forty times to interpolate the weights on the grid:
\[
w(i,j)\leftarrow \left(\frac{\sum_{k=i-3}^{i+3}\sum_{\ell=j-3}^{j+3}\exp\left(-\sqrt{|i-k|^2+|j-\ell|^2}\right)\cdot w(k,\ell)}{\sum_{k=i-3}^{i+3}\sum_{\ell=j-3}^{j+3}\exp\left(-\sqrt{|i-k|^2+|j-\ell|^2}\right)} \right).
\]
%where each vertex receives weight proportional to those around it with an exponential decay factor based on distance, was applied 40 times~\cite{wasserman2006all}. 

Given these weights, we partition the graph into $k=100$ parts with balance parameter $\varepsilon=.05$, and a Hamiltonian path constructed using the striping method with strip height $s=10$. The {\sc Dynamic Partition} gives a partition depicted in Figure~\ref{fig:dynamic_example_weights}, with each part represented with a different color. Note that the partition is contiguous, balanced, and compact, and the total number of cut edges is 0.5\% greater than the compact but unbalanced partition {where each part is a $10\times10$ square of vertices.} Subsequently, we use simulated annealing with combinatorial neighborhoods search, initialized with the striping partition. The preset temperature is 0.5. The resulting compactness and running time metrics of these partitions are summarized in Table~\ref{tab:computations-artificial}. Simulated annealing improved the number of cut edges from 3601 to 3570 and from 3528 to 3498 with $\varepsilon = 0.02$ and $\varepsilon = 0.05$, respectively. On a single machine with an Intel i7-4770 processor, for $\varepsilon = 0.02$ the running times for {\sc Dynamic Partition} and simulated annealing were 27 minutes and two hours, 25 minutes, respectively; and 40 minutes  and 59 minutes for $\varepsilon = 0.05$, respectively. Simulated annealing does not significantly change the balance, with all partitions having some parts close to the limits $(1-\varepsilon)A$ and $(1+\varepsilon)A$. 
\begin{figure}
    \includegraphics[width=.99\textwidth]{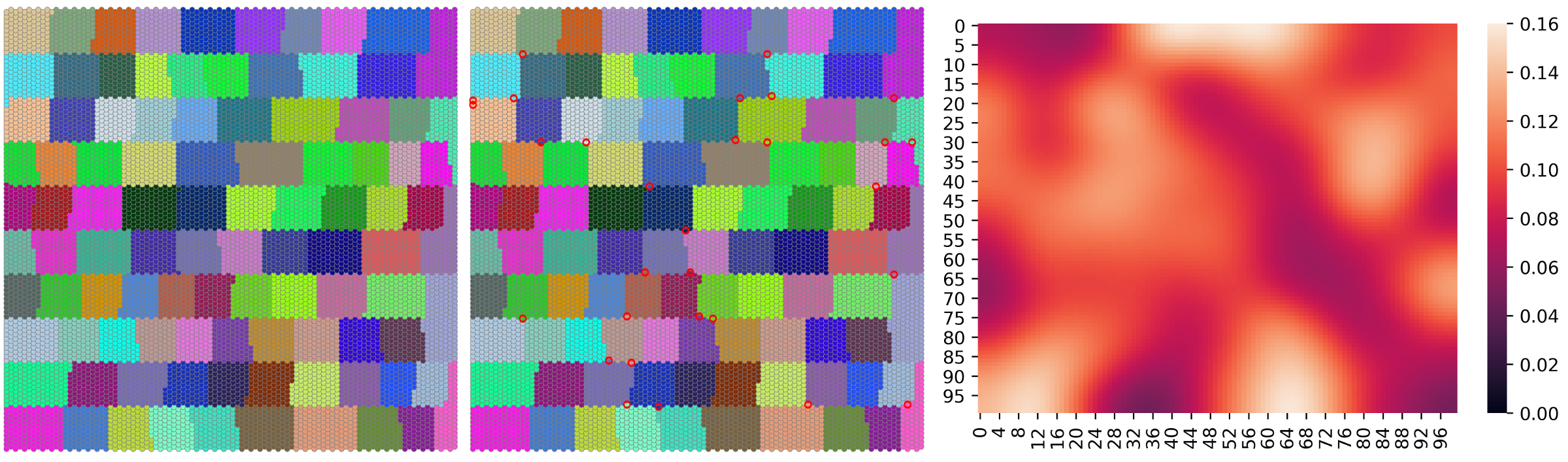} %\hspace{0.3cm}
    \caption{A contiguous, balanced, compact partition of the $100\times100$ hexagonal grid into 100 parts with $\varepsilon = .05$ using simulated annealing based on combinatorial neighborhoods.}
    \label{fig:dynamic_example_weights}
\end{figure}

%\subsection{Adaptations for Stochastic Weights}

%\sg{Changes for the stochastic setting with smooth weights assumption.}

\subsection{Stochastic Weighted Grid Graphs}\label{sec:synthetic-stochastic}

The next synthetic example demonstrates the effectiveness of the {\sc Stochastic Partition} algorithm when the vertex weights are drawn from random distributions and how changing the empirical sample size affects performance. On a $60\times 60$ square grid, each vertex $v$ was assigned a generalized weight distribution $G_v = \mathrm{GEV}(\mu,\sigma,\xi)$ with probability distribution function $f_v$, with the location, scale, and shape parameters $\mu,\sigma,\xi$ respectively varying randomly over the grid, as well as an additional random scale parameter $s_v$. For computational efficiency, each $G_v$ was then discretized to a distribution $X_v$, where
\[
\Pr[X_v = i\cdot s_v] = \begin{cases} f_v(i)/\sum_{j=1}^{250} f_v(j), i \in \{1,2,\dots,250\}\\0, \text{otherwise.}\end{cases}
\]

Four types of partitions were computed: true fixed and true stochastic, using {\sc Dynamic Partition} and {\sc Stochastic Partition} respectively with the $X_v$ as input; and empirical fixed and empirical stochastic, using an empirical mean and {\sc Stochastic Partition} (on empirical distributions) respectively. Fourteen samples were taken from each $X_v$, and for $2\leq i \leq 14$, the inputs for the empirical fixed and stochastic partitions used the first $i$ samples from each vertex. Parameters $k=75$ and $\varepsilon=0.1$ were used for all partitions.

To evaluate the performance of the partitions, a sample $w(v)$ was drawn from each true distribution $X_v$ and the normalized maximum part weight
\[
\frac{k \cdot\max_{P_i \in \mathcal{P}} \sum_{v\in P_i} w(v)}{\sum_{v\in V} \E[X_v]}.
\]

 This process was repeated for 100 different sets of $X_v$ and 1000 sets of samples from the true distribution for each such set. Figure~\ref{fig:stochastic_gev} shows the normalized maximum part weight for each sample size and partition type averaged over all $10^6$ corresponding instances, expressed as the percent deviation from the normalized part weight of each instance. We note that {\sc Stochastic Partition} may not always select the number $i^*$ of weight halving steps that produces the best partition. For each set of true distributions and each empirical sample size $2\leq i \leq 14$, {\sc Stochastic Partition} was run with 10 to 13 halving steps. The resulting partition with smallest normalized maximum part weight over all random samples $w(v)$ was then selected, and the average maximum part weight of these partitions is also plotted as the dotted line in Figure~\ref{fig:stochastic_gev}.

 \begin{figure}[!t]
    \centering
        \includegraphics[width=.45\textwidth]{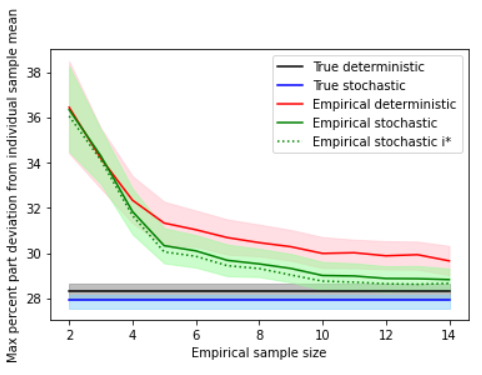}
          \includegraphics[width=0.45\textwidth]{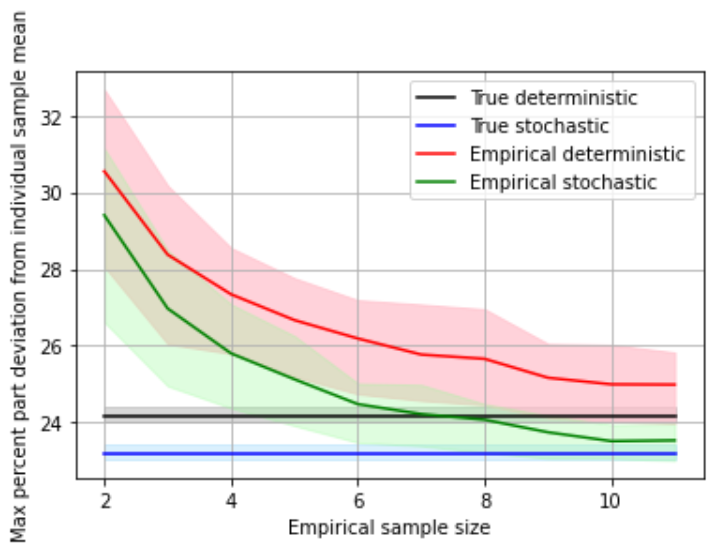}
        \caption{The maximum part weights in partitions created by {\sc Stochastic Partition} (green/black), {\sc Dynamic Partition} (red/blue) with various empirical sample sizes and using the true distributions. (Left) The average maximum part weight in partitions  with each of the 100 sets of distributions on the $60\times 60$ grid, as well as by {\sc Stochastic Partition} using the optimum value of $i^*$. Shaded areas show the 90\% confidence interval of each statistic. (Right) The average maximum part weight in partitions over the 100 sets of distributions on the hexagonal grid over South Fulton,as described in Section~\ref{sec:case_study_police}. Shaded areas show the two middle quartiles of each statistic.}
        \label{fig:stochastic_gev}
\end{figure}
\subsection{Case Study: City of South Fulton}\label{sec:case_study}
The City of South Fulton, Georgia, was established in May 2017 from previously unincorporated land outside Atlanta. It is now the third-largest city in Fulton County, with a population of over 98,000, of which 91.4\% are black or African-American~\cite{UScensusbureau}. In this section, we consider two case studies involving applications of our methods to designing partitions of the city into regions for the South Fulton fire department and police department's operation. The goal in these applications is to obtain balanced workloads in regions while minimizing travel time and improving coverage (in the fire department case study), and reducing over-policing (in the police department case study.) 

These problems have a natural set up for our approach, in which balance, contiguity, and compactness are all necessary for designing a good partition. We discretize the space to cover the city by a hexagonal grid approximately 0.34 miles in diameter for both cases. Each hexagon in the mesh is a vertex in the districting model's grid graph, containing 1236 vertices. Workloads are based on past data from 911 calls and translate to vertex weights in the model. We compare our approach to weighted $k$-means \cite{Kerdprasop2005} as a baseline, where the police workload of each vertex is treated as the weight.

\subsubsection{Fire department workload balancing}\label{sec:case_study_fire}

We include the current station locations and their coverage in Figure~\ref{fig:SF_fire_map} (see Appendix~\ref{append:south-fulton}). Currently, there are ten fire stations in the City of South Fulton. We overlay a hexagonal grid, and each grid point's workload is estimated using the 911 fire call data, as defined as the average time responding to the fire incidents each day. The resulting vertex weights range from an average of 0 to 68.78 minutes per day. {\sc Dynamic Partition} algorithm on the grid graph with $k=15$ and $\varepsilon=0.1$ gives partitions as shown in Figure \ref{fig:fire-data}. We compare our results to partitions created using $k$-means methods to minimize the number of cut edges. We also consider perturbed workloads that are increased by 0.1 and 0.5 on each vertex. This ensures that vertices with no observed calls still have a nonzero associated workload. Table~\ref{tab:computations-fire-data} displays the total number of cut edges and balance for each set of weights. The corresponding partitions are presented in Figure \ref{fig:fire-data-appendix}, Appendix~\ref{append:south-fulton}. These results confirm that our approach can obtain the most compact partitions with balanced workload compared to the baseline method, and SA can further improve the compactness of the generated partitions while maintaining the balance of the workload. In particular, while $k$-means obtained a partition with the minimum cut edges (452), there exist parts with almost twice the amount of workload compared to others (i.e., max deviation of 71.3\%). This was significantly improved by striping (to 9.3\%) using an increased cut of size 690. Further, striping with SA improved the cut by 16\%, while staying within the desired balance.

\begin{figure}[t]
\centering
\begin{subfigure}[h]{0.23\linewidth}
\includegraphics[width=\linewidth]{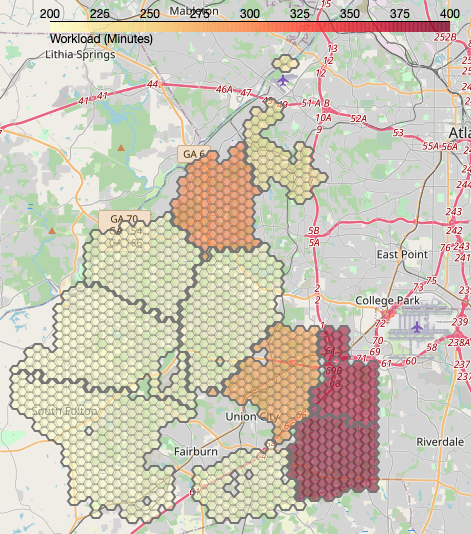}
\caption{$k$-means}
\end{subfigure}
\begin{subfigure}[h]{0.23\linewidth}
\includegraphics[width=\linewidth]{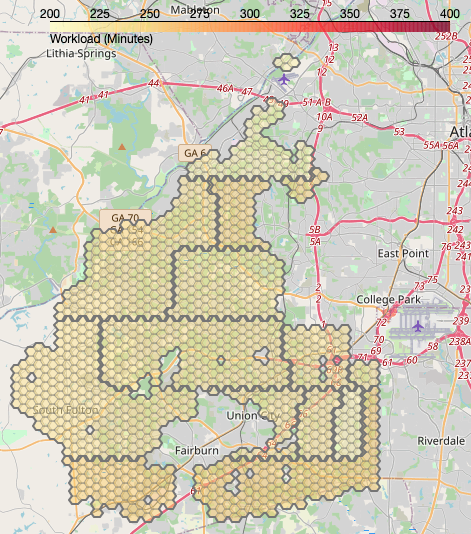}
\caption{Striping}
\end{subfigure}
\begin{subfigure}[h]{0.23\linewidth}
\includegraphics[width=\linewidth]{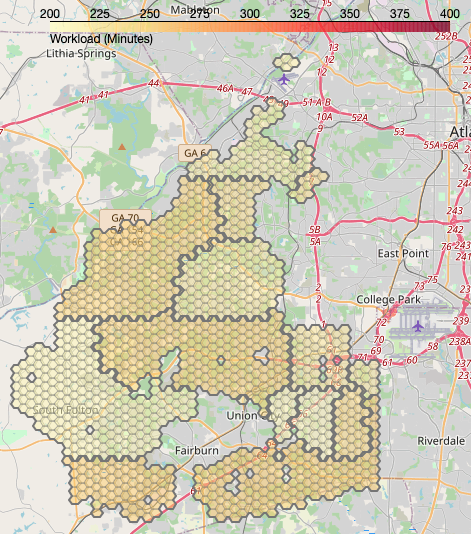}
\caption{Striping+SA}
\end{subfigure}
\caption{Fire station districting results, with deeper colors depicting higher workloads and performance metrics summarized in Table~\ref{tab:computations-fire-data}.}
\label{fig:fire-data}
\centering
\vspace{.15in}
\resizebox{\textwidth}{!}{%
\begin{tabular}{|c|c|c|c|c|c|c|c|c|c|}
\hline
& \begin{tabular}[c]{@{}c@{}}$k$-means\\ (base)\end{tabular} & \begin{tabular}[c]{@{}c@{}}$k$-means\\ (+.1)\end{tabular} & \begin{tabular}[c]{@{}c@{}}$k$-means\\ (+.5)\end{tabular} & \begin{tabular}[c]{@{}c@{}}Striping\\ (base)\end{tabular} & \begin{tabular}[c]{@{}c@{}}Striping\\ (+.1)\end{tabular} & \begin{tabular}[c]{@{}c@{}}Striping\\(+.5)\end{tabular} & \begin{tabular}[c]{@{}c@{}}Striping+SA\\(base)\end{tabular} & \begin{tabular}[c]{@{}c@{}}Striping+SA\\(+.1)\end{tabular} & \begin{tabular}[c]{@{}c@{}}Striping+SA\\(+0.5)\end{tabular}\\ \hline
Total cut edges & {\bf 452} & {\bf 440} & 412 & 690 & 704 & 694 & 596 & 588 & 564 \\ \hline
\begin{tabular}[c]{@{}c@{}}Max deviation\\ from mean\end{tabular} & 71.3\% & 66.9\% & 61.3\% & {\bf 9.3\%} & {\bf 9.2\%} & {\bf 9.8\%} & 9.8\% & 9.9\% &9.9\% \\ \hline
Run time (m:s) & 00:01 & 00:02 & 00:01 & 04:02 & 04:00 & 3:48 & 13:18 & 13:04 & 13:22  \\ \hline
\end{tabular}}
\captionof{table}{Performance k-means, striping, and SA with striping as a warm-start on fire station data. Additive weights of $+0.1, +0.05$ help regularize areas with 0 observations.}
\label{tab:computations-fire-data}
\vspace{-2mm}
\end{figure}

\subsubsection{Police districting and reducing over-policing}\label{sec:case_study_police}

The City of South Fulton observed climbing crime rates and long police response times. Thus, the South Fulton City Council made it clear that their number one priority was to make the city safer \cite{SFStrategicPlan}. This is partly due to the demographic and traffic pattern changes, which create an unbalanced workload among different regions. Figure~\ref{img:South Fulton quantities} in Appendix~\ref{append:south-fulton} shows the distribution of 911 calls, which we estimated from 911-call data provided by SFPD from 2018 to 2019. It is evident from the figure that certain beats faced a significantly higher workload than others.

\begin{figure}[t]
\resizebox{\textwidth}{!}{%
\begin{tabular}{|c|c|c|c|c|c|c|c|c|}
\hline
 & \begin{tabular}[c]{@{}c@{}}$k$-means\\ (7 parts)\end{tabular} & \begin{tabular}[c]{@{}c@{}}$k$-means\\ (15 parts)\end{tabular} & \begin{tabular}[c]{@{}c@{}}Striping\\ (7 parts)\end{tabular} & \begin{tabular}[c]{@{}c@{}}Striping\\ (15 parts)\end{tabular} &  \begin{tabular}[c]{@{}c@{}}Striping+SA\\(.1, 7 parts)\end{tabular} & \begin{tabular}[c]{@{}c@{}}Striping+SA\\(.1, 15 parts)\end{tabular} &  \begin{tabular}[c]{@{}c@{}}Striping+SA\\(.05, 7 parts)\end{tabular} & \begin{tabular}[c]{@{}c@{}}Striping+SA\\(.05, 15 parts)\end{tabular} \\ \hline
Total cut edges & \textbf{338} & \textbf{666} & 548 & 846 & 452 & 720 & 470 & 748 \\ \hline
\begin{tabular}[c]{@{}c@{}}Max deviation\\ from mean\end{tabular} & 118.6\% & 146.2\% & \textbf{1.8\%} & \textbf{4.4\%} & 5.6\% & 9.2\% & 4.1\% & 4.6\% \\ \hline
Run time (m:s) & 00:01 & 00:02 & 01:03 & 01:17 & 15:50 & 13:25 & 11:44 & 12:10 \\ \hline
\end{tabular}}
\captionof{table}{Performance of various partitions for $k=7,15$ police beats in South Fulton City constructed using $k$-means, combinatorial striping (with balance parameter $\varepsilon=0.1$) and combinatorial simulated annealing using the combinatorial striping method as a warm start  (with varying balance parameter $\varepsilon=0.05$ or $0.1$).}
\label{tab:computations-police-deterministic}
\end{figure} 
 
To address over-policing in the police districting problem, we set up the following experiment using our proposed methods. In the hexagonal grid graph, each vertex was assigned a workload, ranging between 0 and 2947885 minutes/year, representing historical data on work done in that zone (see Figure \ref{img:South Fulton quantities}(a) in the appendix). We first applied the deterministic algorithm, {\sc Dynamic Partition}, to create $k=7,15$ parts with balance parameter $\varepsilon=0.1$. This partition with deterministic workloads (depicted in Figure~\ref{fig:police-deterministic}, Appendix~\ref{append:south-fulton}) was balanced and more compact than the city's original plans and we summarize its performance in Table \ref{tab:computations-police-deterministic}. The combinatorial SA with a striping warm start improved the number of cut edges in the warm start by up to 17\% while maintaining strong balance across part weights, even with $\varepsilon=.1$. Note that in SA a small relaxation in balance $\varepsilon=.05$ to $.1$ yields an approximately 25\% greater improvement in compactness.

However, balancing historic workloads is not sufficient when historic data may not have been collected uniformly \cite{lum2016predict}. To reduce over-policing of parts of the South Fulton City, we set up the following stochastic experiment: Given observed workloads $W_v$ for each $v\in V$, as in Section \ref{sec:synthetic-stochastic}, we generated a random GEV distribution $G_v$ for each $v\in V$, and then discretized each $G_v$ to a distribution $X_v$. The additional scaling factor $s_v$ that is used in determining $X_v$ is chosen to be proportional to the observed workload $W_v$. True fixed, true stochastic, empirical fixed, and empirical stochastic partitions were computed as before for varying sizes of empirical samples. Similarly, we generate 1000 sets of random values $w(v)$ for each $v$ and compute the normalized maximum part difference of each partition.

Finally, to evaluate the balance of each partition method, a weighted average of these differences over 100 sets of true distributions $\{X_v\}$. The weighted average of the normalized maximum part differences of each partition was then computed, with the partitions from each set of distributions weighted by the probability $\prod_{v\in V} \Pr[X_v = W_v]$ that the true distributions $\{X_v\}$ yield the observed workloads. The results of this experiment are plotted in Figure~\ref{fig:stochastic_gev}. Note that again {\sc Stochastic Partition} outperforms {\sc Dynamic Partition} with mean weights as input. Indeed, for larger empirical samples, it achieves better balance for larger empirical samples than {\sc Dynamic Partition} using the true means, which demonstrates superior robustness of {\sc Stochastic} to discrepancies between the observed past workload and the true base workload. 

\section{Conclusion}
In this work, we presented the first polynomial time algorithm for creating balanced, compact and contiguous parts of planar grid graphs. Our paper gives a practical method with guarantees for balanced compact districting on cases that might arise in diverse real-world applications. Traditional approaches using integer programs, for e.g., the flow contiguity formulation of Shirabe~\cite{shirabe2009districting}, are not scalable to real-world data sets. On the other hand, our real-world case studies in Section~\ref{sec:case_study} clearly demonstrate that the combinatorial striping algorithm can serve as a good warm-start (computationally efficient and improves performance) for the commonly used simulated annealing approaches. Initial districting designs based on this work were in fact implemented by the City of South Fulton \cite{zhu2020data}, thus exemplifying the importance of bridging stochastic models with provable guarantees for fair division that are meaningful in practice. 

\section*{Acknowledgement}

The work of Shixiang Zhu and Yao Xie is supported by an NSF CAREER Award CCF-1650913, and NSF CMMI-2015787, DMS-1938106, DMS-1830210.

\bibliographystyle{spmpsci}      % mathematics and physical sciences
\bibliography{bibliography.bib}   % name your BibTeX data base

\appendix
\normalsize
\newpage

%\appendix

\section{Omitted Proofs}\label{app:proofs}

\noindent 
\textbf{Proof of Theorem \ref{thm:unweightedstriping}}

\begin{lemma}\label{lem:algebra}
The approximation ratio between $\frac{\frac{a+1}{a} \OPT_{B}  + \phi(a/A+\phi)(m+n) +  2\phi a +\frac{mn}{A}}{\OPT}$ (expression ($*$) in the proof of Theorem~\ref{thm:unweightedstriping}) and $\OPT$ (the minimum number of cut edges in a partition satisfying the conditions of Theorem~\ref{thm:unweightedstriping}) is bounded by 15.25 and approaches 1 asymptotically as $a,m/a,n/a\rightarrow \infty.$
\end{lemma}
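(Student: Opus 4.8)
The plan is to treat the bound as a purely algebraic estimate on the ratio of two explicit expressions in the parameters $m,n,A,a,d$, subject to the constraints $a=\lfloor\sqrt A\rfloor$, $m=da+r$ with $0\le r\le a-1$, $n\ge m\ge a$, and $d,a,\lfloor m/a\rfloor\ge 3$. First I would recall from the proof of Theorem~\ref{thm:unweightedstriping} the lower bound $\OPT\ge \frac{A}{A+1}\cdot\frac{d}{d-2}\,\OPT_B$ where $\OPT_B=\frac{mn}{A}\lceil 2\sqrt A\rceil\ge \frac{mn}{A}\cdot 2\sqrt A=\frac{2mn}{\sqrt A}$, and note $\OPT_B\ge \frac{2mn}{a+1}$ as well (since $\sqrt A< a+1$), so that $\OPT_B$ is comparable to $\frac{mn}{a}$ up to the factor $\frac{a+1}{a}$. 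The numerator ($*$) is $\frac{a+1}{a}\OPT_B+\phi(a/A+\phi)(m+n)+2\phi a+\frac{mn}{A}$; the idea is to divide each of its four summands by $\OPT$ and bound each quotient separately.

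The main work is bounding the dominant term $\frac{(a+1)/a\cdot\OPT_B}{\OPT}$. Using the $\OPT$ lower bound this is at most $\frac{a+1}{a}\cdot\frac{A+1}{A}\cdot\frac{d-2}{d}$, and each of the three factors $\frac{a+1}{a}$, $\frac{A+1}{A}$, $\frac{d-2}{d}$ — wait, the last is $\le 1$ — is controlled: $\frac{a+1}{a}\le \frac43$ when $a\ge 3$, $\frac{A+1}{A}\le \frac{10}{9}$ when $A\ge 9$ (which follows from $a\ge 3$), and $\frac{d}{d-2}\le 3$ when $d\ge 3$. Multiplying gives $\le \frac43\cdot\frac{10}{9}\cdot 3=\frac{40}{9}\approx 4.44$ for this term. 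For the remaining three terms I would use the same $\OPT\ge \frac{A}{A+1}\frac{d}{d-2}\OPT_B\ge \frac{A}{A+1}\frac{d}{d-2}\cdot\frac{2mn}{(a+1)^2}$-type estimates together with $m+n\le 2n$, $m\ge da\ge 3a$, $n\ge m$, and $a\le \sqrt A\le a+1$, to show each of $\frac{\phi(a/A+\phi)(m+n)}{\OPT}$, $\frac{2\phi a}{\OPT}$, $\frac{mn/A}{\OPT}$ is bounded by an explicit constant and tends to $0$ as $m/a,n/a\to\infty$ (the first because $m+n=O(n)$ while $\OPT=\Omega(mn/a)$, hence the ratio is $O(a/m)\to 0$; the third because $\frac{mn/A}{mn/(a\sqrt A)}=\frac{a\sqrt A}{A}=\frac{a}{\sqrt A}\le 1$, so after the $\frac{d}{d-2}$ and $\frac{A+1}{A}$ corrections it is $O(1)$ and in fact $\to 1$, which is the term keeping the limit at $1$; the second because $a/\OPT=O(a^2/(mn))\to 0$). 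Summing the four contributions and plugging in the extremal values $a=3$, $A=9$, $d=3$, $\lfloor m/a\rfloor=3$ yields the worst case, which I would check numerically comes out to at most $15.25$.

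For the asymptotic claim, I would let $a,m/a,n/a\to\infty$: then $\frac{a+1}{a},\frac{A+1}{A}\to 1$ and $\frac{d}{d-2}\to 1$ (since $d=\lfloor m/a\rfloor-O(1)\to\infty$), so the dominant term tends to $1\cdot 1\cdot 1=1$; the first and second extra terms vanish as argued; and the third tends to $\lim \frac{a}{\sqrt A}=1$ — one must be slightly careful here, checking that $\frac{mn/A}{\OPT}$ does not add a second unit but rather is already absorbed, i.e. that $\OPT_B=\frac{mn}{A}\lceil 2\sqrt A\rceil$ dominates $\frac{mn}{A}$ by the factor $\lceil 2\sqrt A\rceil\to\infty$, so in fact $\frac{mn/A}{\OPT}\to 0$ too. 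Hence the limit of ($*$)$/\OPT$ is $1$.

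The main obstacle is purely bookkeeping: keeping track of which corrections ($\frac{a+1}{a}$ from the ceiling in $\lceil 2\sqrt A\rceil$, $\frac{A+1}{A}$ from $k\ge mn/(A+1)$, $\frac{d}{d-2}$ from subtracting the boundary perimeter $2(m+n)$) multiply which terms, and verifying that the extremal constants $a=3$, $d=3$, $A=9$ genuinely maximize the sum — since the function is monotone decreasing in each of $a,A,d$ once they exceed the threshold $3$ (resp. $9$), this reduces to a single numerical evaluation, which I would carry out explicitly to confirm the figure $15.25$. No delicate inequality is needed beyond these monotonicity observations; the only subtlety is making sure the constant obtained is at least as good as the claimed $15.25$ rather than merely "some constant."
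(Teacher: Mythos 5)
Your proposal is correct and follows essentially the same route as the paper: divide $(*)$ by the lower bound $\OPT\geq \frac{A}{A+1}\cdot\frac{d-2}{d}\OPT_B$ with $\OPT_B$ comparable to $mn/a$, bound each summand by an explicit function of $a,A,d,m/a,n/a$, observe that every correction factor tends to $1$ (and every non-dominant term to $0$) asymptotically, and evaluate at the extremal values $a=3$, $A=9$, $d=3$ to get the numerical constant. Your two self-corrections — that the factor multiplying $\OPT_B/\OPT$ is $\frac{A+1}{A}\cdot\frac{d}{d-2}\leq\frac{10}{3}$ rather than something $\leq 1$, and that $\frac{mn/A}{\OPT}=O(1/\lceil 2\sqrt{A}\rceil)\to 0$ so the limit of $1$ comes entirely from the dominant term — both land where the paper's algebra does.
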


\begin{proof}
We have that
\begin{align*}
&\quad \frac{\frac{a+1}{a} \OPT_{B}  + \phi(a/A+\phi)(m+n) +  2\phi a +\frac{mn}{A}}{\OPT} \\
&\leq \frac{A+1}{A}\cdot\frac{d}{d-2} \frac{\frac{a+1}{a} \OPT_{B}  + \phi(a/A+\phi)(m+n) +  2\phi a +\frac{mn}{A}}{\OPT_{B}} \\
&\leq \frac{A+1}{A}\cdot\frac{d}{d-2} \left(\frac{a+1}{a} + \frac{\phi(a/A+\phi)(m+n) +  2\phi a +\frac{mn}{A}}{\OPT_{B}}\right) \\
&\leq \frac{A+1}{A}\cdot\frac{d}{d-2} \left(\frac{a+1}{a} + \frac{\phi(a/A+\phi)(m+n) +  2\phi a +\frac{mn}{A}}{\frac{mn}{a}}\right) \\
&\leq \frac{A+1}{A}\cdot\frac{d}{d-2} \left(\frac{a+1}{a} + \frac{\phi\left(\frac{1}{m}+\frac{1}{n}+\phi\frac{1}{na}+\phi\frac{1}{ma} +2\phi\frac{a^2}{mn}+\frac{a}{A}\right)(m+n) +\frac{a}{A}}{\frac{mn(a+1)}{a^2}}\right) \\
&\leq \frac{A+1}{A}\cdot\frac{d}{d-2} \left(\frac{a+1}{a} + \phi\left(\frac{1}{m}+\frac{1}{n}+\phi\frac{a}{m}+\phi\frac{a}{n} +2\frac{1}{mn}\right) +\frac{a}{A}\right)
\end{align*}
Asymptotically, as $a\rightarrow \infty$ and $m/a, n/a \rightarrow \infty$, this quantity approaches 1. 

Moreover, in all instances where $d\geq 3$, $a\geq 3$, we have that
\begin{align*}
&\quad \frac{A+1}{A}\cdot\frac{d}{d-2} \left(\frac{a+1}{a} + \phi\left(\frac{1}{m}+\frac{1}{n}+\phi\frac{a}{m}+\phi\frac{a}{n} +2\frac{1}{mn}\right) +\frac{a}{A}\right)\\
&\leq \frac{A+1}{A}\cdot\frac{d}{d-2} \left(\frac{4}{3} + \phi\left(\frac{1}{3d}+\frac{1}{3d}+\phi\frac{1}{d}+\phi\frac{1}{d} +2\frac{1}{81}\right) +\frac{1}{3}\right) \\
&\leq  \frac{A+1}{A}\cdot\frac{d}{d-2}\cdot\frac{5+2\phi/27}{3} + \frac{A+1}{A}\cdot\frac{2\phi/3+2\phi^2}{d-2}\\
&\leq  \frac{10}{9}\cdot1\cdot\frac{5+2\phi/27}{3} + \frac{10}{9}\cdot\frac{2\phi/3+2\phi^2}{1}\\
&\leq 15.25.
\end{align*}
\end{proof}

\noindent 
\textbf{Proof of Theorem \ref{thm:exactunweightedstriping}}

To analyze $\phi$-{\sc Cautious Striping} in this case, we first prove a lemma which is used in bounding the perimeter of the parts in the strips of height $a$ and $a+1$.\footnote{Note that this yields a streamlined proof of Theorem 5 of~\cite{christou1996optimal}.} The hypothesis that $k|mn$ allows us to obtain a tighter bound on the approximation ratio than Theorem~\ref{thm:unweightedstriping}.

\begin{lemma}\label{lem:uniformstriping} Let $A\in\mathbb{N}$ and $a=\lfloor\sqrt{A}\rfloor.$ Then if $\sqrt{A}-a>0$, for each $h\in \{a,a+1\}$ we have that
\[
\frac{A}{h} + h \leq \lceil2\sqrt{A}\rceil.
\]
\end{lemma}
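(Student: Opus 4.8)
The plan is to prove the two inequalities $\frac{A}{a}+a\leq\lceil 2\sqrt{A}\rceil$ and $\frac{A}{a+1}+(a+1)\leq\lceil 2\sqrt{A}\rceil$ separately, using the hypothesis $\sqrt{A}-a>0$ (equivalently $A\geq a^2+1$, since $A\in\mathbb{N}$). The underlying idea is that $f(h)=A/h+h$ is minimized at $h=\sqrt{A}$ with value $2\sqrt{A}$, and that $a$ and $a+1$ straddle $\sqrt{A}$, so both $f(a)$ and $f(a+1)$ exceed $2\sqrt{A}$ by only a small amount; the job is to show this excess is small enough that the floor-to-ceiling rounding on the right-hand side absorbs it.

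First I would handle $h=a$. Write $A=a^2+j$ for some integer $j$ with $1\leq j\leq 2a$ (the upper bound $j\leq 2a$ holds because $A<(a+1)^2=a^2+2a+1$). Then $\frac{A}{a}+a=\frac{a^2+j}{a}+a=2a+\frac{j}{a}$. It remains to show $2a+\frac{j}{a}\leq\lceil 2\sqrt{A}\rceil$. Since $A=a^2+j$, we have $2\sqrt{A}=2\sqrt{a^2+j}\geq 2\sqrt{a^2+1}$, and more usefully $2\sqrt{a^2+j}\geq 2a+\frac{j}{a}-\text{(small correction)}$; in fact the cleanest route is to show $\lceil 2\sqrt{A}\rceil\geq 2a+\lceil j/a\rceil$ and then note $\frac{j}{a}\leq\lceil j/a\rceil$. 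To get $\lceil 2\sqrt{A}\rceil\geq 2a+\lceil j/a\rceil$ it suffices to show $2\sqrt{a^2+j}>2a+\lceil j/a\rceil-1$, i.e.\ $4(a^2+j)>(2a+\lceil j/a\rceil-1)^2$, which after expansion reduces to checking $4j>(\lceil j/a\rceil -1)(4a+\lceil j/a\rceil-1)$; since $\lceil j/a\rceil\in\{1,2\}$ for $1\leq j\leq 2a$, this splits into two easy cases ($\lceil j/a\rceil=1$ is immediate, and $\lceil j/a\rceil=2$ means $j>a$ so $4j>4a\geq\,$the right side).

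Next I would handle $h=a+1$, which is the tighter of the two since $f$ is steeper on the left of $\sqrt{A}$ than this evaluation point might suggest — actually $a+1$ may be farther from $\sqrt{A}$ than $a$ is. Here $\frac{A}{a+1}+(a+1)$: using $A=a^2+j$ again, $\frac{a^2+j}{a+1}+(a+1)=\frac{a^2+j+(a+1)^2}{a+1}=\frac{2a^2+2a+1+j}{a+1}=2a+1+\frac{j-a}{a+1}$ (checking: $(a+1)(2a+1)=2a^2+3a+1$, and $2a^2+2a+1+j-(2a^2+3a+1)=j-a$, so indeed the expression is $2a+1+\frac{j-a}{a+1}$). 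When $j\leq a$ this is at most $2a+1\leq\lceil 2\sqrt{A}\rceil$ since $A\geq a^2+1$ forces $2\sqrt{A}>2a$ hence $\lceil 2\sqrt A\rceil\geq 2a+1$. When $a<j\leq 2a$, we need $2a+1+\frac{j-a}{a+1}\leq\lceil 2\sqrt{A}\rceil$, and since $\frac{j-a}{a+1}<1$ it suffices that $\lceil 2\sqrt{A}\rceil\geq 2a+2$, i.e.\ $2\sqrt{a^2+j}>2a+1$, i.e.\ $4(a^2+j)>4a^2+4a+1$, i.e.\ $4j>4a+1$, which holds since $j\geq a+1$ gives $4j\geq 4a+4>4a+1$.

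The main obstacle is purely bookkeeping: keeping the two branches ($\lceil j/a\rceil=1$ vs.\ $2$, and $j\leq a$ vs.\ $j>a$) straight and making sure the strict-versus-nonstrict inequalities line up so that the ceiling on the right genuinely dominates. There is no deep idea; the one place to be careful is the boundary case $j=a$ (where $\lceil j/a\rceil=1$ in the first part and the numerator $j-a=0$ in the second) and ensuring the hypothesis $\sqrt{A}-a>0$ is actually used — it is exactly what rules out $j=0$ and thereby guarantees $\lceil 2\sqrt{A}\rceil\geq 2a+1$, which is the backbone of both halves. I would close by remarking that equality can occur (e.g.\ $A=15$, $a=3$, $h=3$: $15/3+3=8=\lceil 2\sqrt{15}\rceil$), so the bound is tight and no slack was wasted.
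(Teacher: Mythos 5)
Your proof is correct and is essentially the paper's argument in a different parametrization: writing $A=a^2+j$ and splitting on $j\leq a$ versus $j>a$ is exactly the paper's case split on $\sqrt{A}-a<\tfrac12$ versus $\sqrt{A}-a>\tfrac12$ (since $j\leq a$ iff $A\leq a(a+1)$ iff $\sqrt{A}<a+\tfrac12$), and in each of the resulting four cases the key integrality bounds ($A\leq a(a+1)$, $A\leq a(a+2)$, $A<(a+1)^2$) coincide with yours. The only cosmetic difference is that the paper reads off $\lceil 2\sqrt{A}\rceil\in\{2a+1,2a+2\}$ directly in each case rather than squaring to verify $2\sqrt{a^2+j}$ exceeds the relevant integer.
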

\begin{proof} There are four cases dependent on the stripe height and $\sqrt{A}-a$:

1. $h=a$, $0<\sqrt{A}-a <\frac{1}{2}$. Then $\lceil2\sqrt{A}\rceil=2a+1$. Note that $A \leq a\left(a+1\right)$, since $A<\left(a+\frac{1}{2}\right)^2=a\left(a+1\right)+\frac{1}{4}$ and both $a(a+1)$ and $A$ are integers. Therefore $A/a \leq a+1$ and adding $a$ to both sides yields the desired inequality.

2. $h=a$, $\frac{1}{2}<\sqrt{A}-a <1$. Then $\lceil2\sqrt{A}\rceil=2a+2$. Note that $A \leq a\left(a+2\right)$, since $A<\left(a+1\right)^2$ and $a\left( a+2\right)=\left(a+1\right)^2-1$ and $A$ is an integer. Therefore $A/a \leq a+2$ and adding $a$ to both sides yields the desired inequality.

3. $h=a+1$, $0<\sqrt{A}-a < \frac{1}{2}$. Then $\lceil2\sqrt{A}\rceil=2a+1$. As in case 1, $A\leq a\left(a+1\right)$, so $A/\left(a+1\right) \leq a$ and adding $a+1$ to both sides yields the desired inequality.

4. $h=a+1$, $\frac{1}{2}< \sqrt{A}-a <1$. Then $\lceil2\sqrt{A}\rceil=2a+2$. Since $A < \left(a+1\right)^2$, $A/\left(a+1\right) \leq \left(a+1\right)$ and adding $a+1$ to both sides yields the desired inequality.
\end{proof}

\begin{proof}[Proof of Theorem~\ref{thm:exactunweightedstriping}] Balance is immediate by the construction of $\phi$-{\sc Cautious Striping}, and contiguity follows as in the proof of Theorem~\ref{thm:unweightedstriping}.

Recall that the smallest possible perimeter for a region containing $A$ vertices is $2\lceil 2\sqrt{A}\rceil$~\cite{christou1996optimal}.

%By Theorem 5 of~\cite{christou1996optimal}, the amortized relative error of all the parts created using the striping method in {vertical-striping} is less than $\frac{1}{\lceil{2\sqrt{A}\rceil}}.$ \sg{We include their proof here for completeness.}

First, we consider the strips not in $R$. Consider the parts in a strip of height $h$. There are two cases. 

Suppose $a^2=A$. If $h=a$, then all the parts are $a\times a$ squares, which have minimum perimeter. If $h=a+1$, then each part in the strip will fit in a rectangle with $a+1$ rows and $a$ columns, because they cannot occupy more than $a-1$ complete columns and if they do so, cannot contain vertices in more than one other column. Therefore each part has perimeter $4a+2$, or $1+\frac{1}{2a}= 1+\frac{1}{\lceil2\sqrt{A}\rceil}$ times that of the optimum $4a$.

Otherwise, let $p$ denote the number of parts that lie in the strip. The union of these parts is a rectangle with $h$ rows and $\lceil pA/h \rceil$ columns, with up to the bottom $h-1$ vertices of the last column removed. The $p-1$ borders between parts each\ have length at most $h+1$, and each of these borders is shared by two parts. Therefore the total perimeter of the $p$ parts is 
\begin{align*}
2h + 2\lceil pA/h \rceil+ 2(p-1)(h+1) &\leq 2pA/h + 2p(h+1) +2 \\
&\leq 2p(A/h+h+1) + 2 \\
&\leq 2p\left(1+\frac{1}{\lceil2\sqrt{A}\rceil}\right)\lceil2\sqrt{A}\rceil + 2,
\end{align*}
where the last inequality follows by Lemma~\ref{lem:uniformstriping}.%, so since the optimum total perimeter is at least $p\cdot 2\lceil2\sqrt{A}\rceil$, the total perimeter of the parts in the strip is at most $1+\frac{1}{\lceil2\sqrt{A}\rceil}$ times that of the optimum, plus 2.

We next bound the perimeter of each part in $R$. At most $\frac{\lceil \phi \sqrt{A}\rceil}{n}$ proportion of the vertices of $G$ will be in $R$. Let $c_1$ be the number of complete columns remaining in the strips of height $a$ and $c_2$  be the number of complete columns remaining in the strips of height $a+1$ after step 9 of the algorithm. These parts are divided into three types:
\begin{enumerate}
    \item Parts contained in rows $1$ through $s_1a$: These parts fit in a box of width $c_1+1$ and height $\lceil A/c_1\rceil+1$. The right edge of each part is straight, containing at most $\lceil  A/c_i\rceil+1$ vertical edges. The top and bottom edges each contain at most $c_i+1$ horizontal edges and one vertical edge. The left edge contains at most $\lceil A/c_i\rceil+1$ vertical edges and at most four horizontal edges, depending on whether the boundary with the strips of height $h$ is a straight line. Therefore, the total perimeter of a part is at most
    \[
    2c_i+ 2\lceil A/c_i\rceil + 10 \leq 2c_i+ 2\lfloor A/c_i\rfloor + 12.
    \]
    Since $\lfloor (\phi-1)a \rfloor \leq c_i\leq \lceil \phi a \rceil$, the term $2c_i+ 2\lceil A/c_i\rceil$ is maximized when $c_i$ takes one of its two possible extreme values. This maximum value is at most $2\left(\phi+\frac{1}{\phi}\right)\lceil \sqrt{A}\rceil+15$. 

Therefore, the modified striping approach achieves a (asymptotic) constant factor approximation for compactness, where the constant is $\alpha \approx (\phi+\frac{1}{\phi})$. 
    Therefore, the relative error to an optimum single part containing $A$ vertices is at most
    \[
    \frac{2\left(\phi+\frac{1}{\phi}\right)\lceil\sqrt{A}\rceil+O(1) - 2\lceil 2\sqrt{A}\rceil }{2\lceil 2\sqrt{A}\rceil} =   \left(\frac{\sqrt{5}}{2}-1\right)+\frac{O(1)}{\sqrt{A}}.
    \]
    \item At most one part has cells in both row $s_1a$ and $s_1a+1$. This part has all sides of length at most $2\phi a$, for total perimeter $O(\sqrt{A})$ and relative error $O(1)$.
    \item Parts contained in rows $s_1a+1$ through $m$: by the argument for the parts contained in rows $1$ through $s_1a$, these parts each have relative error $\left(\frac{\sqrt{5}}{2}-1\right)+\frac{O(1)}{\sqrt{A}}$ also.
\end{enumerate}
Averaging over all parts, we have that the total relative error is at most
\begin{align*}
&\frac{1}{\lceil2 \sqrt{A}\rceil} \left(\frac{n-|R|}{n}\right)+\left(\frac{\sqrt{5}}{2}-1+\frac{O(1)}{\sqrt{A}}\right)\left(\frac{|R|}{n}\right) + \frac{O(1)}{n}\\
&\leq
\frac{1}{\lceil2 \sqrt{A}\rceil}+\frac{\lceil \phi \sqrt{A}\rceil \cdot\left(\frac{\sqrt{5}}{2}-1\right)+O(1)}{n}.
\end{align*}
\end{proof}

\noindent 
\textbf{Proof of Theorem \ref{thm:variance_bound}}

\begin{proof} By the triangle inequality, we have that 
\begin{align*}
\E\left[\sum_{i=1}^k \left(\sum_{v\in V_i} X_v - A\right)^2\right] \leq \E\left[\sum_{i=1}^k \left(\sum_{v\in V_i} X_v - \sum_{v\in V_i} \mu(v)\right)^2\right]\\
\hspace{-2.2cm}+ \left| \E \left[\sum_{i=1}^k \left(\sum_{v\in V_i} X_v - A\right)^2 - \sum_{i=1}^k \left(\sum_{v\in V_i} X_v - \sum_{v\in V_i} \mu(v)\right)^2   \right] \right|.
\end{align*}
We bound each of the terms on the right hand side of this equation separately.

In $\E\left[\sum_{i=1}^k \left(\sum_{v\in V_i} X_v - \sum_{v\in V_i} \mu(v)\right)^2\right]$, note that each of the summands $\left(\sum_{v\in V_i} X_v - \sum_{v\in V_i} \mu(v)\right)^2$ is the variance of the distribution $\sum_{v\in V_i} X_v$. This distribution is the sum of the distributions $X_v$ and is hence a distribution with mean $\sum_{v\in V_i} \mu(v)$ and variance $c\sum_{v\in V_i} \mu(v)$. Therefore, 
\[
\E\left[ \left(\sum_{v\in V_i} X_v - \sum_{v\in V_i} \mu(v)\right)^2\right] = c\sum_{v\in V_i} \mu(v),
\]
and summing over all parts $V_i$ gives 

$$\E\left[\sum_{i=1}^k \left(\sum_{v\in V_i} X_v - \sum_{v\in V_i} \mu(v)\right)^2\right] \leq c\sum_{v\in V} \mu(v) = ckA.$$

Second, we have that
\begin{align*}
    &\quad \left| \E\left[\sum_{i=1}^k \left(\sum_{v\in V_i} X_v - A\right)^2 - \sum_{i=1}^k \left(\sum_{v\in V_i} X_v - \sum_{v\in V_i} \mu(v)\right)^2   \right] \right| \\
    & = \E\left[\left| \sum_{i=1}^k \left(\sum_{v\in V_i} X_v - A\right)^2 - \sum_{i=1}^k \left(\sum_{v\in V_i} X_v - \sum_{v\in V_i} \mu(v)\right)^2  \right| \right] \\
    & \leq \E\left[\sum_{i=1}^k \left| \left(\sum_{v\in V_i} X_v - A\right)^2 - \left(\sum_{v\in V_i} X_v - \sum_{v\in V_i} \mu(v)\right)^2  \right| \right] \\
    & \leq  \E\left[\sum_{i=1}^k \left| 2\sum_{v\in V_i} X_v \left(\sum_{v\in V_i} \mu(v)- A\right) + A^2 - \left(\sum_{v\in V_i} X_v\right)^2  \right| \right]\\
    & \leq  \E\left[\sum_{i=1}^k 2\sum_{v\in V_i}X_v \left|  \sum_{v\in V_i} \mu(v)- A\right| + \left|A^2 - \left(\sum_{v\in V_i} X_v\right)^2  \right| \right]\\
    & \leq  \E\left[\sum_{i=1}^k 2\sum_{v\in V_i}X_v \varepsilon A + (2\varepsilon+\varepsilon^2) A^2 \right]\\
    & =  \sum_{i=1}^k \left(2\sum_{v\in V_i}\E [X_v] \varepsilon A + (2\varepsilon+\varepsilon^2) A^2\right)\\
    & = \sum_{i=1}^k \left(2\sum_{v\in V_i}\mu(v) \varepsilon A + (2\varepsilon+\varepsilon^2) A^2\right) \\
    & = 2\sum_{v\in V} \mu(v) \varepsilon A + k(2\varepsilon+\varepsilon^2) A^2 \\
    & = (4\varepsilon + \varepsilon^2) k A^2.
\end{align*}

Therefore $\E\left[\sum_{i=1}^k \left(\sum_{v\in V_i} X_v - A\right)^2\right] \leq (4\varepsilon + \varepsilon^2) k A^2 + ckA$ as desired.
\end{proof}

%\ch{Lower bound will go here, this lemma supplies the necessary condition for Kleinberg et al.'s 3.4 and 3.5 to work} \sge{maybe finish the entire sketch, and then we can discuss over a call. }

\noindent
\textbf{Proof for Theorem \ref{thm:load_balancing}}

We adapt the proof of Kleinberg et al.\ for their algorithm for the stochastic load-balancing problem~\cite{kleinbergbursty}. In this problem, random variables $X_1,\dots,X_n$ are assigned to $k$ bins $V_1,\dots,V_k$, with the objective being to minimize the expected maximum bin value $\E[ \max_{1\leq j\leq k} \sum_{v \in V_j} X_v]$. 

The algorithm repeats the following series of steps until the solution is obtained. After each iteration, the variables $X_v$ are halved.
\begin{enumerate}
    \item If the total value $\sum_{v=1}^{n}X_v\cdot \mathds{1}_{\{X_v> 1\}}$ of the exceptional parts of the the variables is greater than $1$, move to the next iteration. 
    \item Assign each variable $X_v$ a weight $\beta_{1/k}(N_v)= \log \E[k^{N_v}]/\log k$, where $N_v = X_v\cdot \mathds{1}_{\{X_v\leq 1\}}$. The $X_v$ are taken one-by-one and placed in the bin with smallest total weight. If it would be impossible to place an item without a bin exceeding total weight of 18, move to the next iteration.
    \item If all $X_v$ were assigned to bins, each bin having total weight at most 18, return this assignment.
\end{enumerate}
We refer to Steps 2 and 3 as the \textit{greedy bin-packing method}. Kleinberg et al.\ show that this method is an $O(1)$-approximation algorithm for the expected maximum bin value. Before addressing {\sc Stochastic Partition}, we first summarize this proof, which has two parts: upper-bounding the value of the output of their algorithm, and lower-bounding the value of the optimum solution.\newline

\noindent\textit{Upper bound.} This bound on the objective follows from the upper bounds in steps 1 and 2 of the algorithm, which are satisfied by the feasible iteration, and is summarized in the following lemma:

\begin{lemma}[Lemma 3.7 of~\cite{kleinbergbursty}]\label{lem:stochasticupperbound}
Let $\{V_1,\dots,V_k\}$ be the bins of variables produced by the stochastic load-balancing algorithm. Then $\E[ \max_{1\leq i\leq k} \sum_{v\in V_i}N_{v}] =O(1)$, where $N_{v}=X_v\cdot \mathds{1}_{\{X_v\leq 1\}}$.
\end{lemma}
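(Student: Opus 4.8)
The plan is to exploit the fact that the weight function $\beta_{1/k}(N_v)=\log\E[k^{N_v}]/\log k$ is a normalized cumulant generating function: writing $\lambda=\log k$, we have $\beta_{1/k}(N_v)=\frac{1}{\lambda}\log\E[e^{\lambda N_v}]$. The greedy bin-packing method returns an assignment only when every bin $V_i$ satisfies $\sum_{v\in V_i}\beta_{1/k}(N_v)\le 18$ (Step 3 of the algorithm). Multiplying this additive bound by $\lambda$ and exponentiating turns it into a multiplicative bound on moment generating functions, which is exactly the form needed to control the maximum bin load. So the first step is to restate the feasibility condition as $\prod_{v\in V_i}\E[e^{\lambda N_v}]\le k^{18}$ for every $i$.

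Next I would use independence of the $N_v$ (inherited from the independence of the $X_v$) to factor the moment generating function of a bin's load. Setting $Y_i=\sum_{v\in V_i}N_v$, independence gives $\E[e^{\lambda Y_i}]=\prod_{v\in V_i}\E[e^{\lambda N_v}]\le k^{18}$. To pass from individual bins to their maximum, I would bound the moment generating function of the maximum by a union bound over the $k$ bins:
\[
\E\!\left[e^{\lambda\max_i Y_i}\right]=\E\!\left[\max_i e^{\lambda Y_i}\right]\le\sum_{i=1}^k\E\!\left[e^{\lambda Y_i}\right]\le k\cdot k^{18}=k^{19}.
\]
Finally, since $x\mapsto e^{\lambda x}$ is convex, Jensen's inequality yields $e^{\lambda\,\E[\max_i Y_i]}\le \E[e^{\lambda\max_i Y_i}]\le k^{19}$, and taking logarithms base $k$ gives $\E[\max_i Y_i]\le 19$, i.e. $\E[\max_{1\le i\le k}\sum_{v\in V_i}N_v]=O(1)$.

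The conceptual crux, and the only place where the specific form of $\beta_{1/k}$ is used, is the identity $\beta_{1/k}(N_v)=\frac{1}{\log k}\log\E[e^{(\log k)N_v}]$ together with the factorization of the bin's moment generating function, which is what converts the packing guarantee into an expectation bound on the maximum; the remaining ingredients (the union bound over bins and Jensen's inequality) are routine. The main thing I would still need to verify is finiteness and the regime $k\ge 2$: since $N_v=X_v\mathds{1}_{\{X_v\le 1\}}\in[0,1]$ we have $\E[e^{\lambda N_v}]\in[1,k]$, so every moment generating function is finite and the transformation is well defined, while the case $k=1$ is trivial. I expect no genuine obstacle beyond correctly lining up the constant (the bound $18$ on bin weights propagates to the explicit constant $19$ in the final estimate).
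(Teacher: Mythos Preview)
Your proof is correct and follows the same essential strategy the paper sketches (and that Kleinberg et~al.\ use): exploit that $\beta_{1/k}$ is a normalized log-MGF so that the packing bound $\sum_{v\in V_i}\beta_{1/k}(N_v)\le 18$ becomes $\E[e^{\lambda Y_i}]\le k^{18}$, then union-bound over the $k$ bins. The only cosmetic difference is in the last step: the paper (following Kleinberg et~al.) phrases it as a Markov/Chernoff tail bound $\Pr[\max_i Y_i\ge t]\le k^{19-t}$ followed by integrating the tail, whereas you go directly through Jensen's inequality on $e^{\lambda\,\E[\max_i Y_i]}\le \E[e^{\lambda\max_i Y_i}]$; both routes yield the same $O(1)$ constant (around $19$), and your version is arguably the cleaner of the two.
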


To show this lemma, a probabilistic bound using Markov's inequality is applied to the quantity $\E[ \max_{1\leq j\leq k} \sum_{v \in V_j} N_v]$, the objective value considering only the ordinary parts of the $X_v$. Finally, since the total expected value $\sum_{v=1}^{n}S_{v}$ of the exceptional parts of the $X_v$ is at most 1, the expected maximum bin value can be bounded:
\begin{align*}
\E\left[ \max_{1\leq i\leq k} \sum_{v\in V_i}X_{v}\right] &= \E\left[ \max_{1\leq i\leq k} \sum_{v\in V_i}N_{v} + S_{v}\right]\\
&\leq  \E\left[ \max_{1\leq i\leq k} \sum_{v\in V_i}N_{v}\right] + \E\left[ \max_{1\leq i\leq k} \sum_{v\in V_i}S_{v}\right] = O(1).
\end{align*}

\noindent\textit{Lower bound.} If an iteration of the algorithm is infeasible, then they show that the optimal solution has a corresponding lower bound. Two preliminary lemmas are used to show that if the total weights of the items is large, then the expected maximum part size for all assignments (in particular for the optimum assignment) is bounded below by a constant:

\begin{lemma}[Lemmas 3.3 and 3.10 of~\cite{kleinbergbursty}]\label{lem:kleinberglowerboundexceptional}
Let $X_1,\dots,X_n$ be independent non-negative random variables and let $L>0$. Suppose each $X_v$ has support contained in $\{0\} \cup [L,\infty)$ and $\sum_{v=1}^{n} \E[X_v]\geq L$. Then for all assignments of $X_1,\dots,X_n$ to bins $V_1,\dots,V_k$, we have that $\E[\max_{1\leq i\leq k} \sum_{v\in V_i} X_v] \geq L/4$.
\end{lemma}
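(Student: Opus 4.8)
\textbf{Proof proposal for Lemma~\ref{lem:kleinberglowerboundexceptional}.}
Rather than importing the effective‑size machinery of~\cite{kleinbergbursty}, I would give a short direct argument. The plan is to first notice that the claimed bound does not actually depend on the assignment into bins: since each $X_v$ is nonnegative and lands in exactly one bin $V_i$, for \emph{every} assignment we have $\max_{1\le i\le k}\sum_{v\in V_i}X_v\ \ge\ \max_{1\le v\le n}X_v$ pointwise. So it suffices to prove the assignment‑free statement $\E[\max_{v}X_v]\ge L/4$; in fact I will establish $\E[\max_v X_v]\ge L/2$, which implies the claim. Write $p_v=\Pr[X_v\ge L]$; by the support hypothesis $X_v\in\{0\}\cup[L,\infty)$ this equals $\Pr[X_v>0]$, and put $q=\prod_{v=1}^n(1-p_v)=\Pr[X_v=0\ \text{for all }v]$, so $1-q=\Pr[\exists v:\ X_v\ge L]$.

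Next I would prove two complementary lower bounds on $\E[\max_v X_v]$. The first is the ``some variable is large'' bound: on the event $\{\exists v:\ X_v\ge L\}$ the maximum is at least $L$, hence
\[
\E\!\left[\max_v X_v\right]\ \ge\ L\cdot\Pr\!\left[\exists v:\ X_v\ge L\right]\ =\ L\,(1-q).
\]
The second is the ``first large variable'' bound, and this is where the hypothesis $\sum_v\E[X_v]\ge L$ enters. Fix the order $1,\dots,n$ and record the pointwise inequality
\[
\max_v X_v\ \ge\ \sum_{j=1}^n X_j\,\mathds{1}\{X_i=0\ \text{for all }i<j\},
\]
since the right‑hand side equals the value of the first nonzero variable in this order (and is $0$ when all variables vanish). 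Taking expectations, using independence of $X_j$ from the event $\{X_i=0,\ i<j\}$, and bounding each partial product below by the full product $q$,
\begin{align*}
\E\!\left[\max_v X_v\right]\ &\ge\ \sum_{j=1}^n \E[X_j]\prod_{i<j}(1-p_i)\\
&\ge\ \Big(\prod_{i=1}^n(1-p_i)\Big)\sum_{j=1}^n\E[X_j]\ \ge\ L\,q.
\end{align*}

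Finally I would combine the two bounds: $\E[\max_v X_v]\ \ge\ L\cdot\max\{q,\,1-q\}\ \ge\ L/2$, because $\max\{q,1-q\}\ge\tfrac12$ for every $q\in[0,1]$. Together with the first reduction this yields $\E[\max_{1\le i\le k}\sum_{v\in V_i}X_v]\ge L/2\ge L/4$ for every assignment — in particular for the optimal one, which is how the lemma feeds into the lower bound on $\OPT_S$ in Theorem~\ref{thm:load_balancing}. The step I expect to require the most care is the ``first large variable'' domination: one must verify the pointwise inequality by a brief case analysis on which variable is first nonzero and invoke independence at the right place, since it is precisely this bookkeeping that converts the additive hypothesis $\sum_v\E[X_v]\ge L$ into the multiplicative factor $q$, making the argument go through even in the regime where no single variable is at all likely to be nonzero.
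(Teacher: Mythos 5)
Your argument is correct, and it is worth noting that the paper itself supplies no proof of this lemma at all: it is imported verbatim as Lemmas 3.3 and 3.10 of Kleinberg--Rabani--Tardos, so your self-contained derivation is a genuine addition rather than a rederivation of something in the text. Both of your bounds check out: the pointwise reduction $\max_i\sum_{v\in V_i}X_v\ge\max_v X_v$ uses only nonnegativity and the fact that each variable lies in exactly one bin; the ``some variable is large'' bound $\E[\max_v X_v]\ge L(1-q)$ is immediate from the support hypothesis; and the ``first nonzero variable'' domination is valid because the telescoping indicator sum equals $X_{j^*}$ for the first nonzero index $j^*$, independence factors the expectation as $\E[X_j]\prod_{i<j}(1-p_i)$, and each partial product is at least the full product $q$, giving $\E[\max_v X_v]\ge q\sum_j\E[X_j]\ge qL$. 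Combining via $\max\{q,1-q\}\ge 1/2$ even improves the constant from $L/4$ to $L/2$, which is strictly stronger than what the cited lemma asserts and feeds into the $\Omega(1)$ lower bound on $\OPT_S$ in Theorem~\ref{thm:load_balancing} without any change. The only stylistic caveat is that since the paper attributes the statement to \cite{kleinbergbursty}, you should present your argument as an alternative elementary proof rather than as a reconstruction of theirs, whose original derivation proceeds differently (via a case split on the total probability mass of the large events).
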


\begin{lemma}[Lemmas 3.4 and 3.10 of~\cite{kleinbergbursty}]\label{lem:kleinberglowerbound}
Let $X_1,\dots,X_n$ be independent non-negative random variables bounded above by 1. For a given $k\in \mathbb{N}$, suppose that $\sum_{v=1}^{n} \log \E[k^{X_v}]/\log k \geq 17k$. Then for all assignments of $X_1,\dots,X_n$ to bins $V_1,\dots,V_k$, we have that $\E[ \max_{1\leq i\leq k} \sum_{j\in V_i} X_j] =\Omega(1)$.
\end{lemma}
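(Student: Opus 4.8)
The plan is to adapt the lower-bound argument of Kleinberg et al.~\cite{kleinbergbursty}, organizing it around two structural facts specific to our setting: the additivity of the transform $\beta_{1/k}(\cdot)=\log\E[k^{(\cdot)}]/\log k$ over independent variables, and the mutual independence of the bin loads. Since the bins $V_1,\dots,V_k$ partition the index set and the $X_v$ are independent, the loads $L_i=\sum_{v\in V_i}X_v$ are independent, and $\E[k^{L_i}]=\prod_{v\in V_i}\E[k^{X_v}]$, so that $\beta_{1/k}(L_i)=\sum_{v\in V_i}\beta_{1/k}(X_v)$. Summing over the bins and using the hypothesis gives $\sum_{i=1}^k\beta_{1/k}(L_i)=\sum_v\beta_{1/k}(X_v)\ge 17k$; in particular the average bin has effective size at least $17$, and by convexity of $x\mapsto k^x$ (Jensen) together with $\max_i k^{L_i}\ge\frac1k\sum_i k^{L_i}$ one obtains the clean moment bound $\E[k^{\max_i L_i}]\ge k^{17}$.

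The temptation is to pass from $\E[k^{M}]\ge k^{17}$ (writing $M=\max_i L_i$) directly to $\E[M]=\Omega(1)$ by Markov, but this fails: a variable can have an enormous exponential moment while having vanishing mean, if it is large only with tiny probability. The point of the bound-by-$1$ hypothesis and of the scale $1/k$ is to close exactly this loophole. First I would rule out the heavy-tail escape by using that each $L_i$ is a sum of independent $[0,1]$ variables, so its upper tail is sub-exponential; a Chernoff estimate $\Pr[L_i\ge t]\le k^{\beta_{1/k}(L_i)-t}$ (and its sharpenings at moment parameters larger than $k$) controls the contribution of atypically large loads. The core step, however, is the matching large-deviations \emph{lower} bound: one exhibits, for a bin whose effective size is a constant fraction of the threshold, a tilted (exponentially reweighted) measure under which the load exceeds a fixed constant $c$ with probability bounded below. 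This is the technical heart imported from Lemmas 3.4 and 3.10 of~\cite{kleinbergbursty}, and it is the exact analogue, for the ``normal'' parts, of the exceptional-parts estimate already recorded in Lemma~\ref{lem:kleinberglowerboundexceptional}.

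Given such a per-bin overflow lower bound, I would combine the bins through their independence rather than through any single overloaded bin; the latter is genuinely insufficient, since a lone bin of effective size $17$ may overflow only with probability $o(1)$ when its mass is spread over many rarely-active variables. Concretely, for a suitable constant level $c$ one shows that $\sum_{i}\Pr[L_i\ge c]$ is bounded below by a positive constant (indeed it grows, since the total effective size exceeds $17k$ while $c$ is held fixed), and then independence of the $L_i$ gives
\[
\Pr\!\left[\max_i L_i\ge c\right]=1-\prod_{i=1}^k\Pr[L_i<c]\ge 1-\exp\!\Big(-\sum_i\Pr[L_i\ge c]\Big)=\Omega(1),
\]
whence $\E[\max_i L_i]\ge c\cdot\Pr[\max_i L_i\ge c]=\Omega(1)$, as required. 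The slack between the hypothesized $17k$ and the capacity merely needed to fit is precisely what keeps the product bounded away from $1$.

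I expect the main obstacle to be the large-deviations lower bound on a single bin's overflow probability. A union bound over individual variables exceeding a level is too weak at intermediate value scales (where no single variable is likely to be large, yet many moderate contributions accumulate within a bin), and the averaging bound $\max_i L_i\ge\frac1k\sum_i L_i$ is too weak at small value scales (where the per-bin mean is tiny); the effective-size transform is exactly the device that interpolates uniformly across all scales, and making the change-of-measure argument rigorous — in particular pinning the constants so that the factor-$17$ hypothesis yields a genuine $\Omega(1)$ — is where the real work, and the dependence on~\cite{kleinbergbursty}, lies.
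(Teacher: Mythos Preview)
The paper does not give its own proof of this statement: Lemma~\ref{lem:kleinberglowerbound} is stated as a direct citation of Lemmas~3.4 and~3.10 of~\cite{kleinbergbursty} and is invoked as a black box in the proof of Theorem~\ref{thm:load_balancing}. There is therefore no in-paper argument to compare your proposal against.

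As a standalone sketch, your outline is broadly sound and you correctly identify the crux. The additivity of $\beta_{1/k}$ over independent summands, the independence of the bin loads $L_i$, and the moment bound $\E[k^{\max_i L_i}]\ge k^{17}$ are all correct and are indeed the easy ingredients. You are also right that this moment bound alone cannot yield $\E[\max_i L_i]=\Omega(1)$, and that the missing piece is a per-bin anticoncentration (tilting/change-of-measure) estimate showing that a bin with large effective size overflows a fixed level with non-negligible probability; this is exactly what Lemmas~3.4 and~3.10 of~\cite{kleinbergbursty} supply. Your combination step via independence, $\Pr[\max_i L_i\ge c]\ge 1-\exp(-\sum_i\Pr[L_i\ge c])$, is a clean way to aggregate, though note that the simpler pigeonhole route (some bin has effective size $\ge 17$, and the single-bin lower bound already gives $\Pr[L_i\ge c]=\Omega(1)$ for that bin) suffices and is closer to how~\cite{kleinbergbursty} is typically read. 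Either way, since you explicitly defer the hard step to~\cite{kleinbergbursty}, your proposal is really a roadmap rather than a proof, which is appropriate given that the paper itself treats the lemma as imported.
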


Applying Lemma~\ref{lem:kleinberglowerboundexceptional} to the exceptional parts $S_v = X_v\cdot \mathds{1}_{\{X_v> 1\}}$ and using $L=1$, we have that for any iteration of the stochastic load-balancing algorithm which terminates in step 1, any assignment of variables to bins will have expected maximum part weight $\Omega(1)$. On the other hand, suppose an iteration of the stochastic load-balancing algorithm terminates in step 2. Note that all weights $ \log \E[k^{N_v}]/\log k$ are at most 1. Therefore, all $k$ bins must have had total weight at least 17 when the greedy bin-packing algorithm stopped, so $\sum_{v=1}^{n} \log \E[k^{X_v}]/\log k\geq \sum_{v=1}^{n} \log \E[k^{N_v}]/\log k \geq 17k$. Then one can apply Lemma~\ref{lem:kleinberglowerbound} and again conclude that any assignment of variables to bins will have expected maximum part weight $\Omega(1)$. 

Combined with the $O(1)$ upper bound above for the result of the algorithm, one can obtain the desired constant-factor approximation. We now show how to extend this argument to the {\sc Stochastic Partition} guarantees. 

\begin{proof}

\noindent\textit{Upper bound.} Note that the proof of the upper bound for the load-balancing case, including Lemma~\ref{lem:stochasticupperbound}, depends only on the assignment satisfying the condition in step 3 of the load-balancing algorithm. In particular, it does not use the particular structure of the greedy bin-packing algorithm, and applies to any assignment such that the total weight of the exceptional parts is at most 1 and no bin $V_i$ has total weight $\sum_{v \in V_i} \beta_{1/k}(N_v)$ of more than 18. Therefore, as {\sc Stochastic Partition} satisfies these criteria by construction, this part of the proof applies in its entirety to the graph partitioning setting, and the objective value of the partition returned by {\sc Stochastic Partition} is $O(1)$.\newline

\noindent\textit{Lower bound.} If the partitioning algorithm is infeasible (it cannot find a consistent partition that satisfies the balance constraint) then we wish to show that the optimal solution has a corresponding lower bound. To be able to apply Lemma~\ref{lem:kleinberglowerbound} in the graph partitioning setting, we first need to show that if the partitioning algorithm {\sc Dynamic Partition} is infeasible in an instance, then the condition  $\sum_{i=1}^{n} \log \E[k^{Y_i}]/\log k \geq 17k$ is satisfied.

\begin{lemma}\label{lem:stochasticlowerbound}
Let $G=(V,E)$ be a graph and let $0\leq w(v)\leq 1$ for all $v\in V$ be vertex weights. Let $k\in \mathbb{N}$ be a desired number of parts, and let $v_1,\dots,v_n$ be an ordering of $V$. Suppose there is no partition of $G$ into $k$ parts, each with total weight at most $C$, that is consistent with this ordering. Then $\sum_{v\in V}w(v) \geq (C-1)k$.
\end{lemma}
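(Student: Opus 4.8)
The plan is to prove the contrapositive by a greedy ``first-fit'' sweep along the ordering. Assume $\sum_{v\in V}w(v)<(C-1)k$; I will construct a partition of $V$ into $k$ parts, consistent with the ordering, each of total weight at most $C$, contradicting the hypothesis. I may assume $C>1$, since for $C\le 1$ we have $(C-1)k\le 0\le\sum_{v\in V}w(v)$ and there is nothing to prove. Recall that a partition consistent with $v_1,\dots,v_n$ is exactly a way of cutting the ordered list into $k$ consecutive nonempty blocks, so it suffices to produce such a cut with every block of weight at most $C$.

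First I would run the greedy procedure: sweep through $v_1,v_2,\dots,v_n$, appending each vertex to the current block as long as the block weight stays at most $C$, and otherwise closing the current block and starting a new one with that vertex. Since $w(v)\le 1<C$, every vertex fits into a fresh block, so this produces consecutive nonempty blocks $B_1,\dots,B_p$, each of weight at most $C$ by construction, and (being intervals of the ordering) consistent. The key point is the stopping rule: when a block $B_j$ with $j<p$ is closed, it is because adding the first vertex $v$ of $B_{j+1}$ would exceed $C$, i.e.\ $w(B_j)+w(v)>C$; since $w(v)\le 1$, this forces $w(B_j)>C-1$. Summing over $j=1,\dots,p-1$ and using nonnegativity of the weights gives $\sum_{v\in V}w(v)\ \ge\ \sum_{j=1}^{p-1}w(B_j)\ >\ (p-1)(C-1)$. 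Combining with the assumption $\sum_{v\in V}w(v)<(C-1)k$ and dividing by $C-1>0$ yields $p-1<k$, hence $p\le k$.

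It then remains to refine a consistent partition into $p\le k$ parts into one with exactly $k$ parts: repeatedly split any block containing at least two vertices into two consecutive nonempty sub-blocks, which never increases any block weight and preserves consistency, continuing until there are exactly $k$ parts. This yields the desired contradiction, so $\sum_{v\in V}w(v)\ge(C-1)k$. The main (and only) subtlety I expect is this last step: it requires $n\ge k$, which is precisely the regime in which a partition of $V$ into $k$ nonempty parts exists at all and in which \textsc{Dynamic Partition} is invoked, so it is harmless here. Everything else is an immediate consequence of the one-line inequality $w(B_j)+w(v)>C\Rightarrow w(B_j)>C-1$ built into the greedy stopping rule.
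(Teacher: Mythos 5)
Your proof is correct, and it takes a genuinely different route from the paper's. You prove the contrapositive by a first-fit greedy sweep: cut the ordering into maximal blocks of weight at most $C$, observe that every closed block must have weight exceeding $C-1$ (since the next vertex, of weight at most $1$, would have pushed it past $C$), deduce from $\sum_v w(v)<(C-1)k$ that at most $k$ blocks are produced, and then split blocks to reach exactly $k$ parts. The paper instead argues directly from the hypothesis via an extremal construction: it takes the shortest prefix $\{v_1,\dots,v_i\}$ admitting a consistent $k$-part partition with all parts of weight at most $C$, chooses among these the partition maximizing the number of initial parts of weight at least $C-1$, and derives a contradiction by greedily feeding vertices into the first deficient part until it either reaches weight $C-1$ (contradicting maximality) or absorbs $v_i$ (contradicting minimality of $i$). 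Your version is more elementary and self-contained: the single inequality $w(B_j)+w(v)>C\Rightarrow w(B_j)>C-1$ does all the work, and you make explicit the two boundary conditions that the paper leaves implicit, namely that the claim is vacuous for $C\le 1$ and that producing $k$ nonempty consistent parts requires $n\ge k$ (which holds wherever the lemma is invoked, in the feasibility analysis of \textsc{Dynamic Partition}). The paper's exchange argument has the minor advantage of never leaving the space of exact $k$-part partitions, but it must implicitly track that parts stay nonempty while vertices are moved, a bookkeeping issue your construction sidesteps by counting blocks only at the end. Both arguments yield the same bound $(C-1)k$, instantiated as $17k$ with $C=18$ in Theorem~\ref{thm:load_balancing}.
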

\begin{proof}[Proof of Lemma~\ref{lem:stochasticlowerbound}]
Let $i$ be the smallest integer such that there is a consistent partition of $\{v_1,\dots,v_i\}$ into $k$ parts, each with total weight at most $C$. If there exists such a partition each of whose $k$ parts has total weight at least $(C-1)$, the result follows. If not, let $\mathcal{P}=\{V_1,\dots,V_k\}$ be the partition which, of all such consistent and balanced partitions, has the largest number of consecutive initial parts $V_1,\dots,V_j$ each having total weight at least $(C-1)$.

Consider part $V_{j+1}$, which has total weight less than $(C-1)$. Let $v_r$ be the last vertex of $V_{j+1}$. Add vertices $v_{r+1},v_{r+2},\dots$ to $V_{j+1}$ (from $V_\ell$ for some $\ell>j+1$ if necessary) until vertex $v_i$ is added or $\sum_{v\in V_{j+1}}w(v)\geq (C-1)$. Note that at all steps in this process, the partition remains consistent and the weight of $V_{j+1}$ never exceeds $C$ since $w(v)\leq 1$ for all $v\in V$. If vertex $v_i$ is added to $V_{j+1}$, this contradicts the minimality of $i$. If vertices are added which make $\sum_{v\in V_{j+1}}w(v)\geq (C-1)$, this contradicts the maximality of $j$ and the choice of $\mathcal{P}$. 

Therefore, there exists a consistent partition of $\{v_1,\dots,v_{i}\}$ each of whose $k$ parts has total weight at least $(C-1)$, and so $\sum_{v\in V}w(v)\geq \sum_{t=1}^{i} v_t \geq (C-1)k$.
\end{proof}

By Theorem~\ref{thm:dynamic}, {\sc Dynamic Partition} will find a partition of $G$ into $k$ parts each with total weight at most $C=18$, and which is consistent with a given ordering, if such a partition exists. Therefore, if it does not find one, $\sum_{v\in V}w(v) \geq 17k$ and we may apply Lemma~\ref{lem:stochasticlowerbound}, obtaining the necessary condition to apply Lemma~\ref{lem:kleinberglowerbound}. In particular, when $i=i^*-1$, where $i^*$ is the number of halving steps used in {\sc Stochastic Partition}, the optimal partition has expected maximum part size $\Omega(1)$. Combined with the $O(1)$ upper bound above for the solution produced by {\sc Stochastic Partition}, we obtain the desired constant-factor approximation.
\end{proof}
\clearpage
\section{Experimental Results}

\subsection{Synthetic results}

Table~\ref{tab:computations-artificial} contains data on the performance of partitions on a $100\times 100$ hexagonal grid created using {\sc Dynamic Partition} and simulated annealing. These partitions appear in Figure~\ref{fig:dynamic_example_weights}.

\begin{center}
\begin{table}[h]
{\footnotesize
\begin{tabular}{|c|c|c|c|c|}
\hline
&  Striping, $\varepsilon = .02$ & Striping + SA, $\varepsilon = .02$ & Striping, $\varepsilon = .05$ & Striping + SA, $\varepsilon = .05$ \\ \hline
\begin{tabular}[c]{@{}c@{}}Total \\ cut edges\end{tabular}  & 3,601 & \textbf{3,570} & 3,528 & \textbf{3,498}  \\ \hline
%\begin{tabular}[c]{@{}c@{}}Max deviation\\ from mean\end{tabular} 
\begin{tabular}[c]{@{}c@{}}Run time\\ (hour:min:sec)\end{tabular} & 0:26:58  & 2:25:23  & 0:39:22 & 0:58:59 \\ \hline
\end{tabular}%
\caption{Performance of partitions on the hexagonal grid. The total number of cut edges corresponds to our objective function. The second and fourth columns correspond to the results obtained using {\sc Dynamic Partition}. The third and the fifth columns correspond to the results using simulated annealing initialized with the striping algorithm.}
\label{tab:computations-artificial}
}
\end{table}
\end{center}
\subsection{City of South Fulton}
\label{append:south-fulton}
\paragraph{Fire department territories in South Fulton City}

Figure~\ref{fig:SF_fire_map} shows the current station locations and the distribution of the police workload for answering fire calls.

\begin{figure}[h]
    \centering
    \begin{subfigure}[h]{0.3\linewidth}
    \includegraphics[width=\linewidth]{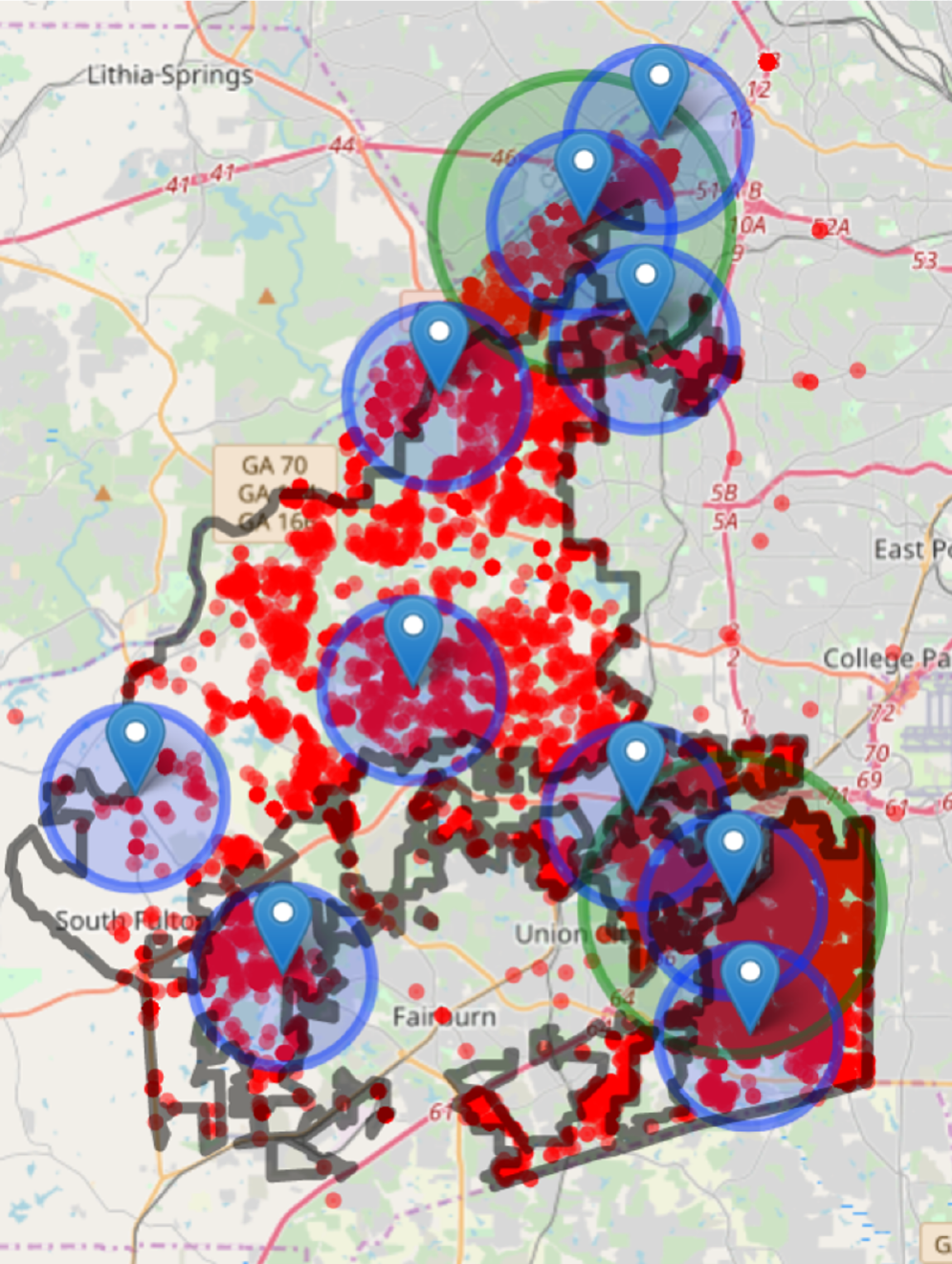}
    \end{subfigure}
    \begin{subfigure}[h]{0.3\linewidth}
    \includegraphics[width=\linewidth]{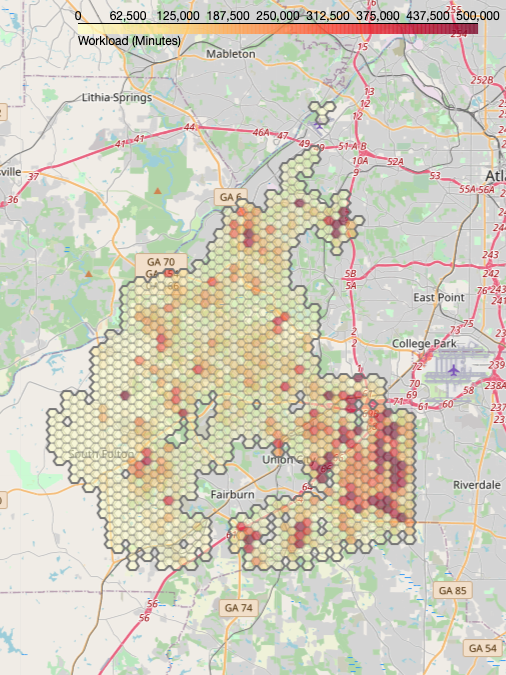}
    \end{subfigure}
    \caption{Fire stations in the City of South Fulton.
    (a) The ten current stations' locations, shown in blue, with coverage radii of 1.5 miles in blue and 2.5 miles in green. Red dots indicate fire incidents in the city;
    (b) The city is divided into 1,236 hexagonal polygons, where the color depth shows the workload (minutes per year).}
    \label{fig:SF_fire_map}
\end{figure}
Figure~\ref{fig:fire-data-appendix} presents the partitions of the fire territories in South Fulton with balance parameter $\varepsilon=.1$, which are generated by weighted $k$-means, the striping algorithm, and the simulated annealing, respectively. Simulated annealing takes the partitions generated by the striping algorithm as its warm start initialization. Here, we consider three sets of weights: when the workloads correspond to sample means and to perturbed workloads that are increased by 0.1 and 0.5. Performance of these partitions is described in Table~\ref{tab:computations-fire-data}.

\begin{figure}[!h]
\centering
\begin{subfigure}[h]{0.22\linewidth}
\includegraphics[width=\linewidth]{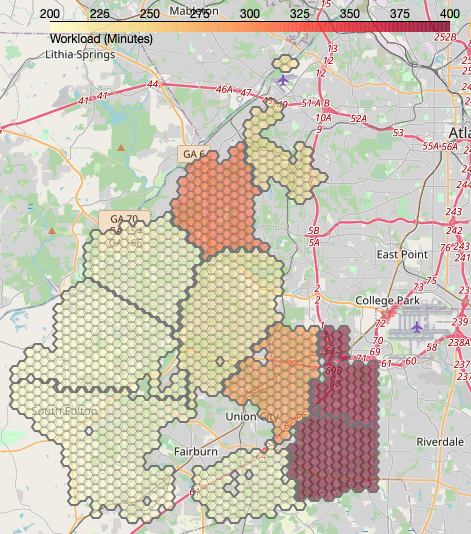}
\caption{$k$-means(.1)}
\end{subfigure}
\hspace{.1in}
\begin{subfigure}[h]{0.22\linewidth}
\includegraphics[width=\linewidth]{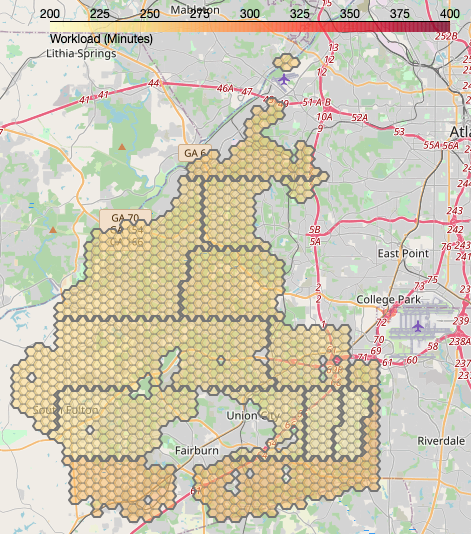}
\caption{Striping(.1)}
\end{subfigure}
\hspace{.1in}
\begin{subfigure}[h]{0.22\linewidth}
\includegraphics[width=\linewidth]{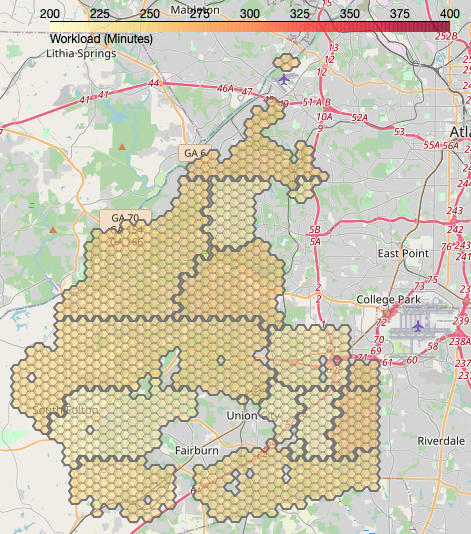}
\caption{Striping+SA(.1)}
\end{subfigure}
\vfill
\begin{subfigure}[h]{0.22\linewidth}
\includegraphics[width=\linewidth]{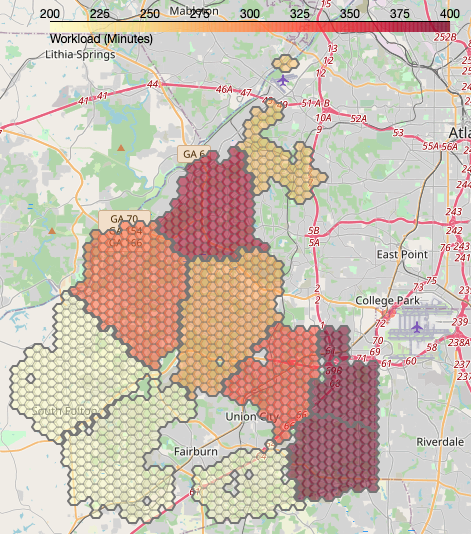}
\caption{$k$-means(.5)}
\end{subfigure}
\hspace{.1in}
\begin{subfigure}[h]{0.22\linewidth}
\includegraphics[width=\linewidth]{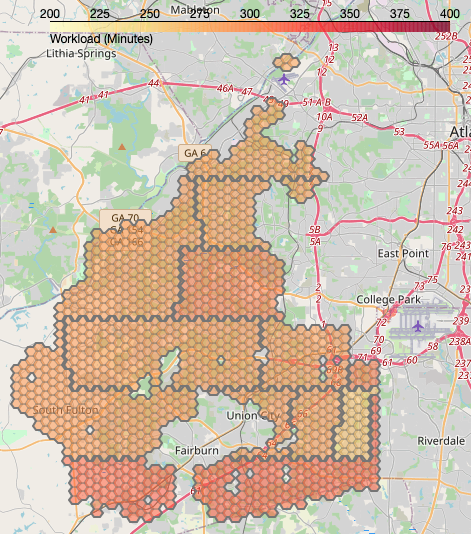}
\caption{Striping(.5)}
\end{subfigure}
\hspace{.1in}
\begin{subfigure}[h]{0.22\linewidth}
\includegraphics[width=\linewidth]{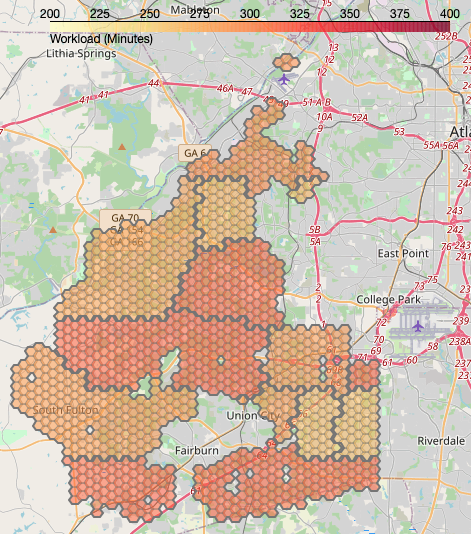}
\caption{Striping+SA(.5)}
\end{subfigure}
\caption{Fire station districting results of the City of South Fulton. The performance metrics of these partitions are summarized Table~\ref{tab:computations-fire-data}. The color depth represents the workload per part. The dark lines outline the part boundaries. 
(b, e) correspond to partitions with 10 beats generated by the striping algorithm with balance parameter $\varepsilon=.1$; 
(c, f) correspond to partitions with 10 beats generated by simulated annealing algorithm with balance parameter $\varepsilon=.1$, where (b, e) are taken as their warm start initialization.}
\label{fig:fire-data-appendix}
\end{figure}

\paragraph{Partitions for police districting}
Figure~\ref{img:South Fulton quantities} shows five different statistics estimated from 911 police call data provided by the South Fulton Police Department from 2018 to 2019. 

Figure~\ref{fig:police-deterministic} shows two types of partitions (7-beat and 15-beat) for the police beats configuration in South Fulton using weighted $k$-means, the striping algorithm, and simulated annealing, respectively. 
Simulated annealing uses the partitions generated by the striping algorithm as its warm start initialization, with balance parameters $\varepsilon=0.1$ and $\varepsilon=0.05$. Performance of these partitions is described in Table~\ref{tab:computations-police-deterministic}.
\begin{figure}[h]
\centering
% {\footnotesize
\begin{subfigure}[h]{0.19\linewidth}
\includegraphics[width=\linewidth]{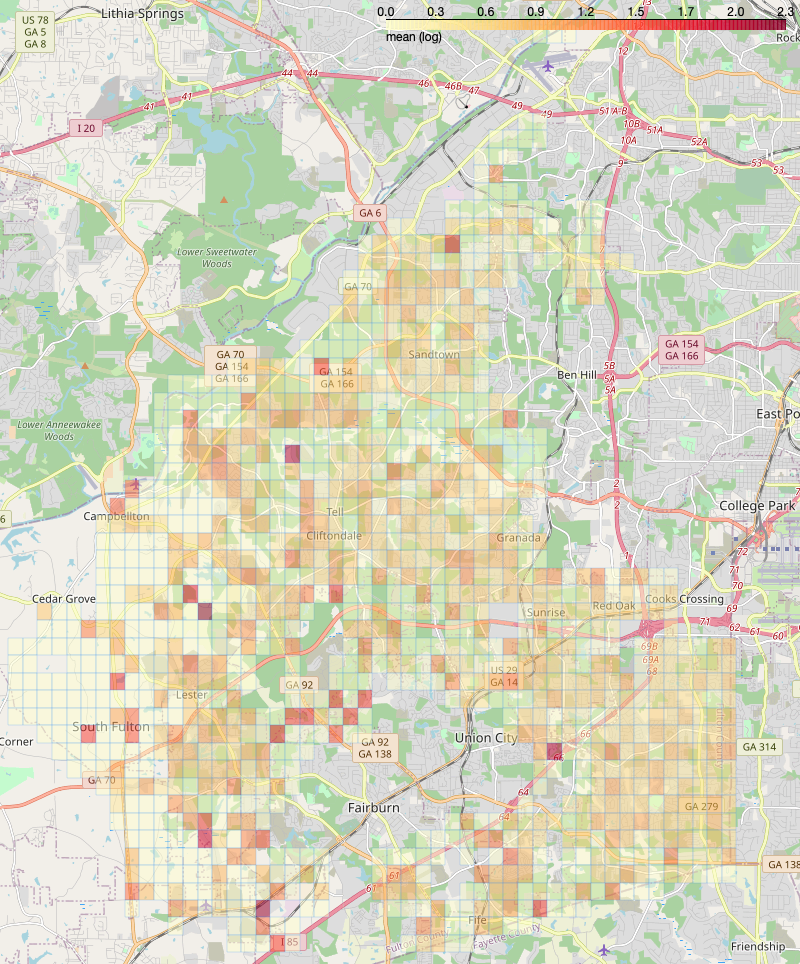}
\caption{}
\end{subfigure}
\begin{subfigure}[h]{0.19\linewidth}
\includegraphics[width=\linewidth]{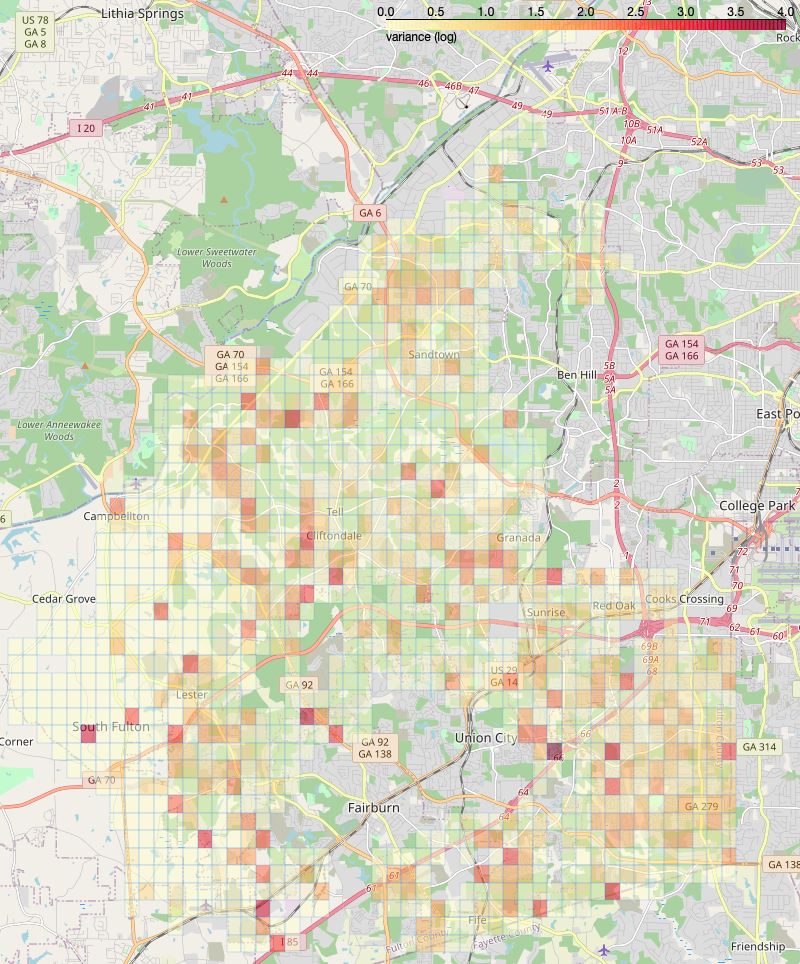}
\caption{}
\end{subfigure}
\begin{subfigure}[h]{0.19\linewidth}
\includegraphics[width=\linewidth]{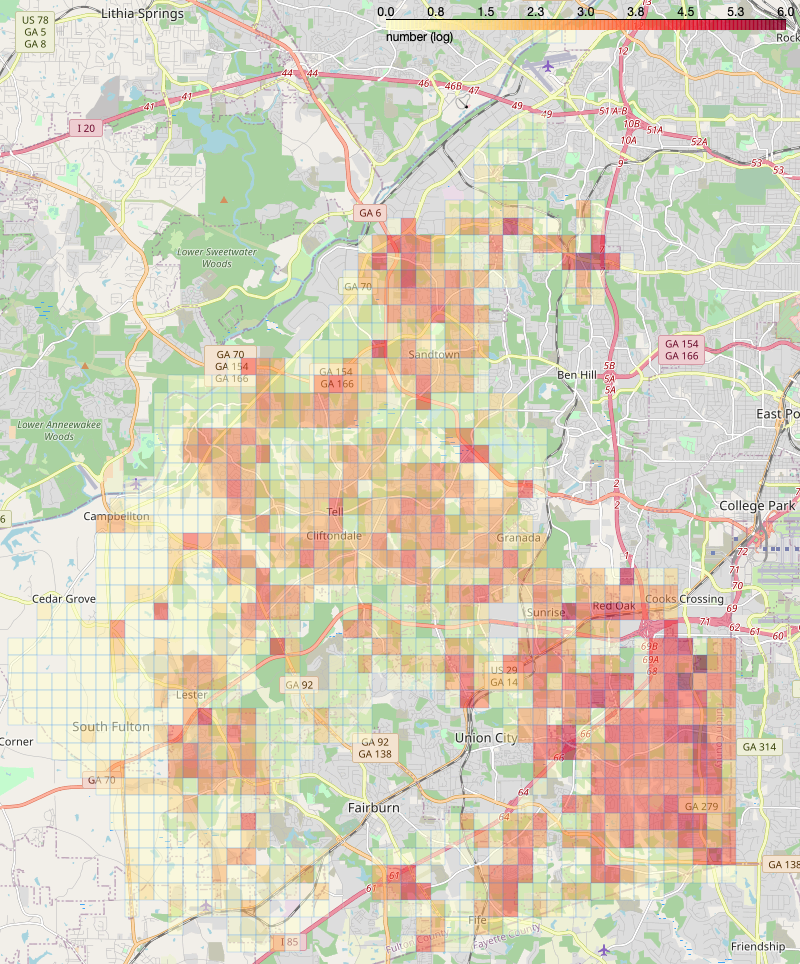}
\caption{}
\end{subfigure}
\begin{subfigure}[h]{0.19\linewidth}
\includegraphics[width=\linewidth]{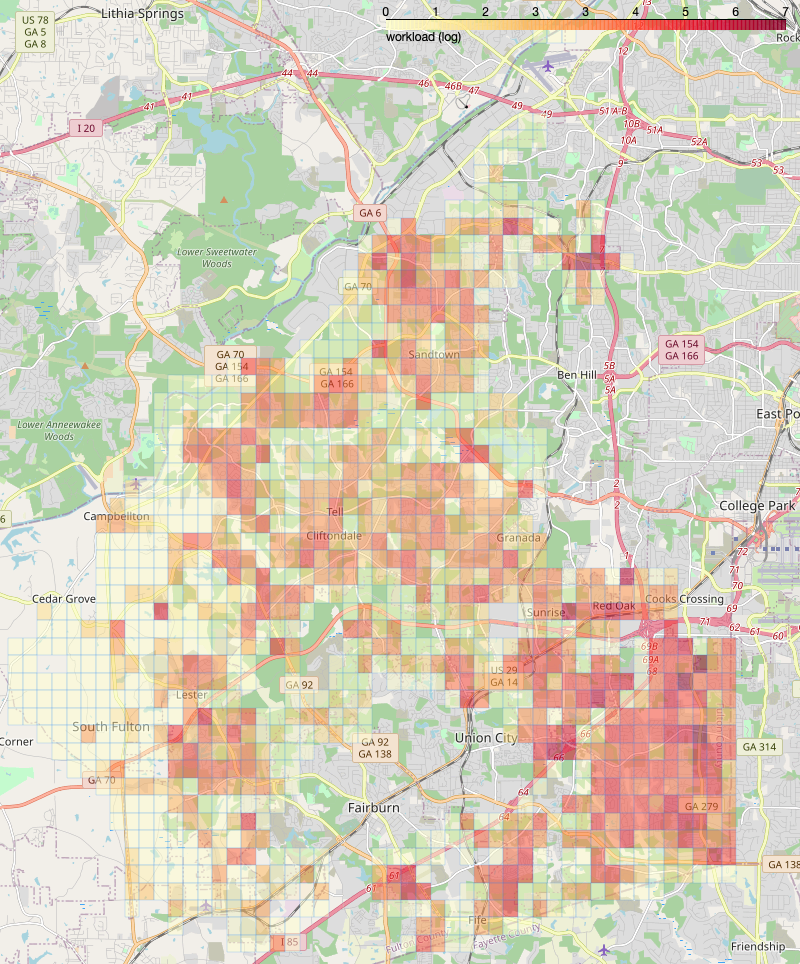}
\caption{\ }
\end{subfigure}
\begin{subfigure}[h]{0.19\linewidth}
\includegraphics[width=\linewidth]{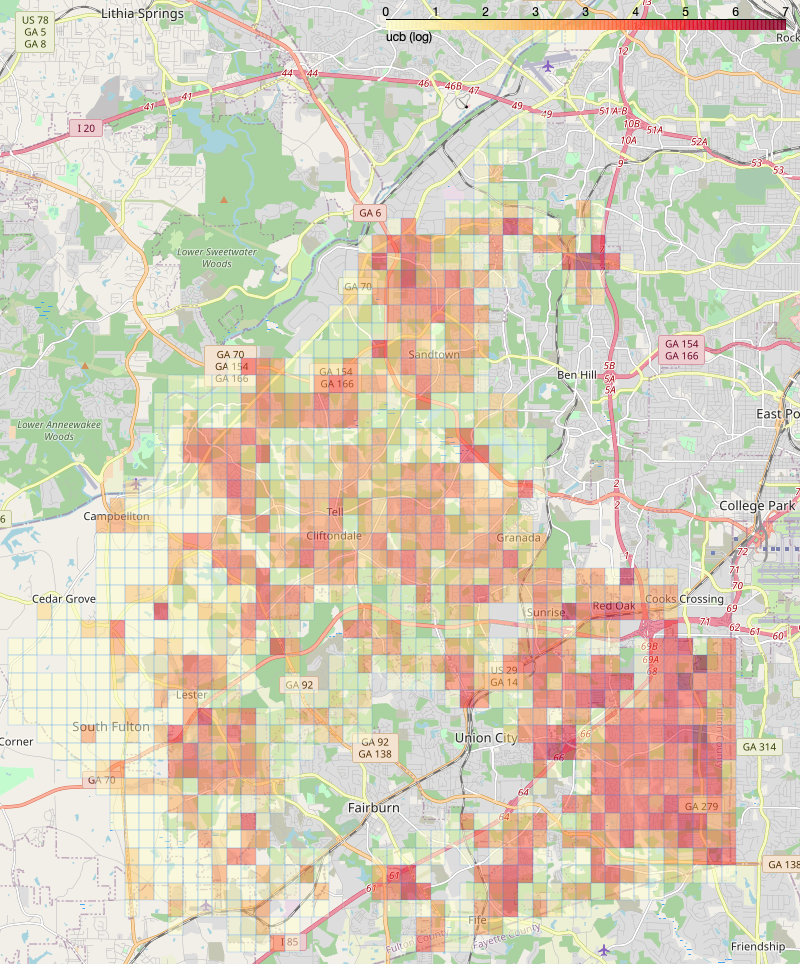}
\caption{\ }
\end{subfigure}
%\vspace{-.2cm}
\caption{Statistics for the 911 call data in each grid square in South Fulton. In each plot, red represents the largest quantity, white the least. (a): Mean length of response time to 911 calls. (b): Variance in response time to 911 calls. (c): Number of 911 calls. (d): Total workload (number of calls times mean response time). (e): Upper bound of 95\% confidence interval for response time (where response time is assumed to follow a Gaussian distribution).}
% }
\label{img:South Fulton quantities}
% \vspace{-0.15in}
\begin{subfigure}[h]{0.22\linewidth}
\includegraphics[width=\linewidth]{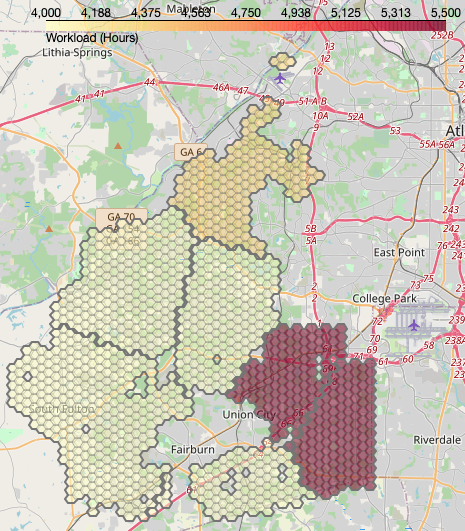}
\caption{$k$-means}
\end{subfigure}
\hspace{.1in}
\begin{subfigure}[h]{0.22\linewidth}
\includegraphics[width=\linewidth]{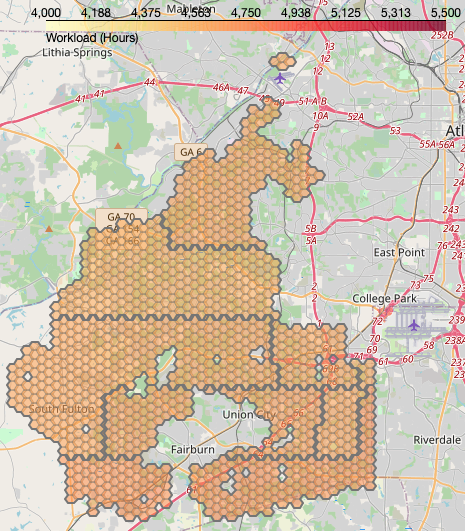}
\caption{Striping}
\end{subfigure}
\hspace{.1in}
\begin{subfigure}[h]{0.22\linewidth}
\includegraphics[width=\linewidth]{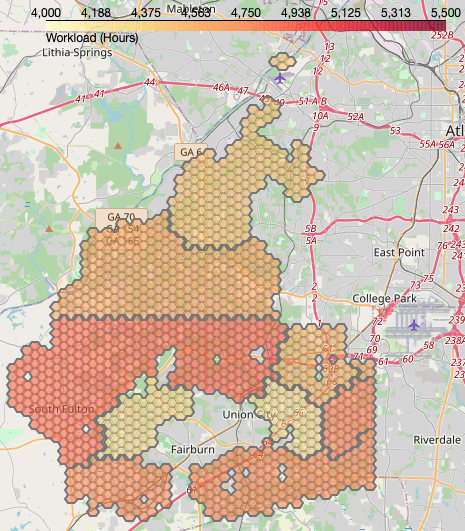}
\caption{Striping+SA(.1)}
\end{subfigure}
\hspace{.1in}
\begin{subfigure}[h]{0.22\linewidth}
\includegraphics[width=\linewidth]{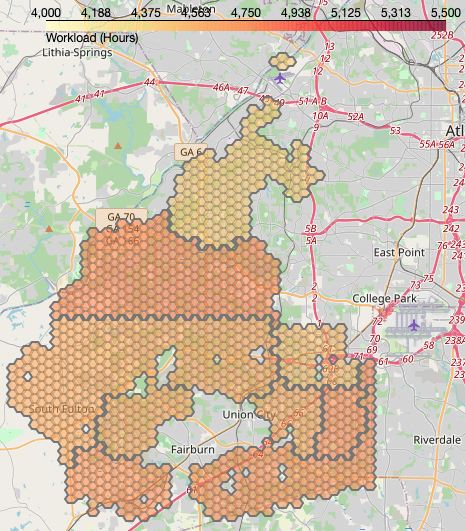}
\caption{Striping+SA(.05)}
\end{subfigure}
\vfill
\begin{subfigure}[h]{0.22\linewidth}
\includegraphics[width=\linewidth]{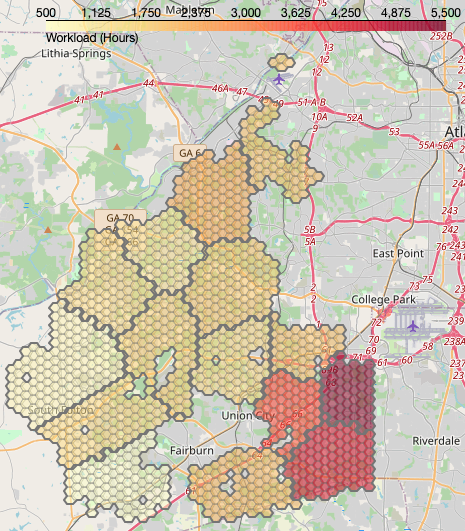}
\caption{$k$-means}
\end{subfigure}
\hspace{.1in}
\begin{subfigure}[h]{0.22\linewidth}
\includegraphics[width=\linewidth]{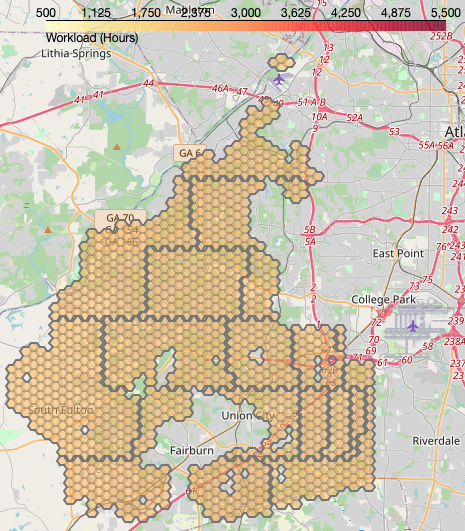}
\caption{Striping}
\end{subfigure}
\hspace{.1in}
\begin{subfigure}[h]{0.22\linewidth}
\includegraphics[width=\linewidth]{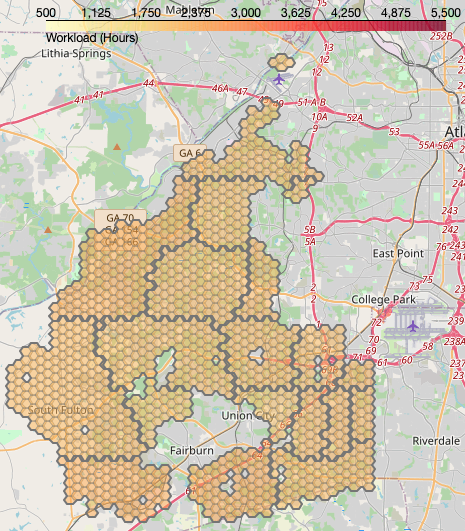}
\caption{Striping+SA(.1)}
\end{subfigure}
\hspace{.1in}
\begin{subfigure}[h]{0.22\linewidth}
\includegraphics[width=\linewidth]{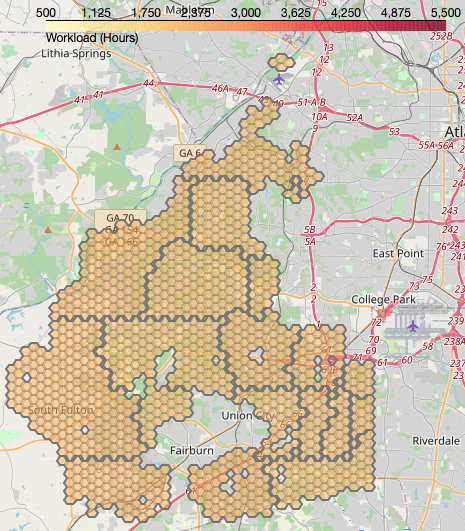}
\caption{Striping+SA(.05)}
\end{subfigure}
\caption{Districting results of South Fulton police beats. The performance of these partitions is listed in Table~\ref{tab:computations-police-deterministic}. The color depth represents the workload per part. The dark lines outline the part boundaries. 
(a-d) are the partitions with 7 beats; 
(e-h) are the partitions with 15 beats; 
(a, e) are generated by weighted $k$-means where the workload is regarded as weight;
(b, f) are generated by the striping algorithm;
(c, d, g, h) are the refined results of simulated annealing by taking (b, f) as their warm start initialization, respectively, with balance parameter $\varepsilon=.1$ in (c, g) and $\varepsilon=.05$ in (d, h).}
\label{fig:police-deterministic}
\end{figure}

\end{document}